\PassOptionsToPackage{unicode}{hyperref}
\PassOptionsToPackage{hyphens}{url}
\PassOptionsToPackage{dvipsnames,svgnames,x11names}{xcolor}
\documentclass[
  12pt,
  letterpaper]{article}

\usepackage{amsmath,amssymb}
\usepackage{setspace}
\usepackage{iftex}
\ifPDFTeX
  \usepackage[T1]{fontenc}
  \usepackage[utf8]{inputenc}
  \usepackage{textcomp} 
\else 
  \usepackage{unicode-math}
  \defaultfontfeatures{Scale=MatchLowercase}
  \defaultfontfeatures[\rmfamily]{Ligatures=TeX,Scale=1}
\fi
\usepackage{lmodern}
\ifPDFTeX\else  
\fi
\IfFileExists{upquote.sty}{\usepackage{upquote}}{}
\IfFileExists{microtype.sty}{
  \usepackage[]{microtype}
  \UseMicrotypeSet[protrusion]{basicmath} 
}{}
\makeatletter
\@ifundefined{KOMAClassName}{
  \IfFileExists{parskip.sty}{%
    \usepackage{parskip}
  }{
    \setlength{\parindent}{0pt}
    \setlength{\parskip}{6pt plus 2pt minus 1pt}}
}{
  \KOMAoptions{parskip=half}}
\makeatother
\usepackage{xcolor}
\usepackage[margin=1in]{geometry}
\makeatletter
\ifx\paragraph\undefined\else
  \let\oldparagraph\paragraph
  \renewcommand{\paragraph}{
    \@ifstar
      \xxxParagraphStar
      \xxxParagraphNoStar
  }
  \newcommand{\xxxParagraphStar}[1]{\oldparagraph*{#1}\mbox{}}
  \newcommand{\xxxParagraphNoStar}[1]{\oldparagraph{#1}\mbox{}}
\fi
\ifx\subparagraph\undefined\else
  \let\oldsubparagraph\subparagraph
  \renewcommand{\subparagraph}{
    \@ifstar
      \xxxSubParagraphStar
      \xxxSubParagraphNoStar
  }
  \newcommand{\xxxSubParagraphStar}[1]{\oldsubparagraph*{#1}\mbox{}}
  \newcommand{\xxxSubParagraphNoStar}[1]{\oldsubparagraph{#1}\mbox{}}
\fi
\makeatother

\providecommand{\tightlist}{%
  \setlength{\itemsep}{0pt}\setlength{\parskip}{0pt}}\usepackage{longtable,booktabs,array}
\usepackage{calc} 
\usepackage{etoolbox}
\makeatletter
\patchcmd\longtable{\par}{\if@noskipsec\mbox{}\fi\par}{}{}
\makeatother
\IfFileExists{footnotehyper.sty}{\usepackage{footnotehyper}}{\usepackage{footnote}}
\makesavenoteenv{longtable}
\usepackage{graphicx}
\makeatletter
\newsavebox\pandoc@box
\newcommand*\pandocbounded[1]{
  \sbox\pandoc@box{#1}%
  \Gscale@div\@tempa{\textheight}{\dimexpr\ht\pandoc@box+\dp\pandoc@box\relax}%
  \Gscale@div\@tempb{\linewidth}{\wd\pandoc@box}%
  \ifdim\@tempb\p@<\@tempa\p@\let\@tempa\@tempb\fi
  \ifdim\@tempa\p@<\p@\scalebox{\@tempa}{\usebox\pandoc@box}%
  \else\usebox{\pandoc@box}%
  \fi%
}
\def\fps@figure{htbp}
\makeatother
\NewDocumentCommand\citeproctext{}{}
\NewDocumentCommand\citeproc{mm}{%
  \begingroup\def\citeproctext{#2}\cite{#1}\endgroup}
\makeatletter
 \let\@cite@ofmt\@firstofone
 \def\@biblabel#1{}
 \def\@cite#1#2{{#1\if@tempswa , #2\fi}}
\makeatother
\newlength{\cslhangindent}
\newlength{\csllabelwidth}
\newenvironment{CSLReferences}[2] 
 {\begin{list}{}{%
  \setlength{\itemindent}{0pt}
  \setlength{\leftmargin}{0pt}
  \setlength{\parsep}{0pt}
  \ifodd #1
   \setlength{\leftmargin}{\cslhangindent}
   \setlength{\itemindent}{-1\cslhangindent}
  \fi
  \setlength{\itemsep}{#2\baselineskip}}}
 {\end{list}}
\usepackage{calc}

\usepackage{amsthm}
\ifPDFTeX
  \DeclareUnicodeCharacter{2212}{\ensuremath{-}}   
  \DeclareUnicodeCharacter{00D7}{\ensuremath{\times}}
  \DeclareUnicodeCharacter{2192}{\ensuremath{\rightarrow}}
  \DeclareUnicodeCharacter{0394}{\ensuremath{\Delta}}
  \DeclareUnicodeCharacter{2264}{\ensuremath{\leq}}
  \DeclareUnicodeCharacter{2265}{\ensuremath{\geq}}
\fi
\makeatletter
\@ifpackageloaded{caption}{}{\usepackage{caption}}
\AtBeginDocument{%
\ifdefined\contentsname
  \renewcommand*\contentsname{Table of contents}
\else
  \newcommand\contentsname{Table of contents}
\fi
\ifdefined\listfigurename
  \renewcommand*\listfigurename{List of Figures}
\else
  \newcommand\listfigurename{List of Figures}
\fi
\ifdefined\listtablename
  \renewcommand*\listtablename{List of Tables}
\else
  \newcommand\listtablename{List of Tables}
\fi
\ifdefined\figurename
  \renewcommand*\figurename{Figure}
\else
  \newcommand\figurename{Figure}
\fi
\ifdefined\tablename
  \renewcommand*\tablename{Table}
\else
  \newcommand\tablename{Table}
\fi
}
\@ifpackageloaded{float}{}{\usepackage{float}}
\floatstyle{ruled}
\@ifundefined{c@chapter}{\newfloat{codelisting}{h}{lop}}{\newfloat{codelisting}{h}{lop}[chapter]}
\floatname{codelisting}{Listing}

\usepackage{amsthm}
\theoremstyle{plain}
\newtheorem{corollary}{Corollary}
\theoremstyle{plain}
\newtheorem{proposition}{Proposition}
\theoremstyle{plain}
\newtheorem{theorem}{Theorem}
\theoremstyle{plain}
\newtheorem{lemma}{Lemma}
\theoremstyle{remark}
\AtBeginDocument{}
\newtheorem*{remark}{Remark}

\newtheorem{refremark}{Remark}

\makeatother
\makeatletter
\@ifpackageloaded{caption}{}{\usepackage{caption}}
\@ifpackageloaded{subcaption}{}{\usepackage{subcaption}}
\makeatother

\usepackage{bookmark}

\IfFileExists{xurl.sty}{\usepackage{xurl}}{} 
\hypersetup{
  pdftitle={Identifying Panel Conditioning with Refreshment Samples: Sharp Bounds and Design Assumptions},
  pdfauthor={Shoki Okubo},
  pdfkeywords={attrition, panel conditioning, partial
identification, refreshment samples, rotation panels, survey design},
  colorlinks=true,
  linkcolor={blue},
  filecolor={Maroon},
  citecolor={Blue},
  urlcolor={Blue},
  pdfcreator={LaTeX via pandoc}}

\title{Identifying Panel Conditioning with Refreshment Samples: Sharp
Bounds and Design Assumptions\thanks{This research benefited from discussions and feedback during presentations at the Institute of Social Science, University of Tokyo, the Japanese Association for Mathematical Sociology, and the panel survey conference at Keio University, and from comments by Hiroshi Ishida, Kazuo Yamaguchi, and Hiroki Takikawa. Code for all simulations and verification scripts in this paper is in the replication archive at \url{https://github.com/sokubo/paper-refreshment-designs-replication} (fixed version: tag \texttt{paper-v1.5}, commit \texttt{f4f7c2b}). The empirical illustration uses licensed JLPS microdata, which are not redistributed; the archive contains the code that produces the reported aggregates and a synthetic example (see Data and code availability). This work was supported by JSPS KAKENHI Grant Number 22K13525.}}
\author{Shoki Okubo\thanks{Department of Sociology, Toyo University, Tokyo, Japan. Email: okubo080@toyo.jp. Website: sokubo.github.io.}}
\date{October 1, 2026}

\begin{document}
\maketitle
\begin{abstract}
Refreshment samples are the standard remedy for panel attrition, and the
identification results behind them maintain that participation does not
change measurement. We characterize what a refreshment sample identifies
about panel conditioning, modelled as a deterministic monotone map at
reinterview, when attrition is unrestricted. A candidate map is
consistent with the data if and only if the retention-scaled
distribution of the stayers' implied latent outcomes is setwise
dominated by the refreshment distribution; every such map is
rationalized by an explicit attrition process. Without attrition the map
is identified on the latent-outcome support; with attrition, a
density-ratio condition governs the identified set, and tail behaviour
alone does not determine it. For an unrestricted map the survivors' mean
effect has the familiar trimming bounds; for an item with all categories
reported, the model reduces to a test of no conditioning. Within a
cohort, other waves, dropout patterns and entry-wave items leave the set
unchanged unless restrictions link selection across waves. Under
explicit selection restrictions, survival matching, symmetric matching
and entry-wave correction identify survivor effects. We derive their
biases and give rank conditions under which refreshment schedules
identify curvature in the conditioning path. A Japanese panel
illustrates the results.
\end{abstract}

\noindent\textbf{Keywords:} attrition; panel conditioning; partial identification; refreshment samples; rotation panels; survey design

\setstretch{1.5}
\section{Introduction}\label{introduction}

Research on refreshment samples and research on panel conditioning
impose different restrictions on selection and measurement. The
literature on attrition in panel surveys treats refreshment samples as
the key to identification: a fresh cross-section drawn at a later wave
reveals the population distribution that the retained sample no longer
represents, and under additively separable models of nonignorable
attrition the joint distribution can be recovered
(\citeproc{ref-hirano2001}{Hirano et al. 2001};
\citeproc{ref-deng2013}{Deng et al. 2013}). These models maintain that
survey participation does not alter measurement. In Hirano et al.
(\citeproc{ref-hirano2001}{2001}) the restriction enters through the
refreshment sample, which is treated as a fresh draw from the same
population. The paper discusses repeated response and the possibility
that effort during a first interview affects later response
(\citeproc{ref-hirano2001}{Hirano et al. 2001, 1646, 1648}), but its
attrition model does not explicitly model changes in measurement caused
by prior interviews. Deng et al. (\citeproc{ref-deng2013}{2013}) mention
panel conditioning only in passing, when explaining why attrition bias
in one panel differs from another {[}p.~250{]}. Their
additive-nonignorable model restricts the attrition probability and not
the measurement, and they note that a rotating panel supplies the
equivalent of a refreshment sample so long as each cohort is randomly
selected and administered the same questionnaire
(\citeproc{ref-deng2013}{Deng et al. 2013, 239}).

The literature on panel conditioning documents that participation can
alter reports (\citeproc{ref-warren2012}{Warren and Halpern-Manners
2012}; \citeproc{ref-bailar1975}{Bailar 1975}) and treats attrition as a
threat to its own comparisons. Warren and Halpern-Manners
(\citeproc{ref-warren2012}{2012, 508--10, 517}) discuss attrition and
mode as threats and describe comparisons of old and new cohorts
restricted to two-wave survivors. Halpern-Manners and Warren
(\citeproc{ref-halpernmanners2012}{2012, 1506--8}) compare adjacent
rotation groups of the Current Population Survey within a calendar
month, restricting both to respondents interviewed in both of their
first two months. Halpern-Manners, Warren, and Torche
(\citeproc{ref-halpernmanners2017}{2017, 108--9} and note 14, p.~119)
apply the same survival-matching comparison to the General Social
Survey, explain the threat of attrition processes that differ by cohort,
and give a regression diagnostic for it. Eckman and Bach
(\citeproc{ref-eckmanbach2021}{2021, 57, 64--65}) apply it to four-wave
respondents of the Consumer Expenditure Survey and state the assumption
that respondents at different waves differ only in their exposure to the
survey. A methodological review recommends such restrictions on the
ground that they require no assumption about the form of attrition
(\citeproc{ref-bach2018}{Bach 2018, 14--15}). Das, Toepoel, and van
Soest (\citeproc{ref-dasetal2011}{2011}) bound conditioning effects for
binary items under worst-case attrition and discuss the relation of
their model to that of Hirano et al. (\citeproc{ref-hirano2001}{2001}).
They also note that the literature on panel conditioning typically
assumes attrition to be completely at random or at random given the
first-wave answer, whereas the attrition analysis of Hirano et al.
(\citeproc{ref-hirano2001}{2001}) weakens the assumptions on attrition
but assumes that there is no panel conditioning
(\citeproc{ref-dasetal2011}{Das, Toepoel, and van Soest 2011, 43}).

Official-statistics work on rotation group bias documents systematic
differences by time in sample without being able to say how much of them
is measurement. Krueger, Mas, and Niu (\citeproc{ref-krueger2017}{2017,
260}, Table 1, and p.~262) find that adjusting for nonresponse accounts
for more than one-third of the rise in rotation group bias after 1993,
caution that their observational evidence cannot establish causation
unambiguously, and suggest (p.~264) that the bias need not be an
inherent consequence of repeated interviewing. Williams and Mallows
(\citeproc{ref-williamsmallows1970}{1970, 1338--39, 1342--43}) showed
that such contrasts can arise without any conditioning. If the
probability of response depends on the characteristic under study and
changes between interviews, estimates from units interviewed for the
first and for the second time differ systematically, and restricting the
comparison to units interviewed on both occasions does not remove the
difference {[}p.~1339{]}. The selection term \(\delta_e(t)\) of
Equation~\ref{eq-naive} below is the counterpart here.

In this paper we characterize the sharp identified set for a
deterministic monotone conditioning map under unrestricted attrition,
and we state the additional distributional or mean restrictions that the
comparison designs considered here require. The question is what a
refreshment sample identifies about the conditioning map when the panel
is subject to both nonignorable attrition and panel conditioning. We
show that a candidate map is consistent with the data if and only if the
retention-scaled distribution of the stayers' implied latent outcomes is
setwise dominated by the refreshment distribution
(Theorem~\ref{thm-sharp}). The set so defined is sharp: every point in
it is rationalized by an explicitly constructed attrition process.

The characterization has five implications for identification and survey
design.

First, the constraint uses only the stayers' reported outcome
distribution, the refreshment distribution, and the retention rate. The
stayers' joint distribution with their baseline reports is not needed to
identify the conditioning map; it matters only for reconstructing a
selection process afterwards.

Second, with full retention the constraint collapses to an equality that
identifies the map on the latent-outcome support as a quantile
alignment, with no assumption on selection. As retention falls, the
constraint relaxes into a set of observationally equivalent maps. For
location shifts that set is governed by the ratio of the
selection-weighted latent density to the refreshment density
(Corollary~\ref{cor-funnel}). Tails constrain the set but do not
determine it. When selection is bounded away from zero, a refreshment
density with tails thinner than exponential excludes every perturbation
that moves the implied latent distribution towards such a tail. The
constraint must nevertheless hold at every outcome value, and it can
bind in the centre of the distribution.

Third, two known bounds are related to the set. Das, Toepoel, and van
Soest (\citeproc{ref-dasetal2011}{2011}) derive sharp worst-case bounds
on the population conditioning effect for a binary item, of width equal
to the attrition rate, and, in their online Appendix 1, the
corresponding bounds on the survivors' effect. Their model lets
individual binary responses change at reinterview, whereas a weakly
increasing deterministic map on a two-point support with both values
reported can only fix both points, so the two models are compared and
not nested. For the survivors' mean effect under an unrestricted map,
the sharp bounds are the trimming bounds of Horowitz and Manski
(\citeproc{ref-horowitzmanski1995}{1995}) and Lee
(\citeproc{ref-lee2009}{2009}) (Corollary~\ref{cor-meanbounds}); they
are \(.70\) standard deviations wide at \(80\%\) retention for a
Gaussian refreshment distribution, and a location restriction narrows
them.

Fourth, suppose that attrition does not depend on the latent outcome.
The identified set for a location shift is then the sublevel set of the
refreshment density's shift-ratio function at the inverse retention rate
(Corollary~\ref{cor-independent}): a single point for the Gaussian and
for any density with tails thinner than exponential, and an interval of
half-width \(\lambda^{-1}\log(1/p)\) for the Laplace and logistic
densities. In that case conditioning is identified by a property of the
population's distribution and not by the absence of outcome-dependent
attrition alone, and a plug-in width computed on a trimmed support
measures the trim, not the selection (Remark~\ref{rem-trim}).

Fifth, continuous and discrete items differ. For an item all of whose
categories are reported by the stayers, the only admissible weakly
increasing deterministic map is the identity, so the theorem reduces to
a test of ``no conditioning'' that holds under every attrition process.
Conditioning on such items requires a model beyond weakly increasing
deterministic maps, such as a stochastic transition between response
categories (Corollary~\ref{cor-discrete}).

Within a cohort, the longitudinal structure of a panel does not change
this unless restrictions link the waves. Theorem~\ref{thm-multiwave}
shows that, without restrictions linking selection across waves, the
joint distribution of a cohort's reports over many waves, the timing of
its dropouts, and a battery of items asked once at entry do not narrow
the identified set of the map at a given wave. The identified set of the
map at a wave with a refreshment sample is the two-wave set computed
from that wave's respondents and its refreshment distribution, and the
maps at other waves are unidentified. Every observed pattern of
participation is reproduced by an explicit multiwave completion. Within
the maintained model, additional identifying restrictions must therefore
exclude otherwise feasible completions, and the necessity results below
are statements about the full distribution of the data, not about a few
moments.

We next ask which additional restrictions yield point identification. A
single refreshment sample set-identifies conditioning. Two refreshment
samples at different waves, generating several tenure configurations in
common periods, together with an entry-wave negative-control battery and
a transport restriction on selection with a known loading,
point-identify the conditioning contrasts at the tenures observed at
refreshment episodes (Theorem~\ref{thm-tworef}); a bounded loading gives
a set. Overlapping birth years do not imply the transport restriction,
which is maintained as an assumption. We distinguish two kinds of
refreshment: \emph{age-matched} refreshments, whose birth years overlap
those of the incumbent cohorts (in the Japanese panel of
Section~\ref{sec-implementation}, the 2011 refreshment), and
\emph{generational} refreshments, which recruit a new birth cohort with
no overlap (the 2019 refreshment). For the latter, levels are only
interval-identified under a bound on residual cohort effects, while
increments between incumbent tenures remain identified under the same
restrictions (Corollary~\ref{cor-generational}). We thus obtain the
identification theory behind single-refreshment comparisons and a design
theory for panels that plan several refreshments.

The third part of the paper concerns the designs used in applied work,
which combine a single refreshment sample with a restriction on
attrition, often stated informally, to obtain a point estimate; the
survival-matching comparison of Halpern-Manners, Warren, and Torche
(\citeproc{ref-halpernmanners2017}{2017}) is one example.
Section~\ref{sec-constructive} expresses the comparability requirements
of such designs as explicit restrictions on latent outcome means. Each
is a restriction on the \emph{selection functional}, the survivors'
advantage on the latent outcome, and each identifies the survivors' mean
shift. Survival matching takes the survivors' latent mean from the later
waves of the fresh cohort; a symmetric variant does so after equalizing
the number of selection decisions; and an entry-wave correction takes it
from the continuing cohort's own entry wave, where reports are
unconditioned by construction. The entry-wave functional is the
observable combination used by Das, Toepoel, and van Soest
(\citeproc{ref-dasetal2011}{2011, 43--44}, Assumption 3) under their
stationary-attrition-bias assumption, which restricts conditioned
outcomes and identifies a population conditioning effect. For binary
outcomes, the survivor interpretation and its stationarity assumption on
unconditioned outcomes are also explicit there
(\citeproc{ref-dasetal2011}{Das, Toepoel, and van Soest 2011, 41}, and
online Appendix 1, Assumption 3\(^{\mathrm{Alt}}\), pp.~27--28): the
same functional identifies the survivors' conditioning effect under that
alternative assumption. We formulate the restriction for general outcome
means within the present multiwave framework, where it is a restriction
on the selection functional, and analyse the symmetric comparison
alongside it; the distinction between the two targets matters when
effects are heterogeneous. We derive exact bias identities for each
design and determine, under an explicit model with person-level traits,
independent transient innovations and per-wave response decisions, which
designs are biased under state-dependent and under non-stationary
attrition, and the direction of the bias under state dependence
(Proposition~\ref{prp-fail}). The disagreements between designs then
assess the compatibility of the maintained restrictions without uniquely
identifying a failure mechanism (Section~\ref{sec-diagnostics}).

Section~\ref{sec-design} treats the schedule itself as the object of
choice. A rank condition determines, before any data are collected,
which functionals of the conditioning path a proposed schedule
identifies. The \emph{stride} of the schedule, the greatest common
divisor of the spacings between entry cohorts, fixes a set of directions
that are never identified. A panel refreshed at a common interval
greater than one therefore cannot identify ordinary curvature, whereas
an entry at a coprime spacing, followed for long enough, can restore it
(Theorem~\ref{thm-rank}, Proposition~\ref{prp-stride}). A
balance--information result restates, for published aggregates, the
classical finding that tenure-balanced rotation keeps time-in-sample
bias constant in levels and absent from changes
(\citeproc{ref-bailar1975}{Bailar 1975};
\citeproc{ref-parkkimchoi2001}{Park, Kim, and Choi 2001}). It adds that
when only the balanced aggregate is released, the published series
depends on the time-in-sample path only through its weighted average,
which is itself confounded with the period level, so that no functional
of the path is identified from it (Theorem~\ref{thm-balance}).

Throughout, ``conditioning'' is a measurement map applied to latent
outcomes at reinterview, in the sense of our companion paper on the
identification of time-in-sample effects
(\citeproc{ref-okubo2026panel}{Okubo 2026}) (the \emph{companion
identification paper} below; its numbered results and assumptions refer
to arXiv version 1). We take the two-wave measurement problem as
primitive and ask what cross-sectional anchors add to it. Several formal
treatments are close to ours. Feng, Hu, and Sun
(\citeproc{ref-fenghusun2022}{2022}) restrict response dynamics to
first-order dependence on the previous report, so that measurement error
at a first interview, which has no previous report to condition on,
depends on the current true status alone; combined with rank and
eigenvalue conditions, this restriction point-identifies
misclassification probabilities in the Current Population Survey.
Franguridi and Kosenkova (\citeproc{ref-franguridi2026}{2026}) and
Franguridi et al. (\citeproc{ref-franguridihahn2026}{2026}) study the
identification and estimation of the attrition process from refreshment
samples in the absence of conditioning. Chadi
(\citeproc{ref-chadi2021}{2021}) separates attrition bias from
participation-experience effects in a household panel by exploiting the
coexistence of entrants recruited by the data collector and ``natural''
entrants, under regression-based identifying restrictions that allow
experience effects common to both types of entrant; the sharp set for an
unrestricted conditioning map is not characterized there. Bertoli,
Jakli, and Pascoe (\citeproc{ref-bertoli2026}{2026}) formalize
conditioning bias in two-wave event-impact surveys and propose a dual
randomized design that avoids re-asking identical items. Okubo
(\citeproc{ref-okubo2024}{2024}) states potential-outcome estimands and
identification assumptions for conditioning with a refreshment sample
and applies the survival-matching comparison to the 2011 refreshment of
the panel analysed in Section~\ref{sec-implementation}, under the added
assumption that attrition operates in the same way at corresponding
waves of the two cohorts. Our contribution relative to these works is
the sharp identified set under both frictions in the deterministic
monotone model, its multiwave extension, and a formal statement of the
design restrictions, their biases, and the rank conditions of
refreshment schedules.

\section{Setup and Observables}\label{sec-setup}

A cohort enters at wave 1 with latent outcomes \((Y_1, Y_2^{*})\)
jointly distributed as \(F\) on \(\mathbb{R}^2\), with marginals \(F_1\)
and \(F_2\). Every cohort member reports \(Y_1\) at wave 1. Between
waves, attrition selects stayers: \(S \in \{0,1\}\) with selection
function \[
\pi(y_1, y_2) \;=\; \Pr(S = 1 \mid Y_1 = y_1,\, Y_2^{*} = y_2), \qquad p \;=\; \Pr(S=1) \;=\; \int \pi \, dF \;>\; 0,
\] left unrestricted, so that attrition may depend on the latent wave-2
outcome itself (nonignorable). At wave 2, stayers report a
\emph{conditioned} measurement \[
Y_2 \;=\; c(Y_2^{*}),
\] where \(c\) belongs to \(\mathcal{M}\), the class of increasing
homeomorphisms of \(\mathbb{R}\) (continuous, strictly increasing, and
onto, so that \(c^{-1}\) is defined on all of \(\mathbb{R}\)). (Location
shift \(c(y) = y + h\) is the leading parametric case.) A refreshment
sample drawn at wave 2 from the same population reports \(Y_2^{*}\)
without conditioning (``refreshment validity'').

The observables are the tuple \[
\mathcal{O} \;=\; \bigl(F_1,\; F_2,\; G,\; p\bigr),
\] where \(G\) is the joint distribution of \((Y_1, Y_2)\) among
stayers, with first marginal \(G_1\) and second (reported-outcome)
marginal \(Q\). All four components are directly estimable: \(F_1\) from
the full entry cohort, \(F_2\) from the refreshment sample, \(G\) and
\(p\) from the panel's retention records. Throughout, distributions on
\(\mathbb{R}\) are assumed atomless where quantile alignments are
invoked.

A \emph{structure} is a triple \(\theta = (F, \pi, c)\) with \(F\) a
distribution on \(\mathbb{R}^2\), \(\pi: \mathbb{R}^2 \to [0,1]\)
measurable, and \(c \in \mathcal{M}\). Structure \(\theta\)
\emph{rationalizes} \(\mathcal{O}\) if it reproduces all four
components. The identified set for the conditioning map is \[
\mathcal{C}_{ID} \;=\; \bigl\{\, c' \in \mathcal{M} \;:\; \exists\, (F', \pi') \text{ such that } (F', \pi', c') \text{ rationalizes } \mathcal{O} \,\bigr\}.
\]

We use two pieces of notation. For \(c' \in \mathcal{M}\), write
\(Q_{c'} := (c'^{-1})_{\#} Q\) for the pullback of the stayers'
reported-outcome distribution, that is, the distribution the stayers'
latent \(Y_2^{*}\) must have had if the conditioning map were \(c'\).
For measures \(\mu, \nu\) write \(\mu \le \nu\) for setwise domination:
\(\mu(A) \le \nu(A)\) for every Borel \(A\) (equivalently
\(d\mu/d\nu \le 1\) where densities exist).

\section{The Main Theorem}\label{sec-main}

The main tool is a completion lemma: any candidate stayer
sub-distribution consistent with the marginals can be embedded in a full
structure. The lemma is an instance of the existence results for
measures with given marginals of Strassen
(\citeproc{ref-strassen1965}{1965}, Theorems 6--7, pp.~432--434), which
include domination constraints; we use the explicit residual-product
construction below instead of appealing to that general theory.

\begin{lemma}[Completion
Lemma]\protect\hypertarget{lem-completion}{}\label{lem-completion}

Let \(F_1, F_2\) be probability measures on \(\mathbb{R}\) and let
\(\mu\) be a finite measure on \(\mathbb{R}^2\) with total mass
\(p \in (0,1)\) and marginals \(\mu_1, \mu_2\) satisfying
\(\mu_1 \le F_1\) and \(\mu_2 \le F_2\). Then \[
F' \;:=\; \mu \;+\; \frac{(F_1 - \mu_1) \otimes (F_2 - \mu_2)}{1-p}
\] is a probability measure on \(\mathbb{R}^2\) with marginals exactly
\(F_1\) and \(F_2\), satisfying \(\mu \le F'\). Consequently
\(\pi' := d\mu / dF'\) is a valid selection function
(\(0 \le \pi' \le 1\)) whose stayer measure is exactly \(\mu\) and whose
retention rate is \(p\).

\end{lemma}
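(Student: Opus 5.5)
The plan is to verify the claims about $F'$ in order: nonnegativity, marginals, domination, and finally the selection-function properties of $\pi'$. First, since $\mu_1 \le F_1$ and $\mu_2 \le F_2$ setwise, the residuals $R_1 := F_1 - \mu_1$ and $R_2 := F_2 - \mu_2$ are nonnegative finite measures on $\mathbb{R}$. Each has total mass $1 - p > 0$, because $\mu_1(\mathbb{R}) = \mu_2(\mathbb{R}) = \mu(\mathbb{R}) = p$. Hence $R_1 \otimes R_2$ is a nonnegative product measure of mass $(1-p)^2$. The normalization $1/(1-p)$ is well defined because $p<1$. So $F'$ is a nonnegative measure with total mass $p + (1-p)^2/(1-p) = 1$, i.e.\ a probability measure.

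Next, I will compute the marginals on rectangles $A \times \mathbb{R}$ and $\mathbb{R}\times B$. For the first marginal,
\[
F'(A \times \mathbb{R}) = \mu_1(A) + \frac{R_1(A)\, R_2(\mathbb{R})}{1-p} = \mu_1(A) + R_1(A) = F_1(A),
\]
and symmetrically $F'(\mathbb{R}\times B) = F_2(B)$. Domination $\mu \le F'$ is immediate, since $F' - \mu$ equals the nonnegative residual product term and so $F'(E) - \mu(E) \ge 0$ for every Borel $E \subseteq \mathbb{R}^2$.

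For the selection function, $\mu \le F'$ gives $\mu \ll F'$, so by Radon--Nikodym the density $\pi' := d\mu/dF'$ exists as a measurable function. I will show $\pi' \le 1$ $F'$-a.e. Suppose $E = \{\pi' > 1\}$ had positive $F'$-mass. Then $\mu(E) = \int_E \pi'\,dF' > F'(E)$, contradicting domination. Redefining $\pi'$ on an $F'$-null set then gives $0 \le \pi' \le 1$ everywhere. Under the structure $(F', \pi')$, the stayer measure is $E \mapsto \int_E \pi'\,dF' = \mu(E)$, so it equals $\mu$ exactly, and the retention rate is $\int \pi'\,dF' = \mu(\mathbb{R}^2) = p$.

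No step is a real obstacle. The only points that need care are the Radon--Nikodym step and its a.e.\ modification to obtain a genuine $[0,1]$-valued function, and noting that $p<1$ is what makes the residual normalization legitimate. The case $p=1$ is excluded by hypothesis; there $\mu_1 = F_1$, $\mu_2 = F_2$, and one would simply take $F' = \mu$.
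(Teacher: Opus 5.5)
Your proposal is correct and follows the paper's proof essentially step for step: nonnegativity and mass of the residual product, the marginal computation, domination because the added term is nonnegative, and the Radon--Nikodym derivative bounded by one. Your argument that $\{\pi' > 1\}$ is $F'$-null, followed by the a.e.\ modification, just spells out what the paper compresses into ``lies in $[0,1]$ by domination.''
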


\begin{proof}
\((F_1 - \mu_1)\) and \((F_2 - \mu_2)\) are nonnegative measures (by the
domination hypotheses) each of mass \(1-p\), so their product has mass
\((1-p)^2\) and the second term has mass \(1-p\); \(F'\) has mass \(1\).
Its first marginal is
\(\mu_1 + (F_1 - \mu_1)\cdot \frac{(F_2-\mu_2)(\mathbb{R})}{1-p} = \mu_1 + (F_1 - \mu_1) = F_1\),
and symmetrically for the second. Domination \(\mu \le F'\) holds
because the added term is nonnegative. The Radon--Nikodym derivative
\(\pi' = d\mu/dF'\) exists and lies in \([0,1]\) by domination; by
construction \(\pi' \, dF' = d\mu\), so the stayer measure is \(\mu\)
and \(\int \pi' dF' = p\).
\end{proof}

The completion has a design interpretation: the non-stayers are assigned
the independent coupling of the leftover marginals. The data do not
restrict the dependence between \(Y_1\) and \(Y_2^{*}\) among units
never observed at wave 2, and the construction relies on this.

\begin{theorem}[Sharp identified set for the conditioning
map]\protect\hypertarget{thm-sharp}{}\label{thm-sharp}

Under refreshment validity, data compatibility \(p\,G_1 \le F_1\) (the
stayers' first-wave sub-distribution fits under the full cohort's, which
holds automatically when the first wave is observed for the whole
cohort), and \(c \in \mathcal{M}\), the identified set is exactly \[
\mathcal{C}_{ID} \;=\; \bigl\{\, c' \in \mathcal{M} \;:\; p \cdot Q_{c'} \;\le\; F_2 \,\bigr\},
\] and for every \(c' \in \mathcal{C}_{ID}\) the rationalizing pair
\((F', \pi')\) can be taken to be the completion of
\(\mu_{c'} := (\mathrm{id} \times c'^{-1})_{\#}\,(p\,G)\) given by
Lemma~\ref{lem-completion}. The data, and membership in
\(\mathcal{C}_{ID}\), depend on \(c'\) only through its inverse on the
reported scale restricted to the support of \(Q\): maps whose inverses
agree there have the same pullback and are observationally equivalent,
and off the support of \(Q\) the inverse is restricted only by
membership in \(\mathcal{M}\).

\end{theorem}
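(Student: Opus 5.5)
The argument proves the two inclusions separately and then reads off the observational-equivalence statement from the form of the constraint.

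\emph{Necessity.} Take $c' \in \mathcal{C}_{ID}$, rationalized by some $(F',\pi')$. The stayers' latent joint measure is $\pi'\,dF'$. Because the structure reproduces $G$ and $p$, this measure pushed forward by $\mathrm{id}\times c'$ equals $p\,G$. Since $c'$ is a homeomorphism, the latent measure must therefore be $\mu_{c'} = (\mathrm{id}\times c'^{-1})_{\#}(p\,G)$. Its second marginal is $p\,Q_{c'}$. As $0\le \pi'\le 1$, for every Borel $A$,
\[
p\,Q_{c'}(A)=\int \pi'\,\mathbf{1}\{y_2\in A\}\,dF'\le F'_2(A)=F_2(A).
\]
The last equality uses refreshment validity, under which the refreshment sample identifies the latent marginal $F_2$ that $F'$ must reproduce. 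This gives $p\,Q_{c'}\le F_2$.

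\emph{Sufficiency.} Take $c'\in\mathcal{M}$ with $p\,Q_{c'}\le F_2$ and set $\mu=\mu_{c'}$. Then $\mu$ has total mass $p$. Its first marginal is $p\,G_1$, which is at most $F_1$ by data compatibility, and its second marginal is $p\,Q_{c'}$, which is at most $F_2$ by hypothesis. Lemma~\ref{lem-completion} needs $p\in(0,1)$. The case $p=1$ has to be handled separately: the domination conditions become equalities $G_1=F_1$ and $Q_{c'}=F_2$, so we take $F'=\mu$ and $\pi'\equiv1$. For $p<1$, the lemma yields $F'$ with marginals $F_1,F_2$ and a selection function $\pi'=d\mu/dF'$ whose stayer measure is $\mu$ and whose retention rate is $p$.

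It remains to check that $(F',\pi',c')$ reproduces $\mathcal{O}$. The marginals $F_1$ and $F_2$ and the rate $p$ are matched by construction. The stayers' reported joint law is $(\mathrm{id}\times c')_{\#}\mu/p=G$, because $c'\circ c'^{-1}=\mathrm{id}$. This shows $c'\in\mathcal{C}_{ID}$, with the stated completion as its rationalizing pair.

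\emph{Observational equivalence.} Membership in $\mathcal{C}_{ID}$ depends on $c'$ only through $Q_{c'}$, and $\mu_{c'}$ depends on $c'$ only through $c'^{-1}$ evaluated on points charged by $Q$. If two maps have inverses that agree $Q$-a.e., then their pullbacks and their completions coincide. Any modification of $c'^{-1}$ off the support of $Q$ that keeps it an increasing homeomorphism leaves $Q_{c'}$ unchanged, so the data and the constraint place no restriction there beyond membership in $\mathcal{M}$.

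The main subtlety is the necessity step: we must be sure that the latent stayer measure is pinned down as $\mu_{c'}$, which uses invertibility of $c'$. The rest is measure-theoretic bookkeeping, including the $p=1$ edge case.
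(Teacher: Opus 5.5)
Your proof is correct and follows the paper's argument essentially step for step. Necessity pins the latent stayer measure to $\mu_{c'}$ using invertibility of $c'$ and then takes second marginals of $\pi' F' \le F'$; sufficiency applies the completion lemma for $p<1$ and treats $p=1$ directly, and your closing paragraph on observational equivalence soundly spells out a clause the paper's proof leaves implicit.
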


\begin{proof}
\emph{Necessity.} Let \((F', \pi', c')\) rationalize \(\mathcal{O}\).
Since \(c'\) is a continuous, strictly increasing map onto
\(\mathbb{R}\), it has an inverse on \(\mathbb{R}\), and the stayers'
latent joint measure is determined by the data and \(c'\): \[
\mu' \;=\; (\mathrm{id} \times c'^{-1})_{\#}\,(p\,G) \;=\; \mu_{c'},
\] because the observed stayer measure of \((Y_1, Y_2)\) is \(p\,G\) and
\(Y_2^{*} = c'^{-1}(Y_2)\) almost surely among stayers. Selection
probabilities cannot exceed one: \(\mu' = \pi' \, dF' \le dF'\) setwise.
Taking second marginals, \(p\, Q_{c'} = \mu'_2 \le F'_2 = F_2\), where
the last equality is rationalization of the refreshment marginal. This
is the stated constraint. (The first-marginal condition
\(p\,G_1 \le F_1\) holds automatically in any data set generated by a
panel, since the stayers' wave-1 sub-measure is part of the cohort's
wave-1 measure.)

\emph{Sufficiency.} Let \(c' \in \mathcal{M}\) satisfy
\(p\,Q_{c'} \le F_2\). Set \(\mu = \mu_{c'}\); its mass is \(p\), its
first marginal is \(p\,G_1 \le F_1\) (automatic, as above), its second
is \(p\,Q_{c'} \le F_2\) (the hypothesis). If \(p = 1\), both marginal
inequalities are between probability measures and therefore equalities,
so \(F' := \mu_{c'}\) and \(\pi' \equiv 1\) have marginals
\((F_1, F_2)\), stayer measure \(\mu_{c'}\) and retention one. If
\(p < 1\), Lemma~\ref{lem-completion} yields \((F', \pi')\) with
marginals \((F_1, F_2)\), stayer measure \(\mu_{c'}\), and retention
\(p\). It remains to check that \((F', \pi', c')\) reproduces
\(\mathcal{O}\): \(F_1, F_2\) are the marginals of \(F'\) by
construction; the stayer joint of \((Y_1, c'(Y_2^{*}))\) is
\((\mathrm{id} \times c')_{\#}\,\mu_{c'} / p = G\); and the retention
rate is \(p\).
\end{proof}

By the theorem, the stayers' joint distribution \(G\) restricts \(c\)
only through its outcome marginal \(Q\): for this question, the
association between baseline and later reports is summarized by one
distribution. The constraint is a condition on likelihood ratios. Where
densities exist it reads \(p\, q_{c'}(y) \le f_2(y)\) for almost every
\(y\), so domination must hold at every point of the support; stochastic
ordering, which compares distribution functions, is weaker. Since \(p\)
is observed, the problem differs from missing-data settings in which the
selection rate itself must be bounded.

\subsection{Corollaries: what attrition does to
identification}\label{sec-corollaries}

\begin{corollary}[No attrition implies nonparametric identification on
the
support]\protect\hypertarget{cor-noattrition}{}\label{cor-noattrition}

Let \(F_2\) and \(Q\) be atomless. If \(p = 1\) then every
\(c' \in \mathcal{C}_{ID}\) agrees with \(c\) on the support of \(F_2\):
the conditioning map is identified on the latent-outcome support, where
it is the increasing rearrangement transporting \(F_2\) to \(Q\), \[
c(y) \;=\; Q^{-1}\!\bigl(F_2(y)\bigr) \qquad \text{for } F_2\text{-almost every } y
\] (writing CDFs for the measures), with no restriction on the
dependence between attrition and outcomes---vacuously, since there is
none. Off the support of \(F_2\) a member of \(\mathcal{C}_{ID}\) is
restricted only by membership in \(\mathcal{M}\), so
\(\mathcal{C}_{ID} = \{c\}\) if and only if \(F_2\) has full support; in
general the identified object is the restriction of \(c\) to
\(\operatorname{supp} F_2\), and maps that agree there are
observationally equivalent (Theorem~\ref{thm-sharp}).

\end{corollary}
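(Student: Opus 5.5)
The plan is to specialize Theorem~\ref{thm-sharp} to \(p=1\) and exploit the fact that domination between two probability measures forces equality. Fix \(c' \in \mathcal{C}_{ID}\). By the theorem, \(Q_{c'} \le F_2\) setwise. Both sides are probability measures, so for any Borel \(A\) we have \(F_2(A) - Q_{c'}(A) \ge 0\) and \(F_2(A^c) - Q_{c'}(A^c) \ge 0\), and these two differences sum to zero. Hence \(Q_{c'} = F_2\), that is, \((c'^{-1})_{\#}Q = F_2\), or equivalently \(c'_{\#}F_2 = Q\). The true \(c\) also lies in \(\mathcal{C}_{ID}\), so it satisfies the same identity \(c_{\#}F_2 = Q\).

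Next I translate the pushforward identity into CDFs. Since \(c'\) is strictly increasing and continuous, \(\Pr(c'(Y) \le c'(y)) = F_2(y)\) for \(Y \sim F_2\), so \(Q(c'(y)) = F_2(y)\) for every \(y\). Because \(Q\) is atomless, \(Q^{-1}(Q(x)) = x\) except when \(x\) lies in a flat stretch of \(Q\) (taking \(Q^{-1}\) as the left-continuous generalized inverse). Thus \(c'(y) = Q^{-1}(F_2(y))\) whenever \(c'(y)\) is a point of increase of \(Q\). The main obstacle is to show that this holds for \(F_2\)-almost every \(y\), and more strongly at every point of \(\operatorname{supp}F_2\).

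I would handle this with the homeomorphism property. The support of \(c'_{\#}F_2\) is \(c'(\operatorname{supp}F_2)\), so \(\operatorname{supp}Q = c'(\operatorname{supp}F_2)\). If \(y\in\operatorname{supp}F_2\), then \(c'(y)\in\operatorname{supp}Q\). For such a point, every neighbourhood carries \(Q\)-mass, so \(Q(x) > Q(c'(y))\) for all \(x > c'(y)\). Hence \(Q^{-1}(Q(c'(y))) = c'(y)\), possibly after replacing the left inverse by the right inverse at the countably many left endpoints of gaps. That exceptional set has \(F_2\)-measure zero because \(F_2\) is atomless. Applying the same argument to \(c\) gives \(c' = c = Q^{-1}\circ F_2\) \(F_2\)-a.e. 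Both maps are continuous and the support is the closure of any full-measure subset of itself, so agreement extends to all of \(\operatorname{supp}F_2\).

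For the converse, take any \(c'' \in \mathcal{M}\) that agrees with \(c\) on \(\operatorname{supp}F_2\). Then \(c''_{\#}F_2 = c_{\#}F_2 = Q\), so \(p\,Q_{c''} = F_2\) and \(c''\in\mathcal{C}_{ID}\). This shows \(\mathcal{C}_{ID}\) consists exactly of the maps in \(\mathcal{M}\) agreeing with \(c\) on the support, which gives the observational-equivalence statement. For the "if and only if": if \(\operatorname{supp}F_2=\mathbb{R}\), the set is \(\{c\}\). Otherwise there is an open gap \((a,b)\), possibly unbounded, in the complement of the support, and \(c\) can be perturbed inside that gap. For a bounded gap, I would replace \(c\) on \((a,b)\) by a different increasing homeomorphism of \([a,b]\) onto \([c(a),c(b)]\). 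For an unbounded gap, I would alter \(c\) beyond the support endpoint, for instance by changing its slope. Either perturbation stays in \(\mathcal{M}\) and is not equal to \(c\), so \(\mathcal{C}_{ID}\neq\{c\}\).
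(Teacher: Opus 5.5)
Your proposal is correct and follows the paper's proof essentially step for step. Domination between probability measures forces $Q_{c'} = F_2$, hence $Q(c'(y)) = F_2(y)$ for every $y$. Quantile inversion then gives $c'(y) = Q^{-1}(F_2(y)) = c(y)$ off a countable, hence $F_2$-null, set of support points, and continuity and density extend the agreement to all of $\operatorname{supp} F_2$. Finally, any map in $\mathcal{M}$ agreeing with $c$ on the support is admissible; your remark that $c''_{\#}F_2 = c_{\#}F_2$ gets this more directly than the paper's gap-by-gap check.

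One local step needs fixing. Membership of $c'(y)$ in $\operatorname{supp} Q$ does not give $Q(x) > Q(c'(y))$ for all $x > c'(y)$; this fails at left endpoints of gaps. Moreover, for the left-continuous inverse the relevant condition is strict increase to the \emph{left}. That condition fails at right endpoints of gaps and at a finite lower end of the support, where $Q^{-1}(0) = -\infty$, so your patch is on the wrong side. Excluding all gap endpoints and extreme points, as the paper does, still leaves a countable exceptional set, and your conclusion stands.
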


\begin{proof}
With \(p = 1\), the constraint \(Q_{c'} \le F_2\) holds between two
probability measures, and setwise domination between measures of equal
total mass forces equality: \(Q_{c'} = F_2\), that is, \(c'\) pushes
\(F_2\) forward to \(Q\), so \(F_2(y) = Q(c'(y))\) for every \(y\).
Write \(Q^{-1}(u) := \inf\{z : Q(z) \ge u\}\). At a point \(y\) of the
support of \(F_2\) every left and every right neighbourhood of which
carries mass---all points of the support except the countably many
endpoints of its gaps and its extreme points---\(F_2\) is strictly
increasing on both sides of \(y\), so, \(c'\) being continuous and
increasing, \(Q\) is strictly increasing across \(c'(y)\) and the level
set \(\{z : Q(z) = F_2(y)\}\) is the single point \(Q^{-1}(F_2(y))\);
hence \(c'(y) = Q^{-1}(F_2(y)) = c(y)\), because \(c\) satisfies the
same equation. The exceptional points are countably many, hence
\(F_2\)-null since \(F_2\) is atomless, so this holds \(F_2\)-almost
everywhere; and two continuous functions that agree \(F_2\)-almost
everywhere agree on the support, a set of full \(F_2\)-measure being
dense in it. Outside the support the equation restricts \(c'\) only
through its values at the ends of the gap: on a gap \((a,b)\) it holds
for every increasing continuous \(c'\) with \(c'(a) = c(a)\) and
\(c'(b) = c(b)\), since \(Q\) is constant on \([c(a), c(b)]\), and
beyond the ends of a bounded support \(c'\) is free up to monotonicity
and continuity: with \(F_2 = Q\) uniform on \([0,1]\), the identity and
the map equal to \(y\) on \([0,1]\), \(2y\) below \(0\) and \(2y - 1\)
above \(1\) both belong to \(\mathcal{M}\) and rationalize the same
data.
\end{proof}

Corollary~\ref{cor-noattrition} is the map-level counterpart of the
no-attrition case of Das, Toepoel, and van Soest
(\citeproc{ref-dasetal2011}{2011}, Example 1, pp.~38--39), in which
comparing second-time with first-time respondents identifies the
conditioning effect on a binary item. In the absence of attrition,
matching quantiles of the stayer and refreshment distributions
identifies the conditioning map on the latent-outcome support,
nonparametrically for each item. Under the maintained deterministic
monotone-map model and refreshment validity, only attrition weakens the
identifying restriction.

\begin{corollary}[The identification
funnel]\protect\hypertarget{cor-funnel}{}\label{cor-funnel}

Let densities exist, and let
\(\bar\pi(y) := \mathrm{E}[\pi(Y_1, Y_2^{*}) \mid Y_2^{*} = y]\) be the
outcome-conditional selection function of the true structure, so that
the stayers' latent-outcome density satisfies
\(p\,q_c(y) = \bar\pi(y) f_2(y)\), where \(q_c\) denotes the density of
\(Q_c = (c^{-1})_{\#}Q\). Then \(c' \in \mathcal{C}_{ID}\) if and only
if \[
\bar\pi\bigl(g(y)\bigr)\, f_2\bigl(g(y)\bigr)\, g'(y) \;\le\; f_2(y) \quad \text{for a.e. } y, \qquad g := c^{-1} \circ c' ,
\] the Jacobian \(g'\) entering because the constraint compares
measures, not ordinates (the density form requires \(g\) to be
absolutely continuous with \(g' > 0\) a.e., which we assume here in
addition to membership in \(\mathcal{M}\); the measure form of
Theorem~\ref{thm-sharp} does not). For a location perturbation
\(g(y) = y + \varepsilon\) the condition reads
\(\Phi(\varepsilon) \le 1\), where \[
\Phi(\varepsilon) \;:=\; \operatorname*{ess\,sup}_y \frac{\bar\pi(y + \varepsilon)\, f_2(y + \varepsilon)}{f_2(y)} .
\] In particular: (i) \(\varepsilon = 0\) always satisfies it, since
\(\Phi(0) = \operatorname{ess\,sup}\bar\pi \le 1\); (ii) if
\(\bar\pi \ge \pi_{\min} > 0\) a.e., every feasible \(\varepsilon\)
satisfies \(R(\varepsilon) \le 1/\pi_{\min}\), where \(R\) is the
shift-ratio function of Corollary~\ref{cor-independent}; hence if
\(\log f_2(y)/|y| \to -\infty\) as \(y \to -\infty\) (as
\(y \to +\infty\)), no \(\varepsilon > 0\) (no \(\varepsilon < 0\)) is
feasible; (iii) if \(\operatorname{ess\,sup}\bar\pi \le 1 - \kappa\) for
some \(\kappa \in (0, 1)\) and \(\log f_2\) is \(L\)-Lipschitz, every
\(\varepsilon\) with \(|\varepsilon| \le L^{-1}\log\{1/(1-\kappa)\}\) is
feasible. The condition \(\Phi(\varepsilon) \le 1\) is global in \(y\):
it can fail in the centre of the distribution although \(\bar\pi\)
decays in both tails (Example below).

\end{corollary}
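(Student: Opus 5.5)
The plan is to reduce everything to Theorem~\ref{thm-sharp} by rewriting the pullback \(Q_{c'}\) in terms of the true structure. Among stayers \(Y_2 = c(Y_2^{*})\). Under a candidate \(c'\), the implied latent outcome is therefore \(c'^{-1}(Y_2) = c'^{-1}(c(Y_2^{*})) = g^{-1}(Y_2^{*})\) with \(g = c^{-1}\circ c'\). So \(Q_{c'} = (g^{-1})_{\#} Q_c\).

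\emph{Main equivalence.} Under the added assumption that \(g\) is absolutely continuous with \(g'>0\) a.e., the change-of-variables formula gives the density of \(Q_{c'}\) at \(y\) as \(q_c(g(y))\,g'(y)\). Multiplying by \(p\) and using \(p\,q_c = \bar\pi f_2\), which follows from Bayes' rule applied to the true structure, the density of \(p\,Q_{c'}\) is \(\bar\pi(g(y)) f_2(g(y)) g'(y)\). Setwise domination of two absolutely continuous measures is equivalent to a.e. domination of their densities. Theorem~\ref{thm-sharp} then yields the displayed inequality as an if-and-only-if.

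\emph{Location case.} For \(g(y)=y+\varepsilon\) we have \(g'\equiv 1\), so the condition is \(\bar\pi(y+\varepsilon)f_2(y+\varepsilon)\le f_2(y)\) a.e. I would write this as \(\Phi(\varepsilon)\le 1\), with the conventions \(0/0=0\) and \(a/0=\infty\) for \(a>0\). Under these conventions the ratio form is equivalent to the product form, including on the set where \(f_2=0\).

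\emph{Items (i) and (iii).} Item (i) is immediate: at \(\varepsilon=0\) the ratio is \(\bar\pi\le1\). For (iii), \(L\)-Lipschitz \(\log f_2\) forces \(f_2>0\) everywhere and \(f_2(y+\varepsilon)/f_2(y)\le e^{L|\varepsilon|}\). Hence \(\Phi(\varepsilon)\le(1-\kappa)e^{L|\varepsilon|}\), which is at most \(1\) exactly when \(|\varepsilon|\le L^{-1}\log\{1/(1-\kappa)\}\).

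\emph{Item (ii), first part.} Since \(\bar\pi\ge\pi_{\min}\), we get \(\pi_{\min}\,f_2(y+\varepsilon)/f_2(y)\le\Phi(\varepsilon)\le1\). Taking the essential supremum gives \(R(\varepsilon)\le1/\pi_{\min}\), reading \(R(\varepsilon)=\operatorname{ess\,sup}_y f_2(y+\varepsilon)/f_2(y)\) as in Corollary~\ref{cor-independent}.

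\emph{Item (ii), tail part.} Take \(\varepsilon>0\) and suppose \(R(\varepsilon)\le M<\infty\). Then \(f_2(y)\ge M^{-1}f_2(y+\varepsilon)\) for a.e. \(y\). Iterating along the lattice \(y_0-k\varepsilon\) gives \(\log f_2(y_0-k\varepsilon)\ge\log f_2(y_0)-k\log M\). This is a linear lower bound as \(y\to-\infty\), contradicting \(\log f_2(y)/|y|\to-\infty\). The case \(\varepsilon<0\) with the right tail is symmetric.

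The main obstacle is making this iteration rigorous when the inequality holds only almost everywhere. I would iterate on a set rather than at a point. Start from a positive-measure set \(A\) on which \(f_2\ge\eta>0\). The countable union of null exceptional sets, translated by multiples of \(\varepsilon\), is still null. So for a.e. \(y\in A\) the whole chain \(y-k\varepsilon\) satisfies the inequality, and the lower bound holds on the translates \(A-k\varepsilon\). The tail hypothesis, read as essential decay, then gives the contradiction.

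The closing remark, that the condition can fail in the centre of the distribution, is deferred to the Example.
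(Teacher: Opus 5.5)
Your proposal is correct and follows the paper's proof essentially step for step. It writes the pullback as \((g^{-1})_{\#}Q_c\), substitutes the change-of-variables density \(q_c(g(y))\,g'(y)\) and \(p\,q_c = \bar\pi f_2\) into the density form of Theorem~\ref{thm-sharp}, and then uses the bounds \(\pi_{\min}R(\varepsilon) \le \Phi(\varepsilon) \le (1-\kappa)e^{L|\varepsilon|}\) together with the same telescoping argument the paper borrows from Corollary~\ref{cor-independent}. Your conventions for \(0/0\) and \(a/0\), and your choice of a starting point where \(f_2 > 0\) (so that \(\log f_2(y_0)\) is finite, which the paper's ``for a.e.\ \(y_0\)'' leaves implicit), are harmless refinements rather than a different route.
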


\begin{proof}
The candidate latent distribution is
\(Q_{c'} = (c'^{-1})_{\#}Q = (c'^{-1} \circ c)_{\#} Q_c = (g^{-1})_{\#} Q_c\).
Its density follows from the change-of-variables identity
\(\int_A q_{c'}\,dy = Q_c(g(A)) = \int_A q_c(g(y))\,g'(y)\,dy\) for
every Borel \(A\), so \(q_{c'}(y) = q_c(g(y))\,g'(y)\) a.e. Substituting
\(p\,q_c = \bar\pi f_2\), the sharp constraint
\(p\,q_{c'}(y) \le f_2(y)\) a.e. (Theorem~\ref{thm-sharp} in density
form) is exactly the display, and with \(g(y) = y + \varepsilon\) it is
\(\Phi(\varepsilon) \le 1\). (i) is immediate. (ii)
\(\Phi(\varepsilon) \ge \pi_{\min} R(\varepsilon)\), and
\(R(\varepsilon) = \infty\) for every \(\varepsilon > 0\) under the
lower-tail condition (and for every \(\varepsilon < 0\) under the
upper-tail condition) by the telescoping argument in the proof of
Corollary~\ref{cor-independent}, which uses only \(f_2\). (iii)
\(\log f_2(y + \varepsilon) - \log f_2(y) \le L|\varepsilon|\) for every
\(y\), so
\(\Phi(\varepsilon) \le (1 - \kappa)\,e^{L|\varepsilon|} \le 1\).
\end{proof}

Corollary~\ref{cor-funnel} makes precise how attrition and conditioning
are confounded: a perturbation of the map is admissible if and only if
the selection-weighted latent density, moved by the perturbation, still
fits under the refreshment density at every outcome value. Parts (ii)
and (iii) give a necessary and a sufficient condition; neither is a tail
criterion alone.

\emph{Example (central binding).} Let \(F_2 = N(0,1)\),
\(Q = N(0, \sigma^2)\) with \(\sigma \in (0,1)\), \(p = \sigma\), and
let the true map be the identity. The implied selection function
\(\bar\pi(y) = p\,q(y)/f_2(y) = \exp\{-(\sigma^{-2} - 1)y^2/2\}\) lies
in \((0, 1]\) and decays in both tails, and the stayers' latent density
has strictly lighter tails than \(f_2\) on both sides. Yet for every
\(\varepsilon \ne 0\), \[
\Phi(\varepsilon) \;=\; \sup_y \frac{p\,q(y + \varepsilon)}{f_2(y)} \;=\; \exp\Bigl\{\frac{\varepsilon^2}{2(1 - \sigma^2)}\Bigr\} \;>\; 1 ,
\] the supremum being attained at \(y = -\varepsilon/(1 - \sigma^2)\):
the identified set is \(\{0\}\) (\(\sigma = p = .5\) gives a concrete
structure). The constraint binds at the centre, where \(\bar\pi = 1\),
although there is slack in both tails. The example also shows that
non-identification requires an actual feasible \(\varepsilon \ne 0\):
here \(\inf_{\varepsilon \ne 0}\Phi(\varepsilon) = 1\), but the infimum
is not attained away from zero.

\emph{Example (logistic selection).} Let \(f_2\) be Gaussian with
variance \(\sigma^2\) and let the true selection be logistic in the
outcome, \(\bar\pi(y) = \Lambda(\alpha + \beta y)\) with \(\beta > 0\)
(better-off respondents stay). For a location perturbation
\(c'(y) = c(y) + \Delta\) the constraint reads
\(\Lambda(\alpha + \beta(y+\Delta))\, f_2(y+\Delta) \le f_2(y)\) for all
\(y\). No \(\Delta < 0\) is feasible (the constraint fails in the upper
tail, where \(\bar\pi \to 1\)), and in the lower tail the log-ratio
behaves as \(y(\beta - \Delta/\sigma^2)\), so no
\(\Delta > \beta\sigma^2\) is feasible either. Because \(f_2\) is
unimodal, the set is an interval (Theorem~\ref{thm-bounds}),
\([0, \Delta^{*}]\) with \[
\Delta^{*} \;\le\; \beta\sigma^2 ,
\] and the endpoint depends on the retention level through \(\alpha\):
the interior maximum of the constraint ratio (attained where
\(1 - \Lambda = \Delta/\beta\sigma^2\)) pushes \(\Delta^{*}\) towards
\(0\) as retention rises (\(\alpha \to \infty\), where the set must
collapse, since \(p \to 1\) recovers Corollary~\ref{cor-noattrition})
and towards the envelope \(\beta\sigma^2\) as retention falls. The
envelope is attained once retention is low enough that two conditions
hold: the lower-tail limit of the constraint ratio,
\(\exp\{\alpha + \beta\Delta - \Delta^2/(2\sigma^2)\}\) evaluated at
\(\Delta = \beta\sigma^2\), is at most one, that is,
\(\alpha \le -\beta^2\sigma^2/2\); and the interior supremum is also at
most one. With \(\sigma^2 = \beta = 1\) and \(\alpha = -1\), retention
\(.30\), the supremum at \(\Delta = 1\) is \(e^{-1/2}\), so
\(\Delta^{*} = \beta\sigma^2\) exactly. In this model the ceiling on the
ambiguity is the product of selection strength (\(\beta\)) and outcome
dispersion (\(\sigma^2\)): with mild outcome selection
(\(\beta\sigma \approx 0.1\) in standardized units) at most a tenth of a
standard deviation of conditioning can be confused with selection,
whatever the retention rate. This is a continuous-outcome counterpart of
the worst-case bounds of Das, Toepoel, and van Soest
(\citeproc{ref-dasetal2011}{2011, 41}) on the population effect, which
rest on the bounded-outcome logic of Manski
(\citeproc{ref-manski1989}{1989, 345--46}, eqs. (5)--(6)): those bounds
have width equal to the attrition rate because they let the attriters'
unobserved answers range over the whole outcome space. Their bounds on
the survivor effect (\citeproc{ref-dasetal2011}{Das, Toepoel, and van
Soest 2011}, online Appendix 1, p.~26) already require the survivors'
implied unconditioned mass to fit under the refreshment distribution, as
Corollary~\ref{cor-meanbounds} does for continuous outcomes;
Corollary~\ref{cor-funnel} adds the map restriction, under which the
survivors' implied density, moved by the candidate shift, must fit under
\(f_2\) pointwise. The example is stated for location shifts
\(c(y) = y + h\); for a general map \(c\) the perturbation
\(c' = c + \Delta\) does not reduce to \(g(y) = y + \Delta\), and the
constraint must be evaluated on \(g = c^{-1} \circ c'\) directly.

\begin{corollary}[Attrition independent of the
outcome]\protect\hypertarget{cor-independent}{}\label{cor-independent}

Let the conditioning map be a location shift, \(c(y) = y + h\), and
suppose the true selection does not depend on the latent outcome,
\(\bar\pi(y) \equiv p\), so that the stayers' latent distribution equals
the population's, \(Q_c = F_2\). Define the \emph{shift-ratio function}
of the refreshment density, \[
R(\varepsilon) \;:=\; \operatorname*{ess\,sup}_y \frac{f_2(y + \varepsilon)}{f_2(y)} \;\in\; [1, \infty], \qquad R(0) = 1 .
\] Then the identified set of Theorem~\ref{thm-sharp} for the shift is
\[
\mathcal{H}_{ID} \;=\; h + \bigl\{\, \varepsilon : R(\varepsilon) \le 1/p \,\bigr\} .
\] In particular:

\begin{enumerate}
\def\labelenumi{(\roman{enumi})}
\item
  if \(\log f_2(y)/|y| \to -\infty\) as \(y \to -\infty\) and as
  \(y \to +\infty\) (tails thinner than exponential on both sides, as
  for the Gaussian), then \(R(\varepsilon) = \infty\) for every
  \(\varepsilon \ne 0\) and \(\mathcal{H}_{ID} = \{h\}\): the shift is
  point identified at every retention rate \(p < 1\), although attrition
  is present and its mechanism is unknown to the analyst; if the
  condition holds in the lower (upper) tail only, no \(\varepsilon > 0\)
  (\(\varepsilon < 0\)) belongs to the set;
\item
  if \(\log f_2(y)/|y| \to -\lambda\) as \(y \to \pm\infty\) for some
  \(\lambda > 0\), then \(R(\varepsilon) \ge e^{\lambda|\varepsilon|}\)
  and
  \(\mathcal{H}_{ID} \subseteq [\,h - \lambda^{-1}\log(1/p),\; h + \lambda^{-1}\log(1/p)\,]\);
  if in addition \(\log f_2\) is \(\lambda\)-Lipschitz (as for the
  Laplace density \(f_2(y) \propto e^{-\lambda|y|}\) and, with
  \(\lambda = 1\), the standard logistic density), then
  \(R(\varepsilon) = e^{\lambda|\varepsilon|}\) and the inclusion is an
  equality, an interval of width \(2\lambda^{-1}\log(1/p)\) that
  vanishes as \(p \to 1\) and grows without bound as \(p \to 0\);
\item
  if \(\log f_2\) is \(L\)-Lipschitz,
  \(\mathcal{H}_{ID} \supseteq [\,h - L^{-1}\log(1/p),\; h + L^{-1}\log(1/p)\,]\),
  and if \(f_2\) is unimodal, \(\mathcal{H}_{ID}\) is an interval
  (Theorem~\ref{thm-bounds}). For the standard Cauchy density,
  \(R(\varepsilon) = \bigl(\tfrac{|\varepsilon|}{2} + \sqrt{1 + \varepsilon^2/4}\bigr)^2\)
  and
  \(\mathcal{H}_{ID} = [\,h - (1-p)/\sqrt{p},\; h + (1-p)/\sqrt{p}\,]\).
  Tail behaviour alone does not determine \(\mathcal{H}_{ID}\): a
  smooth, strictly positive density with Cauchy tails, or with logistic
  tails, can have a disconnected identified set.
\end{enumerate}

\end{corollary}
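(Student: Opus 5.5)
Throughout, take the candidate maps to be location shifts $c'(y) = y + h + \varepsilon$, and reduce the question to Theorem~\ref{thm-sharp}; Corollary~\ref{cor-funnel} then gives the same reduction directly. Since $\bar\pi \equiv p$, the stayers' latent law is $Q_c = F_2$, so $Q$ is the law of $Y_2^{*} + h$ with $Y_2^{*} \sim F_2$. The pullback $Q_{c'}$ is then the law of $Y_2^{*} - \varepsilon$, with density $f_2(y + \varepsilon)$. The sharp constraint $p\,Q_{c'} \le F_2$ therefore reads $p\,f_2(y+\varepsilon) \le f_2(y)$ for a.e.\ $y$, which is exactly $R(\varepsilon) \le 1/p$. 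Equivalently, in Corollary~\ref{cor-funnel} we have $\Phi(\varepsilon) = p\,R(\varepsilon)$. This gives the displayed formula for $\mathcal{H}_{ID}$, and $R(\varepsilon) \ge 1$ follows because both $f_2(\cdot+\varepsilon)$ and $f_2$ integrate to one.

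For (i) and (ii) I will use a telescoping argument. Suppose $\varepsilon > 0$ and $R(\varepsilon) = M < \infty$. Then $f_2(y) \ge M^{-1} f_2(y+\varepsilon)$ for a.e.\ $y$. Choose a set $A$ of positive measure on which $f_2 \ge \delta > 0$. Intersecting the countably many translates $A - k\varepsilon$ with the full-measure sets where the inequality holds, we get, for a.e.\ $y_0 \in A$,
\[
f_2(y_0 - k\varepsilon) \ge M^{-k}\,\delta \quad \text{for all } k .
\]
Hence $\log f_2(y)/|y|$ along $y = y_0 - k\varepsilon$ stays at or above roughly $-\log M/\varepsilon$. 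This contradicts a lower tail thinner than exponential, giving (i) for $\varepsilon > 0$; the case $\varepsilon < 0$ is the mirror image in the upper tail. In (ii), the same bound shows that $M < e^{\lambda\varepsilon}$ would force the lower-tail rate to be at least $-\log M/\varepsilon > -\lambda$, a contradiction; so $R(\varepsilon) \ge e^{\lambda|\varepsilon|}$, and the inclusion follows by solving $e^{\lambda|\varepsilon|} \le 1/p$. (If the rate condition is stated only along full limits, the subsequence version suffices because a limit is assumed to exist.) If in addition $\log f_2$ is $\lambda$-Lipschitz, then $f_2(y+\varepsilon)/f_2(y) \le e^{\lambda|\varepsilon|}$ pointwise, which gives equality. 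The Laplace density and the logistic density with $\lambda = 1$ are checked directly: for the logistic, $(\log f_2)' = -\tanh(y/2)$, which is bounded by one.

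For (iii), the Lipschitz bound gives $R(\varepsilon) \le e^{L|\varepsilon|}$, and hence the inner interval. The interval property under unimodality I would take from Theorem~\ref{thm-bounds}. For the Cauchy density I will maximize $(1+y^2)/\{1+(y+\varepsilon)^2\}$ over $y$ by calculus. The critical points solve a quadratic, and the maximum equals $\bigl(|\varepsilon|/2 + \sqrt{1+\varepsilon^2/4}\bigr)^2$. Setting this equal to $1/p$ gives $|\varepsilon|/2 + \sqrt{1+\varepsilon^2/4} = p^{-1/2}$. Squaring after isolating the root yields $|\varepsilon| = p^{-1/2} - p^{1/2} = (1-p)/\sqrt p$. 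Since $R$ is increasing in $|\varepsilon|$, the set is that interval.

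The main obstacle is the last claim: a disconnected set despite Cauchy or logistic tails. My plan is a periodic modulation. Take a base density $f_0$ (Cauchy or logistic) and set
\[
f_2 \propto f_0(y)\exp\{a\cos(2\pi y/\tau)\}.
\]
This density is smooth and strictly positive, and its tails are those of $f_0$ up to bounded factors, so it has the same tail type. A shift by $\tau$ leaves the modulation unchanged, so $R(\tau) = R_0(\tau)$, the base shift ratio. A shift by $\tau/2$ flips the sign of the cosine, and near the peaks of the modulation this gives $R(\tau/2) \ge e^{2a}\,\inf_y f_0(y+\tau/2)/f_0(y)$. For Cauchy tails that infimum is bounded below, and for the logistic it is at least $e^{-\tau/2}$. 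I will choose $\tau$ with $R_0(\tau) \le 1/p$, then choose $a$ large enough that $R(\tau/2) > 1/p$. Then $\tau \in \mathcal{H}_{ID} - h$, $\tau/2 \notin \mathcal{H}_{ID} - h$, and $0 \in \mathcal{H}_{ID} - h$, so the set is disconnected. The delicate point is verifying the lower bound on $R(\tau/2)$ uniformly enough to make that choice of $a$ legitimate.
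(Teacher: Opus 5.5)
Your reduction to $R(\varepsilon)\le 1/p$, the telescoping argument for (i)--(ii), the Lipschitz upper bound and the Cauchy calculus are the paper's own proof. Your set $A$ with $f_2\ge\delta$ is a slightly more careful version of its iteration from almost every $y_0$.

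The genuine difference is the counterexample in the last sentence. The paper uses a \emph{damped} modulation, $f(y)\propto(1+y^2)^{-1}\exp\{2e^{-y^2}\cos(20\pi y)\}$ with $p=1/2$:
\begin{itemize}
\item $\varepsilon=0.1$ is feasible because the Cauchy factor adds at most $0.1$ to the log-ratio and the envelope at most $0.2\sqrt{2/e}$, a total below $\log 2$;
\item $\varepsilon=0.05$ is infeasible by direct evaluation at $y=0.05$, where the log-ratio is about $3.97$.
\end{itemize}

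Your undamped modulation gives an exact, parameter-free argument. You have $R(\tau)=R_0(\tau)$ for every $a$, and $R(\tau/2)\ge e^{2a}\inf_y f_0(y+\tau/2)/f_0(y)$. That infimum is $1/R_0(\tau/2)$ for the Cauchy and at least $e^{-\tau/2}$ for the logistic, and it does not involve $a$ because the normalizing constant cancels. So the step you call delicate is not: any sufficiently large $a$ works.

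The cost is in what ``Cauchy tails'' means. Your $f_2/f_0$ oscillates between constant multiples of $e^{-a}$ and $e^{a}$ forever, with $a$ large, and $f_2$ is not eventually monotone. It therefore has the tail rates used in (i)--(ii), but it is not asymptotically proportional to $f_0$. The paper's damped density has tails genuinely equivalent to the Cauchy (or logistic) ones and is monotone outside a compact set. That is what lets it show that the centre, not the tails, produces the disconnection.

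To get that stronger reading, damp your modulation by $e^{-y^2}$. Period invariance then holds only up to an error of at most $a\tau\sqrt{2/e}$, so you must take $a$ moderate and $\tau$ small. That trade-off is where real care is needed.
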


\begin{proof}
With \(Q_c = F_2\) the reported density is \(q(y) = f_2(y - h)\). A
candidate \(c'(y) = y + h'\) implies the latent density
\(q_{c'}(y) = q(y + h') = f_2(y + \varepsilon)\) with
\(\varepsilon := h' - h\), and the constraint of
Theorem~\ref{thm-sharp}, \(p\,q_{c'} \le f_2\) a.e., reads
\(p f_2(y + \varepsilon) \le f_2(y)\) for a.e. \(y\),
i.e.~\(R(\varepsilon) \le 1/p\); sharpness is Theorem~\ref{thm-sharp}'s.
Parts (i) and (ii) rest on a telescoping bound. If
\(R(\varepsilon) = M < \infty\) for some \(\varepsilon > 0\), then
\(f_2(y) \ge f_2(y + \varepsilon)/M\) for a.e. \(y\); the set of \(y_0\)
for which some \(y_0 - n\varepsilon\), \(n \ge 1\), falls in the
exceptional null set is null, so for a.e. \(y_0\) iteration gives
\(\log f_2(y_0 - n\varepsilon) \ge \log f_2(y_0) - n\log M\) for every
\(n\), and hence
\(\liminf_{n\to\infty} \log f_2(y_0 - n\varepsilon)/(n\varepsilon) \ge -\varepsilon^{-1}\log M\).
(i) Under the lower-tail condition the left side is \(-\infty\), a
contradiction; so \(R(\varepsilon) = \infty\) for every
\(\varepsilon > 0\), and the case \(\varepsilon < 0\) uses the upper
tail in the same way. (ii) Under the tail condition the left side equals
\(-\lambda\), so \(\log M \ge \lambda\varepsilon\); symmetrically for
\(\varepsilon < 0\). If \(\log f_2\) is \(\lambda\)-Lipschitz,
\(\log f_2(y + \varepsilon) - \log f_2(y) \le \lambda|\varepsilon|\) for
every \(y\), so \(R(\varepsilon) = e^{\lambda|\varepsilon|}\). The
Laplace log-density \(-\lambda|y|\) is \(\lambda\)-Lipschitz with tail
rate \(\lambda\); the logistic log-density has derivative
\(-\tanh(y/2) \in (-1, 1)\) and tail rate one. (iii) The Lipschitz bound
gives \(R(\varepsilon) \le e^{L|\varepsilon|}\), and the interval
statement is Theorem~\ref{thm-bounds} with
\(q(\cdot) = f_2(\cdot - h)\). For \(f_2(y) \propto (1 + y^2)^{-1}\),
the ratio \(r(y) = (1 + y^2)/(1 + (y+\varepsilon)^2)\) is continuous and
tends to one at both tails; \(r'(y) = 0\) reduces to
\(y^2 + \varepsilon y - 1 = 0\), and at
\(y^{*} = -\tfrac{1}{2}(\varepsilon + \sqrt{\varepsilon^2 + 4})\) one
finds
\(r(y^{*}) = \bigl(\tfrac{\varepsilon}{2} + \sqrt{1 + \varepsilon^2/4}\bigr)^2 = e^{2\operatorname{asinh}(\varepsilon/2)}\)
for \(\varepsilon \ge 0\), so \(R(\varepsilon) \le 1/p\) iff
\(|\varepsilon| \le 2\sinh\{\tfrac{1}{2}\log(1/p)\} = (1-p)/\sqrt{p}\).
For the last sentence, let
\(f(y) \propto (1 + y^2)^{-1}\exp\{2e^{-y^2}\cos(20\pi y)\}\) and
\(p = 1/2\). The cosine factor has period \(0.1\), so for
\(\varepsilon = 0.1\) the log-ratio is at most
\(0.1 + 0.2\sqrt{2/e} < \log 2\) and \(0.1\) is feasible; but at
\(y = 0.05\) the log-ratio for \(\varepsilon = 0.05\) is
\(\log f(0.1) - \log f(0.05) \approx 3.97 > \log 2\), so \(0.05\) is
not. Replacing \((1 + y^2)^{-1}\) by the logistic density gives the same
conclusion with exponential tails, and there \(R(1) \ne e\) although
\(R(\varepsilon) \ge e^{|\varepsilon|}\) (Appendix, deterministic
checks).
\end{proof}

Under outcome-independent attrition, therefore, the identified set is a
sublevel set of the refreshment density's shift-ratio function. Point
identification follows from a property of the refreshment density that
can be estimated; it does not follow from knowledge that attrition is
outcome-independent. The corollary has three consequences. First, the
width of the set does not measure selection. For the Laplace, logistic
and Cauchy densities, and more generally whenever \(\log f_2\) is
Lipschitz, the set is nondegenerate even when attrition is completely
ignorable, whereas with tails thinner than exponential it is a point
even at low retention; a plug-in width should be interpreted accordingly
(Remark~\ref{rem-trim}). Second, under outcome-dependent attrition,
whether the set is degenerate depends on the whole function \(\Phi\) of
Corollary~\ref{cor-funnel} and not on tails alone: nondegeneracy means
that \(\Phi(\varepsilon) \le 1\) for some \(\varepsilon \ne 0\), which
is the hypothesis of Theorem~\ref{thm-tworef}(b) under the marginal
version of (A1). Third, no selection process makes the stayers' latent
density a Gaussian with the population's variance and a different mean.
The ratio \(p\exp(\delta y - \delta^2/2)\) of such a stayer density to
the population density is unbounded in the tail towards which the mean
moves, so no admissible selection process realizes it; under
outcome-independent attrition the Gaussian is therefore a case of point
identification.

\begin{refremark}[What a funnel measures]
Corollary~\ref{cor-funnel} gives the funnel's shape under a specified
selection; Corollary~\ref{cor-independent} gives it under no selection
on the outcome. Read together, they show that a perturbation is
admissible where the stayers' latent density, moved by it, still fits
under \(f_2\). The slack has two sources, which the data do not
distinguish: selection that keeps the stayers' latent density below
\(f_2\) wherever the moved density must fit, and a refreshment density
whose shape can absorb a shift. Only the second is visible from the
refreshment sample alone.

\label{rem-tp1}

\end{refremark}

\begin{refremark}[Point identification from the far tails and the effect
of a trimmed support]
Part (i) of Corollary~\ref{cor-independent} rests on the extreme tails:
for a Gaussian \(f_2\) with variance \(\sigma^2\) and a shift
\(\varepsilon > 0\), the constraint \(p f_2(y+\varepsilon) \le f_2(y)\)
fails only where
\(y \le -\{\sigma^2\log(1/p)/\varepsilon + \varepsilon/2\}\), that is,
at a distance of order \(\sigma^2\log(1/p)/\varepsilon\) into the tail.
Small shifts are excluded only far into the tails, where no finite
sample has observations. If the constraint is imposed only on
\(|y| \le Y\), as in the trimmed implementation considered here (which
trims the support at a fraction of the mode, as in Simulation 1), the
effective set for a shift under outcome-independent attrition is not
\(\{h\}\) but the interval
\(|\varepsilon| \lesssim \sigma^2\log(1/p)/Y\) (to first order in
\(\varepsilon\)), and a slack \(\kappa\) in the criterion,
\(p\,q_{c'} \le (1 + \kappa) f_2\), widens it to
\(\sigma^2\log\{(1+\kappa)/p\}/Y\). With \(Y = 2.15\sigma\) (the
10\%-of-mode trim), \(p = .8\), and \(\kappa = .10\), this gives a
half-width of \(.15\sigma\); the exact population version of the same
trimmed and relaxed criterion has width \(.31\sigma\), against mean
plug-in widths of \(.25\)--\(.26\) in Simulation 1 at \(\beta = 0\),
\(p = .8\), where the identified set is a point. To a good
approximation, therefore, the plug-in width reflects the trim and the
slack and not selection, and a positive numerical width is not
sufficient evidence that identification fails. For this reason
Section~\ref{sec-implementation} treats plug-in criteria only as
exploratory checks.

\label{rem-trim}

\end{refremark}

\subsection{Discrete outcomes}\label{sec-discrete}

The class \(\mathcal{M}\) is rich when the outcome is continuous and
nearly empty when it is discrete, so the theorem is used differently for
the two kinds of survey item. Let the outcome take values in a finite
ordered set \(\mathcal{Y} = \{y_1 < \dots < y_m\}\) and let conditioning
be a weakly increasing map \(c: \mathcal{Y} \to \mathcal{Y}\), so that
maps merging adjacent categories are allowed. Because a map may merge,
the stayers' latent sub-distribution is no longer determined by their
reports, and a structure must specify it; the identified set is defined
as before, as the set of maps for which some structure rationalizes the
observables.

\begin{corollary}[Discrete outcomes: mass
domination]\protect\hypertarget{cor-discrete}{}\label{cor-discrete}

Under refreshment validity, a weakly increasing \(c'\) belongs to the
identified set if and only if
\begin{equation}\phantomsection\label{eq-mass}{
p\,Q(\{a\}) \;\le\; F_2\bigl(c'^{-1}(\{a\})\bigr) \qquad \text{for every } a \in \mathcal{Y} :
}\end{equation} the retention-scaled mass the stayers report in each
category must fit under the population mass of the latent categories the
map sends there. Consequently: (i) the image of \(c'\) must contain the
set \(\mathcal{Y}_Q\) of categories the stayers report,
\(\mathcal{Y}_Q \subseteq c'(\mathcal{Y})\); categories in the image
that the stayers do not report impose no constraint; (ii) when the
stayers report every category, \(c'\) is onto and hence the identity,
and the identified set is \(\{\mathrm{id}\}\) if
\(p\,Q(\{a\}) \le F_2(\{a\})\) for every \(a\) and empty otherwise;
(iii) for a binary item the same holds whenever both values are
reported. When a category goes unreported, the identity can remain
feasible and non-identity maps can be feasible, even uniquely: with
\(F_2\) uniform on \(\{0, 1, 2\}\), \(p = 1\), and
\(Q = (2/3, 0, 1/3)\), the only feasible map sends \((0, 1, 2)\) to
\((0, 0, 2)\). In case (ii) the deterministic model has no identifying
content beyond a specification test: the inequality
\(\max_a p\,Q(\{a\})/F_2(\{a\}) \le 1\) is a sharp implication of ``no
conditioning'' that holds under every attrition process. Its violation
is evidence against the maintained model (refreshment validity,
comparable coding of the item in the panel and the refreshment sample,
and deterministic monotone measurement) and, if the first two are
credible, evidence of conditioning that no weakly increasing
deterministic map represents, such as a stochastic change of individual
answers (the model of Das, Toepoel, and van Soest
(\citeproc{ref-dasetal2011}{2011}) for binary items).

\end{corollary}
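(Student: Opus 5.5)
The plan mirrors the proof of Theorem~\ref{thm-sharp}, adapted to a finite ordered support on which \(c'\) may merge categories. For \emph{necessity}, take a rationalizing structure \((F', \pi', c')\) and let \(\mu_2\) be the stayers' latent sub-distribution of \(Y_2^{*}\) on \(\mathcal{Y}\). It has mass \(p\), and since selection probabilities are at most one, \(\mu_2 \le F_2\) on every singleton. The reports are \(c'(Y_2^{*})\), so \(p\,Q(\{a\}) = \mu_2(c'^{-1}(\{a\}))\). This is at most \(F_2(c'^{-1}(\{a\}))\), which gives \eqref{eq-mass}. For \emph{sufficiency}, the one new step relative to Theorem~\ref{thm-sharp} is that the latent sub-distribution is not determined by the reports, so it must be constructed. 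Given \eqref{eq-mass}, split each fibre proportionally: for \(y \in c'^{-1}(\{a\})\) with \(F_2(c'^{-1}(\{a\}))>0\), set
\[
\mu_2(\{y\}) := p\,Q(\{a\})\,\frac{F_2(\{y\})}{F_2(c'^{-1}(\{a\}))}.
\]
By \eqref{eq-mass} this is at most \(F_2(\{y\})\). Its total mass is \(p\), because every \(a\) with \(Q(\{a\})>0\) has a fibre of positive \(F_2\)-mass, again by \eqref{eq-mass}.

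Next I lift this to a joint stayer measure \(\mu\) on \(\mathcal{Y}^2\) with first marginal \(p\,G_1\) and pushforward \((\mathrm{id}\times c')_{\#}\mu = p\,G\). Within each report value \(a\), split the mass \(p\,G(\cdot,\{a\})\) across the fibre \(c'^{-1}(\{a\})\) in the same proportions \(F_2(\{y\})/F_2(c'^{-1}(\{a\}))\). The first marginal is then \(p\,G_1 \le F_1\) and the second is \(\mu_2 \le F_2\). The case \(p<1\) now follows from Lemma~\ref{lem-completion}, whose proof is purely measure-theoretic and so holds verbatim on a finite set. The case \(p=1\) is handled as in Theorem~\ref{thm-sharp}: both inequalities become equalities and \(\pi'\equiv 1\). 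Finally I check that the structure reproduces \(F_1\), \(F_2\), \(G\) and \(p\).

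The consequences are then combinatorial. For (i), if some reported \(a\) has \(Q(\{a\})>0\) but \(a\notin c'(\mathcal{Y})\), the fibre is empty and the right side of \eqref{eq-mass} is zero, a contradiction. Conversely, categories with \(Q(\{a\})=0\) give \(0\le\cdot\), which is no constraint. For (ii), if \(\mathcal{Y}_Q=\mathcal{Y}\) then \(c'\) is onto a finite set of size \(m\), hence bijective, and a weakly increasing bijection of a finite chain is the identity. Condition \eqref{eq-mass} then reads \(p\,Q(\{a\})\le F_2(\{a\})\), which gives the dichotomy \(\{\mathrm{id}\}\) or \(\emptyset\). Part (iii) is the case \(m=2\).

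For the example, \(p=1\) forces equality in \eqref{eq-mass} summed over all categories, so each fibre mass equals \(Q\)'s. With \(F_2\) uniform on \(\{0,1,2\}\), the fibre of \(0\) must have mass \(2/3\), the fibre of \(1\) must be empty, and the fibre of \(2\) must have mass \(1/3\). Monotonicity then forces \(c'^{-1}(\{0\})=\{0,1\}\) and \(c'^{-1}(\{2\})=\{2\}\), so \(c'=(0,0,2)\), and it is the only feasible map. The specification-test remark is a restatement of (ii) together with the fact that the inequality holds under every attrition process.

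The main obstacle is the sufficiency construction: making sure the split of \(p\,G\) across merged fibres keeps the first marginal exactly \(p\,G_1\) while the second marginal stays under \(F_2\). The proportional split is my first choice because it factors cleanly and does both.
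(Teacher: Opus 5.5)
Your proposal is correct and follows the paper's proof essentially step for step. Necessity comes from the stayers' latent sub-measure being dominated by $F_2$. Sufficiency uses the same proportional split of each fibre $c'^{-1}(\{a\})$: your within-report split of $p\,G$ is exactly the paper's coupling $\mu(dy_1\times\{b\}) = G(dy_1\mid Y_2=a)\,\nu(\{b\})$, completed by Lemma~\ref{lem-completion} when $p<1$ and by $F'=\mu$ when $p=1$. Parts (i)--(iii) rest on the same combinatorial arguments. The only cosmetic difference is in the example: you use the equality of every fibre constraint forced by $p=1$, while the paper reads off the $a=0$ and $a=2$ constraints directly, and both give the unique map $(0,0,2)$.
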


\begin{proof}
\emph{Necessity.} A rationalizing structure has a stayers' latent
sub-measure \(\nu\) on \(\mathcal{Y}\) (the second marginal of the
stayers' latent joint) with \(\nu \le F_2\), because selection
probabilities are at most one, and with \(c'_{\#}\nu = p\,Q\); hence
\(p\,Q(\{a\}) = \nu(c'^{-1}(\{a\})) \le F_2(c'^{-1}(\{a\}))\).
\emph{Sufficiency.} On each block \(c'^{-1}(\{a\})\) let \(\nu\) be
\(F_2\) restricted to the block and scaled by
\(p\,Q(\{a\})/F_2(c'^{-1}(\{a\})) \le 1\) (zero if \(p\,Q(\{a\}) = 0\));
then \(\nu \le F_2\) and \(c'_{\#}\nu = p\,Q\). Couple \(\nu\) with the
entry report by giving the stayers whose latent category \(b\) lies in
the block \(c'^{-1}(\{a\})\) the observed conditional law of \(Y_1\)
given the report \(a\): the joint stayer measure
\(\mu(dy_1 \times \{b\}) := G(dy_1 \mid Y_2 = a)\,\nu(\{b\})\) has first
marginal \(\sum_a G(dy_1 \mid Y_2 = a)\,p\,Q(\{a\}) = p\,G_1 \le F_1\)
and second marginal \(\nu \le F_2\), and Lemma~\ref{lem-completion},
which holds on any measurable space, completes the structure when
\(p < 1\); when \(p = 1\) the constraints force \(\nu = F_2\), and
\(\mu\) is itself the structure. (i): a category outside the image of
\(c'\) has empty preimage, so Equation~\ref{eq-mass} forces
\(p\,Q(\{a\}) = 0\) there; every reported category therefore lies in the
image, while an unreported category in the image satisfies
Equation~\ref{eq-mass} trivially. (ii): if every category is reported,
the image is \(\mathcal{Y}\), and a weakly increasing map of a finite
chain onto itself is strictly increasing, hence the identity. (iii) is
(ii) with \(m = 2\). In the example, Equation~\ref{eq-mass} at \(a = 0\)
requires the preimage of \(0\) to carry mass at least \(2/3\), hence to
be \(\{0, 1\}\), and at \(a = 2\) requires \(2\) in the preimage of
\(2\). The ``every attrition process'' clause is the necessity
direction, which uses only \(\nu \le F_2\).
\end{proof}

The corollary determines the scope of the deterministic model for the
two kinds of survey item. For continuous items, and for ordinal items
with enough categories to be treated as continuous, the class of maps is
rich and the funnel is the object of interest. For items with a handful
of categories, all of which the stayers use (most attitude and behaviour
items in panels of the size considered here), the only admissible weakly
increasing deterministic map is the identity, so Theorem~\ref{thm-sharp}
reduces to a test: the retention-scaled stayer mass in every category
must fit under the refreshment mass. Section~\ref{sec-implementation}
applies this criterion. If shifts in the use of extreme and middle
categories, such as those examined in Section~\ref{sec-implementation},
reflect conditioning, they are incompatible with weakly increasing
deterministic maps (such a map sending \(5\) to \(4\) would have to do
so for everyone, and the stayers still report \(5\)). Stochastic partial
merging of categories is one possible explanation, but the marginal
distributions alone do not distinguish it from nonmonotone deterministic
responses. The bounds of Das, Toepoel, and van Soest
(\citeproc{ref-dasetal2011}{2011}), applied to threshold events as they
suggest {[}p.~36{]}, provide an alternative analysis that does not
impose a deterministic monotone map. We do not undertake this analysis
or a joint treatment of ordered categories here. The check of
Section~\ref{sec-ademp} enumerates every weakly increasing map on four
categories and confirms Equation~\ref{eq-mass} against a
linear-programming feasibility test.

\section{Multiwave Panels: What the Longitudinal Structure
Adds}\label{sec-multiwave-theory}

Theorem~\ref{thm-sharp} concerns two waves. In this section we allow a
cohort to be observed at many waves, with attrition between them, with a
battery of items asked once at entry, and with response patterns that
may be interrupted. We ask whether this longitudinal structure (the
joint distribution of reports across waves, the timing of dropout, the
negative controls) narrows the identified set of the conditioning map at
a given wave. We show that, within one cohort and without restrictions
that link waves, it does not: the identified set of the map at a wave
with a refreshment sample is the two-wave set of Theorem~\ref{thm-sharp}
computed from that wave's respondents and its refreshment distribution,
and the maps at waves without a refreshment sample are unidentified. The
result supplies the multiwave construction that the necessity part of
Theorem~\ref{thm-tworef} requires, and it shows what the constructive
designs of Section~\ref{sec-constructive} must assume, namely a
restriction linking the selection at one wave to something observed at
another.

\subsection{Setting}\label{sec-mw-setting}

A cohort enters at wave \(e\) and is followed for \(s\) waves; index
waves by tenure \(j = 1, \dots, s\). The latent path is
\(Y^{*} = (Y^{*}_1, \dots, Y^{*}_s) \sim F\) on \(\mathbb{R}^s\); the
entry coordinate may be vector-valued, carrying any battery of items
asked once at entry, and nothing below changes if it is. Conditioning
maps are tenure-specific, \(c_1 = \mathrm{id}\) and
\(c_j \in \mathcal{M}\) for \(j \ge 2\), and reports are
\(Y_j = c_j(Y^{*}_j)\); the case in which the map depends on the number
of completed interviews rather than on tenure is treated in
Remark~\ref{rem-dose}. A \emph{response pattern} is \(r \in \{0,1\}^s\)
with \(r_1 = 1\); monotone attrition is the case
\(r = (1, \dots, 1, 0, \dots, 0)\), but interrupted patterns are
allowed. Selection is a Markov kernel
\(\pi(r \mid y^{*}) = \Pr(R = r \mid Y^{*} = y^{*})\), unrestricted: the
probability of any pattern may depend on the whole latent path,
including coordinates that are never reported. Write
\(p_r := \Pr(R = r)\) and \(p_j := \sum_{r: r_j = 1} p_r\) for the
response rate at tenure \(j\).

The observables are, for each pattern \(r\), the sub-measure
\(\lambda_r\) on \(\mathbb{R}^{|r|}\) of the reported coordinates
\((Y_j)_{j: r_j = 1}\) among respondents with that pattern (of mass
\(p_r\)); and, at every tenure \(j\) in a set \(J^{*}\) at which a fresh
cohort is drawn, the population latent marginal \(F_j\) (refreshment
validity and cohort comparability, (A1) below). The entry marginal
\(F_1 = \sum_r \mathrm{proj}_1 \lambda_r\) is observed automatically.
Let \(Q_j := p_j^{-1}\sum_{r: r_j=1} \mathrm{proj}_j \lambda_r\) be the
reported-outcome distribution of all respondents at tenure \(j\),
whatever their pattern, and \(Q_{j,c'} := (c'^{-1})_{\#} Q_j\) its
pullback through a candidate map. If \(p_j = 0\), nobody responds at
\(j\); \(Q_j\) is then undefined, and we read \(p_j Q_{j,c'}\) as the
zero measure, so that the constraints below are void at \(j\) and
\(c_j\) is unrestricted. A structure is
\(\theta = (F, \pi, c_2, \dots, c_s)\); it rationalizes the observables
if it reproduces every \(\lambda_r\) and every \(F_j\), \(j \in J^{*}\).

\begin{lemma}[Pattern
completion]\protect\hypertarget{lem-seqcompletion}{}\label{lem-seqcompletion}

Let \(\{\tilde\lambda_r\}\) be finite measures, \(\tilde\lambda_r\) on
the coordinates \(\{j : r_j = 1\}\), indexed by patterns with
\(r_1 = 1\) and with masses summing to one. Let
\(J^{*} \subseteq \{1, \dots, s\}\) and, for \(j \in J^{*}\), let
\(F_j\) be a probability measure on \(\mathbb{R}\) with
\(\sum_{r: r_j = 1}\mathrm{proj}_j \tilde\lambda_r \le F_j\). Then there
exist a probability measure \(F'\) on \(\mathbb{R}^s\) and a Markov
kernel \(\pi'(r \mid y^{*})\) such that (i) \(\mathrm{proj}_j F' = F_j\)
for every \(j \in J^{*}\) and
\(\mathrm{proj}_1 F' = \sum_r \mathrm{proj}_1\tilde\lambda_r\); and (ii)
for every pattern \(r\), the sub-measure of the coordinates
\(\{j : r_j = 1\}\) among those with \(R = r\) under \((F', \pi')\) is
exactly \(\tilde\lambda_r\).

\end{lemma}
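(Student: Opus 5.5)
The plan is to generalize the residual-product construction of Lemma~\ref{lem-completion} pattern by pattern. Each pattern \(r\) gets a full measure \(F'_r\) on \(\mathbb{R}^s\). On the coordinates \(\{j : r_j = 1\}\) it equals \(\tilde\lambda_r\), and on the unobserved coordinates it is tensored with a fixed probability measure. Then \(F' := \sum_r F'_r\), and the kernel is \(\pi'(r \mid y^{*}) := dF'_r/dF'\). The unobserved coordinates of a pattern are never seen, so the data do not restrict their dependence on anything. As in the two-wave case, the independent coupling is therefore a legitimate completion.

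\emph{Step 1 (residuals).} Write \(p_r := \tilde\lambda_r(\mathbb{R}^{|r|})\) and \(p_j := \sum_{r: r_j = 1} p_r\). For \(j \in J^{*}\) the hypothesis makes
\[
\rho_j := F_j - \sum_{r: r_j = 1}\mathrm{proj}_j\tilde\lambda_r
\]
a nonnegative measure of mass \(1 - p_j\). The masses \(p_r\) of the patterns with \(r_j = 0\) also sum to \(1 - p_j\), since all masses sum to one. So when \(p_j < 1\), set \(\nu_j := \rho_j/(1 - p_j)\), a probability measure. When \(p_j = 1\), the residual is zero, there are no patterns with \(r_j = 0\), and nothing needs to be defined. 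For \(j \notin J^{*}\) with \(r_j = 0\), take \(\nu_j := \delta_0\), or any probability measure; these coordinates are unconstrained.

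\emph{Step 2 (pattern measures and marginals).} Set
\[
F'_r := \tilde\lambda_r \otimes \bigotimes_{j: r_j = 0} \nu_j ,
\]
with coordinates placed in their natural order, and \(F' := \sum_r F'_r\). The total mass is \(\sum_r p_r = 1\). Since the \(\nu_j\) are probability measures, the projection of \(F'_r\) onto the observed coordinates of \(r\) is exactly \(\tilde\lambda_r\). For \(j \in J^{*}\),
\[
\mathrm{proj}_j F' = \sum_{r: r_j=1}\mathrm{proj}_j\tilde\lambda_r + \Bigl(\sum_{r: r_j = 0} p_r\Bigr)\nu_j = (F_j - \rho_j) + \rho_j = F_j .
\]
Every pattern has \(r_1 = 1\), so \(\mathrm{proj}_1 F' = \sum_r \mathrm{proj}_1\tilde\lambda_r\), which is (i).

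\emph{Step 3 (kernel).} Each \(F'_r\) satisfies \(F'_r \le F'\), so \(\pi'(r \mid \cdot) := dF'_r/dF'\) exists and lies in \([0,1]\). Summing over \(r\) gives \(\sum_r \pi'(r \mid \cdot) = dF'/dF' = 1\) \(F'\)-a.e. Redefining on a null set, for instance by putting all mass on the all-ones pattern, makes \(\pi'\) an everywhere-defined Markov kernel. Under \((F', \pi')\), the joint law of \((Y^{*}, R = r)\) is \(\pi'(r \mid y^{*})\,F'(dy^{*}) = F'_r(dy^{*})\). Its projection onto the observed coordinates is \(\tilde\lambda_r\), which is (ii).

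I expect no deep obstacle. The main care is bookkeeping in Step 1: the residual at \(j\) must be split across exactly the patterns missing \(j\), in proportion to their masses, and the degenerate cases must be handled separately. These are \(p_j = 1\), patterns with \(p_r = 0\) (which contribute zero measures and can be dropped), and coordinates outside \(J^{*}\). The second point needing care is a measurable version of the kernel summing to one everywhere, settled by the null-set modification in Step 3. If the entry coordinate is vector-valued, the same construction goes through verbatim, since it only uses product measures on general measurable spaces.
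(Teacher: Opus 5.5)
Your proposal is correct and follows the paper's proof: normalized residuals $\rho_j/(1-p_j)$ at the refreshment tenures and arbitrary laws elsewhere, pattern measures $\tilde\lambda_r \otimes \bigotimes_{j: r_j = 0}\nu_j$ summed into $F'$, the kernel $dF'_r/dF'$, and the same marginal computation. Making the kernel sum to one everywhere by a null-set modification, and handling zero-mass patterns explicitly, are small technical points that the paper leaves implicit.
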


\begin{proof}
For \(j \notin J^{*}\) fix any probability measure \(\kappa_j\) on
\(\mathbb{R}\). For \(j \in J^{*}\) let \(p_j\) be the total mass of
\(\sum_{r: r_j = 1}\tilde\lambda_r\); if \(p_j < 1\) set
\(L_j := F_j - \sum_{r: r_j = 1}\mathrm{proj}_j\tilde\lambda_r\), a
nonnegative measure of mass \(1 - p_j\) by hypothesis, and
\(\kappa_j := L_j/(1 - p_j)\); if \(p_j = 1\) nobody is missing at \(j\)
and \(\kappa_j\) is not needed. For each pattern \(r\) define the
measure on \(\mathbb{R}^s\) \[
\nu_r \;:=\; \tilde\lambda_r \otimes \bigotimes_{j : r_j = 0} \kappa_j ,
\] which gives the reported coordinates the law \(\tilde\lambda_r\) and
the unreported ones independent draws from the \(\kappa_j\). Put
\(F' := \sum_r \nu_r\), a probability measure because the masses sum to
one, and \(\pi'(r \mid y^{*}) := (d\nu_r/dF')(y^{*})\), which exists and
lies in \([0,1]\) because \(\nu_r \le F'\), and which sums over \(r\) to
\(dF'/dF' = 1\). The pattern-\(r\) sub-measure is then
\(\pi'(r \mid \cdot)\,F' = \nu_r\), whose projection onto the reported
coordinates is \(\tilde\lambda_r\): this is (ii). For (i),
\(\mathrm{proj}_j F' = \sum_{r: r_j = 1}\mathrm{proj}_j\tilde\lambda_r + \sum_{r: r_j = 0} p_r\,\kappa_j = (F_j - L_j) + (1 - p_j)\,L_j/(1-p_j) = F_j\)
for \(j \in J^{*}\), since \(\sum_{r: r_j = 0} p_r = 1 - p_j\); at
\(j = 1\) every pattern reports, so
\(\mathrm{proj}_1 F' = \sum_r \mathrm{proj}_1\tilde\lambda_r\).
\end{proof}

Under monotone attrition the kernel takes the sequential form used in
Section~\ref{sec-constructive}: writing
\(N_j := \sum_{r: r_j = 1}\nu_r\) for the sub-measure of those still
responding at tenure \(j\), one has
\(N_1 = F' \ge N_2 \ge \dots \ge N_s\), and
\(\pi'_j := dN_{j+1}/dN_j \in [0,1]\) is the probability of continuing
past \(j\) given the latent path; the survivors through \(j\) then have
latent sub-measure \(\prod_{u < j}\pi'_u\,F' = N_j\). The completion is
the multiwave form of Lemma~\ref{lem-completion}: everyone's unreported
coordinates are assigned independently of their reported ones, with the
leftover distribution at each refreshment wave, and every observed
restriction is preserved.

\begin{theorem}[No additional identification from within-cohort
histories under unrestricted cross-wave
dependence]\protect\hypertarget{thm-multiwave}{}\label{thm-multiwave}

Under refreshment validity and cohort comparability at the tenures in
\(J^{*}\), the identified set for the family of maps
\((c_j)_{j \in J^{*}}\) is the product \[
\prod_{j \in J^{*}} \bigl\{\, c' \in \mathcal{M} : p_j\, Q_{j, c'} \le F_j \,\bigr\},
\] each factor the two-wave set of Theorem~\ref{thm-sharp} computed from
the tenure-\(j\) respondents' reported distribution, the response rate
at \(j\), and the refreshment distribution at \(j\). The maps at tenures
outside \(J^{*}\) are unidentified, every element of \(\mathcal{M}\)
being compatible with the data; and the joint distribution of reports
across waves, the response patterns, and any battery of entry-wave items
impose no further restriction. In particular, for a panel with one
refreshment sample at tenure \(s\), the identified set of \(c_s\) is
exactly \(\{c' : p_s Q_{s,c'} \le F_s\}\), whatever the panel's length,
its attrition history, and its entry battery.

\end{theorem}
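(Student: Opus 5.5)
The plan is to argue necessity and sufficiency separately, as in Theorem~\ref{thm-sharp}, with Lemma~\ref{lem-seqcompletion} doing the work that Lemma~\ref{lem-completion} did there.

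\emph{Necessity.} Take any rationalizing structure \((F', \pi', c'_2, \dots, c'_s)\) and fix \(j \in J^{*}\) with \(p_j > 0\). Among respondents at tenure \(j\), whatever their pattern, \(Y^{*}_j = c_j'^{-1}(Y_j)\) almost surely. So the latent sub-measure of \(Y^{*}_j\) for those with \(R_j = 1\) is \((c_j'^{-1})_{\#}\sum_{r: r_j=1}\mathrm{proj}_j\lambda_r = p_j Q_{j,c'_j}\). This sub-measure equals \(\mathrm{proj}_j\bigl(\Pr(R_j = 1 \mid \cdot)\,F'\bigr)\), and the selection probability is at most one, so it is setwise dominated by \(\mathrm{proj}_j F' = F_j\). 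This gives \(c'_j\) in the \(j\)-th factor. If \(p_j = 0\), the constraint is void by convention. No restriction arises for \(j \notin J^{*}\), since there are no constraints to derive there.

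\emph{Sufficiency.} Take any family \((c'_j)_{j\ge 2}\) such that \(c'_j\) lies in the \(j\)-th factor for \(j \in J^{*}\) and is arbitrary in \(\mathcal{M}\) otherwise. Pull back each observed pattern measure coordinatewise,
\[
\tilde\lambda_r := \Bigl(\textstyle\prod_{j: r_j = 1} c_j'^{-1}\Bigr)_{\#}\lambda_r ,
\]
with \(c'_1 = \mathrm{id}\). The masses are unchanged and sum to one, and \(\mathrm{proj}_1\tilde\lambda_r = \mathrm{proj}_1\lambda_r\). Because pushforward commutes with projection, for \(j \in J^{*}\) we get \(\sum_{r: r_j=1}\mathrm{proj}_j\tilde\lambda_r = p_j Q_{j,c'_j} \le F_j\). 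This is exactly the hypothesis of Lemma~\ref{lem-seqcompletion}, which then yields \((F', \pi')\). Three checks remain.
\begin{enumerate}
\item \(F'\) has marginals \(F_j\) on \(J^{*}\) and the observed entry marginal.
\item Applying \(c'_j\) to the reported coordinates of the pattern-\(r\) latent sub-measure \(\tilde\lambda_r\) returns \(\lambda_r\), because each \(c'_j\) is a bijection.
\item Hence every observable is reproduced.
\end{enumerate}
Since the \(c'_j\) were chosen independently across \(j\), the identified set is the full product. The factors for \(j \notin J^{*}\) are all of \(\mathcal{M}\).

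\emph{Remaining claims.} The claims about cross-wave joint distributions, response patterns, and entry batteries follow because the construction reproduces the full \(\lambda_r\), including its cross-coordinate dependence and the vector-valued entry coordinate. The one-refreshment statement is the special case \(J^{*} = \{s\}\) (with \(j=1\) handled automatically). The step needing most care is the bookkeeping that makes the pulled-back \(\tilde\lambda_r\) satisfy the lemma's domination hypothesis exactly with \(Q_j\) pooled across patterns, and that the reproduction of \(\lambda_r\) holds jointly rather than only marginally. Both reduce to the fact that coordinatewise bijective pushforwards commute with projections and sums over patterns, which I would state explicitly.
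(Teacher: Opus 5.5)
Your proposal is correct and follows the paper's proof essentially step for step. Necessity comes from dominating the tenure-$j$ respondents' latent sub-measure $\sum_{r: r_j=1}\pi'(r\mid\cdot)\,F'$ by $F'$ and projecting onto coordinate $j$. Sufficiency comes from feeding the coordinatewise pullbacks $\tilde\lambda_r$ into Lemma~\ref{lem-seqcompletion}, and the product form follows because the constraints and the construction separate tenure by tenure.
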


\begin{proof}
\emph{Necessity.} Let \((F', \pi', c')\) rationalize the observables and
fix \(j \in J^{*}\). The latent sub-measure of the tenure-\(j\)
respondents is \(\sum_{r: r_j = 1}\pi'(r \mid \cdot)\,F' \le F'\),
because the kernel's values are nonnegative and sum to one over \(r\);
its \(j\)-th marginal is therefore dominated by
\(\mathrm{proj}_j F' = F_j\). By rationalization that marginal is also
the pullback of the observed reported distribution,
\((c_j'^{-1})_{\#}(p_j Q_j) = p_j Q_{j,c'_j}\). Hence
\(p_j Q_{j, c'_j} \le F_j\).

\emph{Sufficiency.} Take candidates \(c'_j\), \(j \in J^{*}\), each
satisfying its constraint, arbitrary \(c'_j \in \mathcal{M}\) at the
other tenures, and \(c'_1 = \mathrm{id}\). Pull every observed pattern
measure back to the latent scale coordinate by coordinate,
\(\tilde\lambda_r := (\times_{j : r_j = 1} c_j'^{-1})_{\#}\lambda_r\).
For \(j \in J^{*}\),
\(\sum_{r: r_j = 1}\mathrm{proj}_j\tilde\lambda_r = (c_j'^{-1})_{\#}(p_j Q_j) = p_j Q_{j,c'_j} \le F_j\),
which is the hypothesis of Lemma~\ref{lem-seqcompletion}. The lemma
returns \((F', \pi')\); pushing each \(\nu_r\)'s reported coordinates
forward through the \(c'_j\) recovers \(\lambda_r\), and
\(\mathrm{proj}_j F' = F_j\) on \(J^{*}\). So \((F', \pi', c')\)
rationalizes the observables. The product structure follows because the
constraints, and the construction, involve each \(j \in J^{*}\)
separately, and the unrestricted tenures never enter a constraint.
\end{proof}

The theorem extends the first remark after Theorem~\ref{thm-sharp}
(that, for identifying \(c\), the stayers' joint distribution with their
baseline reports is summarized by one distribution) to the entire panel.
For identifying the map at a wave with a refreshment sample, that wave's
reported distribution, its response rate and its refreshment
distribution exhaust the information in a cohort's reports over any
number of waves, in the pattern of its dropouts and in any battery of
entry items. The longitudinal structure does not identify \(c\), but it
makes possible restrictions that use it. The transport restriction (A2)
of Section~\ref{sec-tworef} links the target item's selection at \(t\)
to the negative controls' selection at \(t\); the restrictions B2--B4 of
Section~\ref{sec-constructive} link the survivors' latent mean at \(t\)
to something observed, the fresh cohort's future or the continuing
cohort's past. The construction in the proof shifts the survivors'
latent distribution at \(t\) while holding every entry-wave quantity and
every earlier wave fixed, and therefore violates each of these
restrictions; this is why each of them identifies. The fixed-effects
results of the companion identification paper are of this kind: the
additive cell-mean structure there is a restriction across cohorts and
waves, as the present theorem requires, and its Theorem 1 and Corollary
1 show how much of the path such a restriction recovers.

\begin{corollary}[Repeated
tenures]\protect\hypertarget{cor-repeated}{}\label{cor-repeated}

In the tenure model, where \(c_s\) is common to all cohorts, suppose
tenure \(s\) is observed at two refreshment episodes (cohorts
\(e_1 < e_2\) compared with fresh cohorts at \(t_r = e_r + s - 1\),
\(r = 1, 2\)). Then the identified set of \(c_s\) is the intersection of
the two episodes' sets,
\(\bigcap_{r=1,2}\{c' : p^{(r)} Q^{(r)}_{c'} \le F^{(r)}\}\), with
\(p^{(r)}\), \(Q^{(r)}\), \(F^{(r)}\) the response rate, reported
distribution, and refreshment distribution of episode \(r\). It contains
\(c_s\), can be strictly smaller than either set, and is not in general
a point. Under the additional hypothesis that the selection functional
is the same at the two episodes, \(\beta_1(s) = \beta_2(s)\) in the
notation of Section~\ref{sec-momentsystem}, within the location family
\(c_s(y) = y + h\), and with the moments of the two episodes expressed
on a common scale (raw units or a fixed reference standard deviation),
the data carry the testable restriction \(C_1(s) = C_2(s)\): if it
holds, the identified set of the shift is unchanged by the hypothesis,
and if it fails, the hypothesis is rejected. On episode-specific scales,
or for general maps, the hypothesis can exclude maps that satisfy both
episodes' domination constraints.

\end{corollary}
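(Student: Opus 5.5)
The plan is to reduce the corollary to Theorem~\ref{thm-multiwave} applied to each episode separately, and then glue the two episodes together. In the tenure model the two cohorts \(e_1, e_2\) are distinct populations of respondents, each with its own latent law and its own selection kernel, and they share only the map \(c_s\).

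\emph{Necessity.} Fix a structure that rationalizes the pooled observables. Its restriction to cohort \(e_r\), together with the fresh cohort at \(t_r\), rationalizes that episode's observables. Theorem~\ref{thm-multiwave}, or equivalently Theorem~\ref{thm-sharp} at tenure \(s\), then gives \(p^{(r)} Q^{(r)}_{c'} \le F^{(r)}\) for each \(r\). So the common map lies in the intersection.

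\emph{Sufficiency.} Take \(c'\) in the intersection. For each episode, build the completion \((F'^{(r)}, \pi'^{(r)})\) of Lemma~\ref{lem-seqcompletion} with tenure-\(s\) map \(c'\), and with arbitrary maps at tenures without refreshment. Use the same maps in both constructions, which is possible because only tenure \(s\) is constrained, and it is constrained only through \(c'\). Nothing in the model ties the latent law or the selection of one cohort to those of the other, so the pair of completions is a single structure rationalizing everything.

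The true \(c_s\) is in the set because the true structure rationalizes the data. For strict shrinkage and non-degeneracy I would use location examples built on Corollary~\ref{cor-funnel}. Strict shrinkage: in episode 1 take logistic selection with \(\beta>0\), giving a set \([0,\Delta^{*}]\) around the truth; in episode 2 take the mirror selection with \(\beta<0\), giving \([-\Delta^{*},0]\); their intersection \(\{0\}\) is strictly smaller than either. Non-degeneracy: take a Laplace \(F^{(r)}\) with outcome-independent attrition in both episodes. By Corollary~\ref{cor-independent}(ii) each set is a nondegenerate interval, so the intersection is a nondegenerate interval too.

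For the moment statement I would take from Section~\ref{sec-momentsystem} that, in the location family, the observed contrast is \(C_r(s) = \mathrm{E}^{(r)}[Y_s \mid \text{resp}] - \mathrm{E}_{F^{(r)}}[Y_s^{*}] = h + \beta_r(s)\) on the common scale. Then \(\beta_1 = \beta_2\) is equivalent to \(C_1(s) = C_2(s)\), and the latter involves only data, so the hypothesis is testable. If the equality holds, every \(h'\) in the intersection is still compatible with the hypothesis. To see this, note that any rationalizing structure with shift \(h'\) has \(\beta_r = C_r - h'\), which is equal across episodes automatically. Hence the completion already satisfies the hypothesis and the set is unchanged. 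If the equality fails, no structure satisfies the hypothesis.

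The step I expect to be the delicate one is the last sentence of the corollary. On episode-specific scales the contrasts are \(C_r/\sigma_r\) with \(\sigma_1 \ne \sigma_2\), so \(\beta_1 = \beta_2\) becomes \((C_1 - h')/\sigma_1 = (C_2 - h')/\sigma_2\). This equation pins \(h'\) to a single value, which can exclude other points of the intersection. Similarly, for non-location maps the selection functional depends on \(c'\) nonlinearly, so the hypothesis can cut the set. An explicit numerical example, such as two Laplace episodes with different scales, would show the exclusion concretely.
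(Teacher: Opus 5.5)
Your proposal follows the paper's proof for everything but one clause. Those steps are:
\begin{itemize}
\item applying Theorem~\ref{thm-multiwave} to each episode and gluing the independently sampled cohorts;
\item the mirrored logistic-selection example for strict shrinkage;
\item the Laplace example for a nondegenerate intersection;
\item the cancellation of \(h'\) in \(\beta_r(s) = C_r(s) - h'\) on a common scale;
\item the remark that episode-specific standard deviations pin \(h'\).
\end{itemize}
One small repair in sufficiency: at any other tenure where one of the two cohorts meets a refreshment sample, take the true map rather than an arbitrary one. Every constraint then still holds and the maps stay common across cohorts. Your claim that ``only tenure \(s\) is constrained'' need not be true in the paper's design.

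The gap is the general-map half of the last sentence. It is an existence claim, so it needs an explicit structure. You offer only the remark that the selection functional depends on \(c'\) nonlinearly, and that is not the operative mechanism. Under a candidate \(c'\), the implied raw-scale differential is \(\beta_r(c') = \int c'^{-1}\,dQ^{(r)} - \mathbb{E}_{F^{(r)}}[Y]\), which is affine in \(c'^{-1}\). Relative to the truth, the candidate changes \(\beta_1 - \beta_2\) by \(\int (c'^{-1} - c^{-1})\,d(Q^{(1)} - Q^{(2)})\). This vanishes for every translation, because the signed measure has mass zero, and for every candidate when \(Q^{(1)} = Q^{(2)}\); it does not vanish in general otherwise. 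You therefore need three ingredients:
\begin{itemize}
\item episodes with different reported distributions;
\item preferably a truth at which the hypothesis holds, so that the exclusion is identifying power and not a rejection;
\item a non-translation candidate verified against both density-ratio constraints, which requires slack, that is, low retention.
\end{itemize}
The paper takes \(F^{(1)} = Q^{(1)} = N(0,1)\), \(F^{(2)} = Q^{(2)} = N(1,1)\), retention \(.1\) at both episodes, and the identity as the truth, so \(\beta_1(s) = \beta_2(s) = 0\). The candidate \(c'(y) = 2y\) has pullbacks \(N(\mu_r/2, 1/4)\), \(\mu_r \in \{0, 1\}\), whose largest density ratio against \(F^{(r)}\) is \(2\exp(\mu_r^2/6)\). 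Since \(.1\) times this is below one at both episodes, the candidate satisfies both domination constraints. Yet it implies differentials \(-\mu_r/2 \in \{0, -1/2\}\), which violate the hypothesis. A linear map already suffices, so nonlinearity is beside the point.
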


\begin{proof}
The two cohorts are independent samples with separate latent processes
and selection kernels, so Theorem~\ref{thm-multiwave} applies to each
with the common candidate \(c'_s\), which is admissible iff it is
admissible at both episodes; \(c_s\) satisfies both constraints. That
the intersection is not a point in general: under outcome-independent
attrition at both episodes with Laplace refreshment tails of rate
\(\lambda\), Corollary~\ref{cor-independent} gives intervals of
half-widths \(\lambda^{-1}\log(1/p^{(r)})\) about \(h\), whose
intersection is the smaller interval. That it can be strictly smaller
than either set: with Gaussian refreshment distributions and logistic
selection increasing in the outcome at one episode and decreasing at the
other, the example following Corollary~\ref{cor-funnel} gives the
one-sided sets \([h, h + a]\) and \([h - b, h]\) with \(a, b > 0\),
whose intersection is \(\{h\}\). For the location clause, a candidate
\(h'\) implies survivor latent means
\(\mathbb{E}[Y^{(r)}_t \mid S] - h'\) at both episodes; on a common
scale, \(\beta_1(s) = \beta_2(s)\) reads \(C_1(s) = C_2(s)\) after
\(h'\) cancels, a restriction on the data alone. If instead the
episode-\(r\) moments are divided by different standard deviations
\(\sigma_r\), the restriction reads
\((C_1(s) - h')/\sigma_1 = (C_2(s) - h')/\sigma_2\) in raw units, which
determines \(h'\) when \(\sigma_1 \ne \sigma_2\). For general maps the
candidate does not cancel: with \(F^{(1)} = Q^{(1)} = N(0,1)\),
\(F^{(2)} = Q^{(2)} = N(1,1)\), retention \(.1\) at both episodes and
the true identity (selection differentials zero at both), the candidate
\(c'(y) = 2y\) has pullbacks \(N(\mu_r/2, 1/4)\), whose largest density
ratios against \(F^{(r)}\) are \(2\exp(\mu_r^2/6)\) with
\(\mu_r \in \{0, 1\}\), so it satisfies both constraints; but it implies
survivor latent means \(\mu_r/2\), hence selection differentials \(0\)
and \(-1/2\), and violates \(\beta_1(s) = \beta_2(s)\).
\end{proof}

The corollary is the formal basis of the recommendation in
Section~\ref{sec-design} that panels repeat a tenure across refreshment
episodes. Repeating a tenure intersects the two episodes' identified
sets and, in the location family, adds a direct test of stationary
selection, but it does not suffice for point identification, which
requires a restriction of the kind that Theorem~\ref{thm-tworef} and
Section~\ref{sec-constructive} supply.

\begin{refremark}[Interrupted participation and dose-specific maps]
If the map at a wave depends on the number \(k\) of interviews completed
before it and not on tenure (a dose model, as in interrupted rotation
designs), then respondents at tenure \(j\) with different patterns carry
different maps \(c_{j,k}\), and the constraint at \(j \in J^{*}\)
becomes \(\sum_k (c_{j,k}'^{-1})_{\#}(p_{j,k}\,Q_{j,k}) \le F_j\), where
\(p_{j,k}\) and \(Q_{j,k}\) are the response rate and reported
distribution of the tenure-\(j\) respondents with dose \(k\): a joint
constraint on the family \(\{c_{j,k}\}_k\) requiring the sum of the
dose-specific pullbacks to fit under \(F_j\). The identified set for the
family is then not a product over \(k\), but the proof is unchanged,
since Lemma~\ref{lem-seqcompletion} is stated for arbitrary patterns.
The companion identification paper's assumption of uninterrupted
participation (its M1) is the case in which every respondent at a tenure
has the same dose, and the present theorem shows why interrupted
patterns do not affect the identification question addressed here: the
construction re-encodes them and does not use them.

\label{rem-dose}

\end{refremark}

A deterministic check of the construction is reported in
Section~\ref{sec-ademp}. On a four-wave cohort with a finite latent
lattice, a binary entry-only negative control, all eight response
patterns, and a selection kernel that depends on the never-reported
coordinates, the completion of Lemma~\ref{lem-seqcompletion} is built
explicitly for every candidate map at the refreshment wave. The
candidates that satisfy the domination inequality reproduce every
pattern sub-measure, the entry marginal, and the refreshment marginal
within the script's tolerance of \(10^{-10}\) while moving the
survivors' latent mean by the candidate's perturbation; those that
violate it have a negative leftover.

\section{Sharp Bounds for Location-Shift Conditioning}\label{sec-bounds}

Before restricting the map, we record the bounds that
Theorem~\ref{thm-sharp} implies for the survivors' mean conditioning
effect when the map is unrestricted. These bounds are a known object,
and they are the benchmark against which the shape restriction should be
judged.

\begin{corollary}[Mean bounds for an unrestricted
map]\protect\hypertarget{cor-meanbounds}{}\label{cor-meanbounds}

Under the conditions of Theorem~\ref{thm-sharp}, suppose that \(p < 1\),
that \(F_2\) and \(Q\) are atomless with support \(\mathbb{R}\), and
that both have finite means. Let
\(m^{S} := \mathbb{E}[Y_2^{*} \mid S = 1]\) be the survivors' latent
mean and \(h^{S} := \mathbb{E}[Y_2 - Y_2^{*} \mid S = 1]\) their mean
conditioning effect. Write \(q_u := F_2^{-1}(u)\) and \[
m_L(p) := \mathbb{E}_{F_2}[Y \mid Y \le q_p], \qquad m_U(p) := \mathbb{E}_{F_2}[Y \mid Y \ge q_{1-p}]
\] for the means of the lower and upper \(p\)-fractions of the
refreshment distribution. Then the identified set for \(m^{S}\) is the
open interval \((m_L(p), m_U(p))\) and that for \(h^{S}\) is
\((\,\mathbb{E}[Y_2 \mid S] - m_U(p),\ \mathbb{E}[Y_2 \mid S] - m_L(p)\,)\):
every interior point is attained by a map in \(\mathcal{M}\), and the
endpoints, which are the sharp infimum and supremum, are not. The
closures are the trimmed-means intervals. The width
\(w(p) := m_U(p) - m_L(p)\) depends on the refreshment distribution and
the retention rate only. The same argument bounds the survivors' latent
distribution function pointwise:
\(\max\{0, (F_2(y) - 1 + p)/p\} \le \Pr(Y_2^{*} \le y \mid S = 1) \le \min\{1, F_2(y)/p\}\).

\end{corollary}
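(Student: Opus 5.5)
By Theorem~\ref{thm-sharp}, a candidate $c'$ is admissible if and only if $\nu_{c'} := p\,Q_{c'}$ satisfies $\nu_{c'} \le F_2$. Under $c'$ the survivors' latent mean is $m^S = p^{-1}\int y\, d\nu_{c'}(y)$. The plan is therefore to characterise the set of means of normalised sub-measures $\nu/p$, where $\nu \le F_2$ has mass $p$ and is a pullback of $Q$ through a map in $\mathcal{M}$. The statement for $h^S$ then follows at once, since $h^S = \mathbb{E}[Y_2 \mid S] - m^S$ and $\mathbb{E}[Y_2 \mid S]$ is observed. Finiteness of the means of $F_2$ and $Q$ makes every such mean well defined.

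\emph{Outer bound.} Take any $\nu \le F_2$ of mass $p$. The bathtub (Neyman--Pearson) argument gives $\int y\,d\nu \ge \int_{y \le q_p} y\, dF_2$. The reason is that $\nu$ removes mass from the lower $p$-fraction and places it at points $y > q_p$. This moved mass weakly raises the integral, since each unit moves from below $q_p$ to above $q_p$. Hence $m^S \ge m_L(p)$, and symmetrically $m^S \le m_U(p)$. The same reasoning applied to indicator functions gives the pointwise CDF bounds. On one side $\nu((-\infty,y]) \le F_2(y)$ and $\nu((-\infty,y]) \le p$. On the other side $\nu((y,\infty)) \le 1 - F_2(y)$, so $\nu((-\infty,y]) \ge p - 1 + F_2(y)$. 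Dividing by $p$ gives the stated band.

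\emph{Endpoints excluded.} Equality in the bathtub step forces $\nu = F_2|_{(-\infty,q_p]}$, because $F_2$ is atomless. This measure has support bounded above by $q_p < \infty$, using $p<1$ and the fact that $F_2$ has support $\mathbb{R}$. However, $Q$ is atomless with support $\mathbb{R}$, and $c'^{-1}$ is a homeomorphism of $\mathbb{R}$ onto itself. So every pullback $Q_{c'}$ has support $\mathbb{R}$, which rules out equality. The upper endpoint is excluded in the same way.

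\emph{Interior attained.} This is the main step. For $a \in [0,1-p]$ let $\nu^0_a := F_2|_{[q_a,\,q_{a+p}]}$, which has mass $p$. Its mean $M(a)$ is continuous in $a$ because $F_2$ is atomless, and it runs from $p\,m_L(p)$ at $a=0$ to $p\,m_U(p)$ at $a = 1-p$. For $\eta\in(0,1)$ set
\[
\nu_{a,\eta} := (1-\eta)\,\nu^0_a + \eta\, p\, F_2 .
\]
This measure has mass $p$. It satisfies $\nu_{a,\eta} \le \{(1-\eta)+\eta p\}F_2 \le F_2$. It is atomless and has support $\mathbb{R}$, since it dominates $\eta p F_2$. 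Its mean is $(1-\eta)M(a) + \eta p\,\mathbb{E}_{F_2}Y$.

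Now fix $m \in (m_L(p), m_U(p))$. Choose $\eta$ small enough that $m$ lies strictly between the normalised means at $a=0$ and $a=1-p$. The intermediate value theorem then gives an $a$ with normalised mean exactly $m$. Write $N$ for the CDF of $\nu_{a,\eta}/p$. Both $N$ and the CDF of $Q$ are continuous and strictly increasing onto $(0,1)$. Hence $c' := Q^{-1}\circ N$ lies in $\mathcal{M}$ and satisfies $Q_{c'} = \nu_{a,\eta}/p$. It is admissible by Theorem~\ref{thm-sharp}, and it yields $m^S = m$.

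The only delicate points are the two support arguments. The first is that full support of $Q$ and of $F_2$ is exactly what excludes the endpoints. The second is that mixing in $\eta p F_2$ is exactly what makes the quantile transport a homeomorphism. The closure of the resulting open interval is the trimmed-means interval, and its width depends only on $F_2$ and $p$.
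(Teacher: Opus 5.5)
Your proof is correct and follows the paper's argument: the bathtub bound for mass-$p$ sub-measures dominated by $F_2$, exclusion of the endpoints because the extremal measures live on half-lines while every pullback of $Q$ through a homeomorphism has full support, and attainment by mixing in a multiple of $F_2$ and then transporting quantiles through $Q^{-1}\circ N$. The only difference is how you reach the interior points. You use a sliding quantile window and the intermediate value theorem; note that continuity of $M(a)=\int_a^{a+p}F_2^{-1}(u)\,du$ at $a=0$ and $a=1-p$ rests on the finite mean of $F_2$, not on atomlessness alone. The paper instead interpolates linearly between the extremal measure and $F_2$, which fills $[\mathbb{E}_{F_2}Y,\, m_U(p))$ and its lower analogue explicitly.
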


\begin{proof}
By Theorem~\ref{thm-sharp} the survivors' latent distribution
\(\nu := Q_{c'}\) ranges over the probability measures with
\(p\,\nu \le F_2\) that are pullbacks of \(Q\) through some
\(c' \in \mathcal{M}\). Any \(\nu\) with \(p\,\nu \le F_2\) has
\(\int y\,d\nu \le m_U(p)\): the density \(d\nu/dF_2\) is bounded by
\(1/p\) and integrates to one, and among such densities the mean is
maximized by the one equal to \(1/p\) on \([q_{1-p}, \infty)\) and zero
below, since moving mass from below \(q_{1-p}\) to above it, at density
at most \(1/p\), raises the mean; because \(F_2\) is atomless this
maximizer \(\nu^{*} := p^{-1}F_2|_{[q_{1-p},\infty)}\) is unique, and
symmetrically \(\int y\,d\nu \ge m_L(p)\) with equality only at the
lower analogue. Neither extremal measure is admissible: its support is a
half-line, whereas a pullback of \(Q\) through an increasing
homeomorphism of \(\mathbb{R}\) has support \(\mathbb{R}\) because \(Q\)
does. Conversely, with \(\eta \in (0, 1]\), the mixtures
\(\nu_\eta := (1-\eta)\,\nu^{*} + \eta F_2\) satisfy
\(p\,\nu_\eta \le (1-\eta)F_2 + \eta\,p F_2 \le F_2\), are atomless with
support \(\mathbb{R}\), and have means filling
\([\,\mathbb{E}_{F_2}Y,\ m_U(p))\); the lower analogue fills the rest of
the open interval. For each such \(\nu\) the increasing rearrangement
\(c' := Q^{-1}\circ F_\nu\) is a continuous, strictly increasing map of
\(\mathbb{R}\) onto \(\mathbb{R}\), because both distribution functions
are continuous and strictly increasing, so \(c' \in \mathcal{M}\) and
\((c'^{-1})_{\#}Q = \nu\); thus \(c' \in \mathcal{C}_{ID}\) by
Theorem~\ref{thm-sharp}, and
\(h^{S} = \mathbb{E}[Y_2 \mid S] - \int y\,d\nu\) because
\(\mathbb{E}[Y_2 \mid S]\) is observed and finite. The
distribution-function bounds are \(\nu(-\infty, y] \le F_2(y)/p\) and
\(\nu(y, \infty) \le (1 - F_2(y))/p\).
\end{proof}

The closure of the interval is the identification region of Horowitz and
Manski (\citeproc{ref-horowitzmanski1995}{1995}) {[}Proposition 1A,
p.~285; Corollary 4.1, eq. (14), p.~291{]} for the mean of one component
of a mixture with a known mixing proportion. Here the refreshment
distribution is the mixture, the survivors are the component of known
proportion \(p\), and the attriters' latent distribution is the free
component. Its width is also the width of the trimming bounds of Lee
(\citeproc{ref-lee2009}{2009}) at trimming proportion \(1 - p\); Lee
(\citeproc{ref-lee2009}{2009}, Appendix, Lemma 1, pp.~1097--1098) states
the mixture result and attributes its proof to the corollary of Horowitz
and Manski. The trimmed-mean interval is therefore an established
result. Corollary~\ref{cor-meanbounds} adds the statement for the map
class \(\mathcal{M}\), that interior points are attained and endpoints
are not; the interval then serves as the benchmark against which the
location restriction is judged. For a Gaussian \(F_2\) with standard
deviation \(\sigma\) the width is
\(w(p) = 2\sigma\,\phi(\Phi^{-1}(p))/p\): \(.70\sigma\) at \(p = .8\)
and \(1.29\sigma\) at \(p = .6\). A location shift \(h'\) places the
survivors' latent mean at \(\mathbb{E}[Y_2 \mid S] - h'\), so the
location set of the next theorem is contained in the closure of this
interval. Under the logistic-selection example of
Section~\ref{sec-corollaries} it has width at most \(\beta\sigma^2\)
(\(.3\sigma\) for \(\beta = .3\), \(\sigma = 1\)), and under
outcome-independent attrition with tails thinner than exponential it is
a point. In the design of Simulation 1 the location sets have widths
between \(0\) and \(.26\sigma\) at \(p = .8\), against \(.70\sigma\) for
an unrestricted map: most of the narrowing comes from the shape
restriction and not from the retention rate.

\begin{refremark}[The mean bounds do not depend on the conditioning
model]
The necessity half of Corollary~\ref{cor-meanbounds} uses only that the
survivors are a sub-population of known proportion \(p\) of a population
whose latent distribution the refreshment sample reveals; it never uses
that conditioning is a deterministic increasing map. The interval
\([m_L(p), m_U(p)]\) for the survivors' latent mean, and hence the
corresponding interval for \(h^{S}\), therefore bounds these quantities
under any conditioning mechanism (stochastic, nonmonotone,
item-specific), and in a model rich enough to carry an arbitrary
sub-population's latent distribution to the observed reported one, as
stochastic models are, the bounds are sharp and can be attained. With
atoms the extremal sub-populations may split an atom, and the endpoints
are the values of the linear programme \(\max\) (or \(\min\))
\(\int y\,d\nu\) over \(0 \le \nu \le F_2/p\) with
\(\nu(\mathbb{R}) = 1\). For a binary item with population mean
\(\pi_2\) this gives
\(m^{S} \in [\max\{0, (\pi_2 - 1 + p)/p\},\ \min\{1, \pi_2/p\}]\), an
interval of width \((1 - p)/p\) whenever \(1 - p \le \pi_2 \le p\). The
principal decomposition of Das, Toepoel, and van Soest
(\citeproc{ref-dasetal2011}{2011, 40--41}) concerns a population
conditioning effect, whose unobserved part is the attriters' conditioned
answers and whose sharp bounds have width \(1 - p\). Their online
Appendix 1 {[}p.~26{]} also gives the sharp bounds on the binary
survivor effect, whose unobserved part is the survivors' latent mean:
they are the observed survivor mean minus the interval for \(m^{S}\)
above. Our result for continuous outcomes relates these bounds to the
homeomorphism model and states when the endpoints are attained. In the
stated interior-probability range the survivor interval is wider by the
factor \(1/p\); near the boundaries its width is truncated and no
uniform ordering holds (at \(p = .8\) and \(\pi_2 = .01\) it is
\(.0125\), against \(.2\) for the population interval). The location set
of Theorem~\ref{thm-bounds} and the mass criterion of
Corollary~\ref{cor-discrete}, by contrast, do use the deterministic
model, and their additional power derives from it.

\label{rem-modelfree}

\end{refremark}

When \(\mathcal{M}\) is restricted to location shifts \(c(y) = y + h\),
the identified set becomes an interval and Theorem~\ref{thm-sharp}
yields sharp bounds on \(h\).

\begin{theorem}[Sharp
bounds]\protect\hypertarget{thm-bounds}{}\label{thm-bounds}

Under the conditions of Theorem~\ref{thm-sharp} with \(c(y) = y + h\),
and assuming the refreshment density \(f_2\) is unimodal with an upper
semicontinuous version (for example, continuous), \[
\mathcal{H}_{ID} \;=\; \Bigl\{\, h' : \; p\, q(y + h') \le f_2(y) \;\; \text{a.e.}\Bigr\} \;=\; \bigl[\,h_L,\; h_U\,\bigr] \;\ni\; h ,
\] where \(q\) is the density of \(Q\): the identified set is a closed
interval containing \(h\), and it is sharp, every
\(h' \in \mathcal{H}_{ID}\) being attained by the explicit completion of
Lemma~\ref{lem-completion}. Writing
\(\Psi(h') := \operatorname{ess\,sup}_y p\,q(y+h')/f_2(y)\), the
endpoints are the infimum and supremum of \(\{h' : \Psi(h') \le 1\}\).
The interval is nondegenerate if and only if \(\Psi(h') \le 1\) for some
\(h' \ne h\); Corollary~\ref{cor-funnel}(iii) gives a sufficient
condition for a two-sided neighbourhood of \(h\). It may be a single
point at interior retention: with \(f_2 = N(0,1)\), \(q = N(-1,1)\) and
\(p = .5\) it is \(\{-1\}\), and with \(f_2 = N(0,1)\),
\(q = N(0, \sigma^2)\) and \(p = \sigma < 1\) it is \(\{0\}\) although
the stayers' latent density has strictly lighter tails than \(f_2\) on
both sides (the example following Corollary~\ref{cor-funnel}). Lighter
tails of the stayers' latent density are therefore not sufficient for a
nondegenerate interval. The interval is contained in the closure of the
unrestricted-map interval of Corollary~\ref{cor-meanbounds}.

\end{theorem}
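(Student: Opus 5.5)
By Theorem~\ref{thm-sharp}, $\mathcal{H}_{ID}$ is the set of shifts $h'$ whose pullback density satisfies the domination constraint. The plan is to rewrite that constraint in translated form, then obtain convexity from unimodality, closedness from upper semicontinuity, and boundedness from a mass argument. Sharpness and the endpoint formula are then inherited from Theorem~\ref{thm-sharp}.

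\emph{Reduction.} For $c'(y)=y+h'$ the pullback $Q_{c'}$ has density $q(y+h')$. Theorem~\ref{thm-sharp} in density form therefore gives $\mathcal{H}_{ID}=\{h' : p\,q(y+h')\le f_2(y)\ \text{a.e.}\}=\{h':\Psi(h')\le 1\}$. Every member is rationalized by the completion of Lemma~\ref{lem-completion} applied to $\mu_{c'}$, and $h$ itself belongs to the set because the true structure rationalizes the data. Substituting $z=y+h'$, the condition becomes $p\,q(z)\le f_2(z-h')$ for a.e.\ $z$. I would fix the unimodal, upper semicontinuous version of $f_2$ once and for all. The condition does not depend on the version of $q$ up to null sets, so this choice is harmless.

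\emph{Convexity.} Let $h_1<h_0<h_2$ with $h_1,h_2$ feasible. For every $z$, the point $z-h_0$ lies between $z-h_2$ and $z-h_1$. A unimodal function on a segment is bounded below by the smaller of its two endpoint values, so
\[
f_2(z-h_0)\ \ge\ \min\{f_2(z-h_1),\,f_2(z-h_2)\}\ \ge\ p\,q(z)
\]
off the union of two null sets. Hence $h_0$ is feasible and the set is an interval.

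\emph{Closedness.} Let feasible $h_n\to h^*$. Off the countable union of exceptional null sets, $p\,q(z)\le f_2(z-h_n)$ holds for every $n$. Upper semicontinuity then gives $f_2(z-h^*)\ge\limsup_n f_2(z-h_n)\ge p\,q(z)$ for a.e.\ $z$. I expect this measure-theoretic bookkeeping to be the main point requiring care: the a.e.\ exceptional sets move with $h'$, which is why the fixed u.s.c.\ unimodal version is needed in both this step and the convexity step.

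\emph{Boundedness.} Choose a compact $K$ with $Q(K)>1/2$. Integrating the constraint over $K$ gives $p/2< p\,Q(K)\le F_2(K-h')$. The right side tends to $0$ as $|h'|\to\infty$, so feasible shifts are bounded. Together with the previous two steps, $\mathcal{H}_{ID}=[h_L,h_U]\ni h$, with endpoints the infimum and supremum of $\{\Psi\le1\}$.

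The nondegeneracy criterion is then immediate: the interval is nondegenerate exactly when $\Psi(h')\le 1$ for some $h'\ne h$. The two Gaussian examples are direct computations. In the first, $p\,q(y+h')/f_2(y)=\tfrac12\exp\{(h'+1)y+\dots\}$ is unbounded unless $h'=-1$, where the ratio equals $1/2$. The second is the computation of $\Phi(\varepsilon)$ in the central-binding example after Corollary~\ref{cor-funnel}. Finally, a feasible $h'$ makes $Q_{c'}$ an admissible survivor latent law of mean $\mathbb{E}[Y_2\mid S]-h'$. By the necessity half of Corollary~\ref{cor-meanbounds}, which uses only $p\,\nu\le F_2$ (Remark~\ref{rem-modelfree}), that mean lies in $[m_L(p),m_U(p)]$. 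This places $\mathcal{H}_{ID}$ inside the closure of the unrestricted-map interval when the relevant means are finite.
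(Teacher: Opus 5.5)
Your proof is correct and follows essentially the same route as the paper: the translated constraint $p\,q(z)\le f_2(z-h')$, quasi-concavity of the unimodal version for the interval property, upper semicontinuity with a countable union of null sets for closedness, Lemma~\ref{lem-completion} for sharpness, and the necessity half of Corollary~\ref{cor-meanbounds} for containment. Your compactness argument for boundedness needs no moment assumption and is a valid step the paper leaves implicit; it is what guarantees finite endpoints $h_L, h_U$. Two small slips: fixing a version of $f_2$ is harmless because translates of null sets are null (the version of $q$ is not the relevant point), and the linear term in your first Gaussian ratio is $-(h'+1)y$, which does not change the conclusion.
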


\begin{proof}
From Theorem~\ref{thm-sharp}, the location family makes \(Q_{c'}\) a
shift of \(Q\) and the domination constraint becomes the displayed
density inequality. Rewriting it with \(z = y + h'\), the constraint is
equivalent to \[
f_2(z - h') \;\ge\; p\, q(z) \quad \text{for a.e. } z .
\] \emph{Interval.} If \(h'_1 < h'_2\) are feasible and
\(h'_1 < h' < h'_2\), then for a.e. \(z\) both
\(f_2(z - h'_1) \ge p\,q(z)\) and \(f_2(z - h'_2) \ge p\,q(z)\) hold;
\(z - h'\) lies between \(z - h'_2\) and \(z - h'_1\), and a unimodal
density at a point between two others is at least the smaller of its
values there, so \(f_2(z - h') \ge p\,q(z)\). \emph{Closed.} If feasible
\(h'_n \to h'\), then outside the countable union of the exceptional
null sets \(f_2(z - h'_n) \ge p\,q(z)\) for every \(n\), and upper
semicontinuity gives
\(f_2(z - h') \ge \limsup_n f_2(z - h'_n) \ge p\,q(z)\). The set is
nonempty because it contains the true \(h\). (Log-concavity of \(f_2\)
is sufficient a fortiori; no condition on \(q\) is needed.) Sharpness of
every point in it is Lemma~\ref{lem-completion}. The characterization of
the endpoints and of nondegeneracy restates the definition, and the
Gaussian examples are computed in the example following
Corollary~\ref{cor-funnel} and in Section~\ref{sec-ademp}. Containment
follows because a shift \(h'\) fixes the survivors' latent mean at
\(\mathbb{E}[Y_2 \mid S] - h'\), which must lie in \([m_L(p), m_U(p)]\)
by the necessity half of Corollary~\ref{cor-meanbounds}.
\end{proof}

\begin{remark}[Relation to Lee bounds]
Lee (\citeproc{ref-lee2009}{2009}) bounds the average effect of a
training programme on wages for those whose wages would be observed
under either treatment assignment, by trimming the observed outcome
distribution of the treated group from either tail by the proportion
\(p_0 = [\Pr(S=1 \mid D=1) - \Pr(S=1 \mid D=0)]/\Pr(S=1 \mid D=1)\),
under random assignment and the monotonicity restriction \(S_1 \ge S_0\)
(\citeproc{ref-lee2009}{Lee 2009}, Proposition 1a, p.~1083). In his
application the observation indicator is employment; the method is
stated for sample selection in general, and his conclusion notes its
application to survey nonresponse and attrition
(\citeproc{ref-lee2009}{Lee 2009, 1097}). Corollary~\ref{cor-meanbounds}
trims the refreshment distribution at the retention rate to bound the
survivors' latent mean. The two results share the same geometry: a
sub-measure of known mass must fit under a fixed measure, and the
extremal fits lie in the tails. They differ in the estimand and in the
assumptions that produce the mixture. Here the unidentified component is
the survivors' latent-outcome distribution, whose mixture with the
attriters' latent distribution is the refreshment distribution. In Lee
(\citeproc{ref-lee2009}{2009}) the unidentified component is the outcome
distribution of treated units who would be observed under either
assignment, and the mixture follows from independence and monotonicity.
The containment statement of Theorem~\ref{thm-bounds} shows what the
location restriction adds to the unrestricted-map interval.
\end{remark}

\subsection{Shrinking the funnel with covariates}\label{sec-shrink}

Covariates observed for the whole cohort and in the refreshment sample
shrink the identified set through stratification, and the
covariate-standardized designs of Section~\ref{sec-constructive} exploit
this.

\begin{proposition}[Covariate
stratification]\protect\hypertarget{prp-covariates}{}\label{prp-covariates}

Let \(X\) be a discrete entry covariate observed for the whole cohort
and in the refreshment sample, let structures carry \(X\) (so that \(F\)
is a law on \((X, Y_1, Y_2^{*})\) and \(\pi\) may depend on \(X\)), and
suppose the conditioning map does not vary with \(X\). Write \(p(x)\),
\(Q(\cdot \mid x)\), and \(F_2(\cdot \mid x)\) for the retention rate,
the stayers' reported-outcome distribution, and the refreshment
distribution within \(X = x\). Then the identified set is \[
\mathcal{C}_{ID}(X) \;=\; \bigcap_{x}\bigl\{\, c' \in \mathcal{M} : p(x)\,Q_{c'}(\cdot \mid x) \le F_2(\cdot \mid x) \,\bigr\} \;\subseteq\; \mathcal{C}_{ID},
\] it is sharp, and in the location family
\(\mathcal{H}_{ID}(X) = \bigcap_x \mathcal{H}_{ID}(x)\). The inclusion
is strict if and only if some map satisfying the pooled constraint
violates the constraint of some stratum; this can happen when the strata
differ only in their retention rates. If the map is allowed to vary with
\(X\), the identified set is the product of the stratum sets.

\end{proposition}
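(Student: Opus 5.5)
The plan is to apply Theorem~\ref{thm-sharp} stratum by stratum and glue the strata together; the covariate is observed for everyone, so the strata separate completely. \emph{Necessity.} Take a rationalizing structure \((F', \pi', c')\) carrying \(X\). Conditional on \(X = x\), the stayers' latent sub-measure is \(\pi'(x,\cdot)\,F'(\cdot \mid x) \le F'(\cdot \mid x)\). Its \(Y_2^{*}\)-marginal is the pullback \((c'^{-1})_{\#}(p(x)Q(\cdot\mid x)) = p(x)Q_{c'}(\cdot\mid x)\), because the map does not depend on \(x\). The second marginal of \(F'(\cdot\mid x)\) must equal the observed \(F_2(\cdot\mid x)\). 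Together these give the stratum constraint for every \(x\) with positive mass, and hence \(c' \in \bigcap_x\).

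\emph{Sufficiency.} Take \(c'\) in the intersection and fix \(x\). Form \(\mu_{c'}(x) := (\mathrm{id}\times c'^{-1})_{\#}(p(x)G(\cdot\mid x))\). Its first marginal satisfies \(p(x)G_1(\cdot\mid x) \le F_1(\cdot\mid x)\), which is automatic because the first wave is observed for the whole stratum. Its second marginal is dominated by \(F_2(\cdot\mid x)\) by hypothesis. Lemma~\ref{lem-completion} (or the \(p(x)=1\) case, exactly as in the proof of Theorem~\ref{thm-sharp}) then gives \((F'_x, \pi'_x)\). Set \(F' := \sum_x \Pr(X=x)\,\delta_x \otimes F'_x\) and \(\pi'(x,\cdot) := \pi'_x\). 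This structure reproduces the law of \(X\), the stratum marginals, each stratum's stayer joint, and the stratum retention rates; aggregating over \(x\) recovers the pooled observables. Two bookkeeping points need care: strata with \(p(x)=0\) impose no constraint, and the observables must be read as including the stratum-level objects. In the location family the sets are shifts, so \(\mathcal{H}_{ID}(X)=\bigcap_x\mathcal{H}_{ID}(x)\) follows immediately. If the map may vary with \(x\), the same construction runs with \(c'_x\) in stratum \(x\); since the constraints are separate across strata, the identified set is the product.

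\emph{Inclusion.} This is where I expect the only real argument. Suppose \(p(x)Q_{c'}(\cdot\mid x) \le F_2(\cdot\mid x)\) for every \(x\). Multiply by \(\Pr(X=x)\) and sum. The stayers' pooled latent measure is \(\sum_x \Pr(X=x)\,p(x)\,Q_{c'}(\cdot\mid x) = p\,Q_{c'}\), and \(\sum_x \Pr(X=x)F_2(\cdot\mid x) = F_2\). Summing therefore gives \(p\,Q_{c'} \le F_2\), so \(c' \in \mathcal{C}_{ID}\). The ``iff'' for strictness is then tautological, since the two sets are defined by these constraints.

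\emph{Example with strata differing only in retention.} Take two equally likely strata with \(F_2(\cdot\mid x)=F_2\) and \(Q(\cdot\mid x)=Q\) common to both, and retention \(p(1) < p(2)\). The pooled constraint uses the average retention \(p=(p(1)+p(2))/2 < p(2)\), so it is slacker than stratum 2's constraint. Concretely, take \(F_2\) Laplace and outcome-independent attrition. By Corollary~\ref{cor-independent}(ii) the pooled set has half-width \(\lambda^{-1}\log(1/p)\), which strictly exceeds stratum 2's half-width \(\lambda^{-1}\log(1/p(2))\). The remaining check is that these data arise from a valid structure with a common true map, which holds because \(Q_c = F_2\) in each stratum.
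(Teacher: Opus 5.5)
Your proposal is correct and follows the paper's proof essentially step for step. The steps are: necessity stratum by stratum from Theorem~\ref{thm-sharp}; completion by Lemma~\ref{lem-completion} within each stratum, then gluing; the inclusion by summing the stratum inequalities with weights $\Pr(X=x)$, since the pullback commutes with mixing; strictness read off the definitions; and the same Laplace example with strata differing only in retention. Your explicit treatment of the strata with $p(x)\in\{0,1\}$ and your gluing $F'=\sum_x \Pr(X=x)\,\delta_x\otimes F'_x$ are slightly more careful than the paper's version but do not change the argument.
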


\begin{proof}
Necessity within each stratum is Theorem~\ref{thm-sharp} applied
conditionally on \(X = x\): the stratum's observables are
\((F_1(\cdot \mid x), F_2(\cdot \mid x), G(\cdot \mid x), p(x))\), and a
structure that rationalizes the pooled data rationalizes each stratum's.
Sufficiency: for \(c'\) in the intersection, complete each stratum by
Lemma~\ref{lem-completion} and glue,
\(F' := \sum_x \Pr(X = x)\,F'(\cdot \mid x)\) and
\(\pi'(y_1, y_2, x) := \pi'_x(y_1, y_2)\). The inclusion: the stayers'
sub-measure is the mixture
\(p\,Q = \sum_x \Pr(X = x)\,p(x)\,Q(\cdot \mid x)\) and the pullback
commutes with mixing, so summing the stratum inequalities with weights
\(\Pr(X = x)\) gives
\(p\,Q_{c'} \le \sum_x \Pr(X = x)\,F_2(\cdot \mid x) = F_2\). The
strictness statement restates the definition of the two sets, and the
retention-only case is the example below. The product statement is
immediate.
\end{proof}

Whether the covariate identifies the map depends on an assumption that
the data cannot check. If selection is known to be independent of the
latent outcome within strata (missing at random given \(X\)), then the
stayers' latent distribution within each stratum equals
\(F_2(\cdot \mid x)\), and Corollary~\ref{cor-noattrition} applied
within a stratum identifies the map by quantile alignment on the
stratum's latent-outcome support (the map being common, on the union of
these supports). The data cannot reveal that this is so: the observables
generated under selection on \(X\) alone are also generated by
outcome-dependent selection within strata, and without imposing the
assumption the covariate shrinks the identified set only through the
intersection. A covariate can tighten the set even when it predicts
retention and nothing else. Take two equally prevalent strata with
identical Laplace distributions of rate one,
\(F_2(\cdot \mid x) = Q(\cdot \mid x)\), the true map the identity, and
retention \(.8\) and \(.2\). The pooled retention rate is \(.5\) and the
pooled location set is \([-\log 2, \log 2]\)
(Corollary~\ref{cor-independent}(ii)); the stratum sets are
\([-\log 1.25, \log 1.25]\) and \([-\log 5, \log 5]\), whose
intersection \([-\log 1.25, \log 1.25]\) is strictly smaller, because
shifts with \(\log 1.25 < |\varepsilon| \le \log 2\) satisfy the pooled
constraint but violate the high-retention stratum's.

\section{Two Refreshments: Point Identification of the
Path}\label{sec-tworef}

A single refreshment sample bounds a single contrast. A panel with
several refreshment samples observes several tenure configurations in
common periods, and whether a contrast is identified then depends on the
rank of a linear system, which we now set out.

\subsection{The moment system}\label{sec-momentsystem}

Entry cohorts \(e_1 < e_2 < e_3\) (JLPS: 2007, 2011, 2019), tenure
\(s = t - e + 1\). A \emph{refreshment episode} \(r\) is a wave at which
a new cohort enters alongside surviving older cohorts: episode 1
(\(t = 2011\)) generates tenures \(\{5, 1\}\), episode 2 (\(t = 2019\))
generates \(\{13, 9, 1\}\). Outcomes are standardized item by item, each
on one scale fixed across episodes and cohorts (for example, divided by
a single reference standard deviation); for a target item, write the
survivor mean at episode \(r\) and tenure \(s\) as \[
m_r(s) \;=\; \mu_r \;+\; g(e) \;+\; \tau(s) \;+\; \beta_r(s),
\] where \(\mu_r\) is the period effect, \(g(e)\) the entry-cohort
effect, \(\tau(s)\) the conditioning path with the entry normalization
\(\tau(1) = 0\), and \(\beta_r(s)\) the \emph{selection functional}: the
standardized difference between the surviving and full entry cohort at
tenure \(s\), with \(\beta_r(1) = 0\) because a refreshment cohort at
entry is an unselected draw. Within an episode, period effects cancel in
contrasts and cohort is a function of tenure, so the estimable moments
are the \emph{entry contrasts} \[
C_r(s) \;=\; m_r(s) - m_r(1) \;=\; \Delta g_r(s) \;+\; \tau(s) \;+\; \beta_r(s),
\] one per experienced tenure per episode: \(C_1(5)\), \(C_2(9)\),
\(C_2(13)\). Negative-control items (time-invariant facts reported once
at entry, so that \(\tau^{N} \equiv 0\) by construction, and reported
before any panel selection, so that their survivor means at \(t\) equal
the entry values of survivors) yield the parallel moments \[
N_r(s) \;=\; \Delta g^{N}_r(s) \;+\; \beta^{N}_r(s).
\]

Two transport assumptions link the systems, both adapted from the
recovery conditions of the companion identification paper (its
sampling-equivalence assumption M2 and the transport condition (iv) of
its Proposition 2, which concerns cohort effects and not selection), and
both partially testable here:

\textbf{(A1) Cohort overlap and comparability.} Each refreshment targets
the same birth cohorts as the incumbents on a common support (in the
Japanese panel, the age-matched refreshment of 2011), and on that
support entrants and incumbents are drawn from the same population up to
sampling, so that \(\Delta g_r(s) = \Delta g^{N}_r(s) = 0\) for durable
items. Overlap of birth-year support is necessary for the second clause
but does not imply it: comparability is a separate assumption about
sampling frames, nonresponse at entry, mode, and the coding of items.
(Off support, \(\Delta g\) is bounded rather than zero and the
conclusions become interval-valued; see
Corollary~\ref{cor-generational}.) \emph{Information set.} Comparability
can be asserted for the marginal laws of the target and of each negative
control separately, or jointly for the target and a discrete vector
\(X\) of time-invariant entry variables observed both for the incumbent
cohorts at their entry and in the refreshment sample (the
negative-control battery, possibly with other discrete entry
covariates). We call these the \emph{marginal} and the \emph{joint}
versions of (A1). Under the joint version every cohort and the
refreshment sample share one population law of \((X, Y^{*})\) on the
common support. The entry distribution of \(X\) is then the same in all
of them, which is an observable implication of (A1), and the refreshment
sample reveals the conditional law \(F^{(r)}(\cdot \mid x)\) of the
latent target given \(X = x\) and not only its margin. The incumbents'
reports and response rates are observed within each stratum of \(X\), so
Proposition~\ref{prp-covariates} applies at every episode. The two
versions give the same result when the loading of (A2) is known and can
give different identified sets when it is only bounded
(Theorem~\ref{thm-tworef}(a)).

\textbf{(A2) Selection transport.} For target item \(j\),
\(\beta^{j}_r(s) = \Gamma_j\, \beta^{N}_r(s)\) with loading \(\Gamma_j\)
known or bounded: the item-level selection differential is the common
survivor-composition shift measured on the negative controls, scaled by
the item's loading on it. (\(\Gamma_j = 1\) for all standardized items
is a leading case; a bounded \(\Gamma_j \in [\Gamma_L, \Gamma_U]\) gives
intervals.) (A2) restricts the joint selection of the target item and
the battery; it is not implied by (A1), and a small negative-control
contrast is not sufficient evidence that the target item's selection
bias is small.

For episode \(r\), experienced tenure \(s\) and a value \(x\) of \(X\)
with positive probability, let \(p^{(r)}(s \mid x)\) and
\(Q^{(r)}_s(\cdot \mid x)\) be the response rate and the reported
distribution of the cohort's respondents at tenure \(s\) within
\(X = x\). These are all respondents at that tenure, whatever their
response pattern, as in Theorem~\ref{thm-multiwave}; under monotone
attrition they are the survivors of the moment system. Let
\(\mathcal{H}_r(s \mid x)\) be the location set of
Theorem~\ref{thm-bounds} computed from \(p^{(r)}(s \mid x)\),
\(Q^{(r)}_s(\cdot \mid x)\) and \(F^{(r)}(\cdot \mid x)\) and expressed
in the units of the moment system. A stratum with
\(p^{(r)}(s \mid x) = 0\) imposes no constraint. Write
\(\mathcal{H}_r(s \mid X) := \bigcap_x \mathcal{H}_r(s \mid x)\). Under
the marginal version of (A1), \(X\) is empty and
\(\mathcal{H}_r(s \mid X) = \mathcal{H}_r(s)\), the set computed from
the pooled response rate, reported distribution and refreshment
distribution.

\subsection{The theorem}\label{sec-thm3}

\begin{theorem}[Point identification with two refreshments: the role of
the negative-control
battery]\protect\hypertarget{thm-tworef}{}\label{thm-tworef}

Under (A1)--(A2), in a design with three entry cohorts and two
age-matched refreshment episodes that generate experienced-cohort
contrasts at the tenure pairs \((r, s) \in \{(1,5), (2,9), (2,13)\}\)
(the episode structure of the Japanese panel, whose second refreshment
is however generational; see Corollary~\ref{cor-generational} for that
case), and, for the set statements in (a) and the attainability
statement in (b), with conditioning at each tenure \(s\) given by the
same location map \(c_s(y) = y + \tau(s)\) for every incumbent, whatever
the value of \(X\), on the fixed common scale of the target:

\begin{enumerate}
\def\labelenumi{(\alph{enumi})}
\item
  \emph{(Sufficiency.)} If every episode that contributes an
  experienced-cohort contrast carries a negative-control battery and the
  loading \(\Gamma\) is known, then \(\tau(s)\) is point-identified at
  every tenure \(s\) at which an experienced cohort is observed at a
  refreshment episode: \[
  \tau(s) \;=\; C_r(s) \;-\; \Gamma\, N_r(s), \qquad (r, s) \in \{(1,5), (2,9), (2,13)\}.
  \] No stationarity of selection across episodes is required: each
  episode's selection functional is measured within that episode. If the
  loading is known only to lie in an interval,
  \(\Gamma \in [\Gamma_L, \Gamma_U]\), then, because the loading is
  common to all pairs \((r, s)\), the pairs restrict it together: the
  identified set for the loading is \[
  \Gamma^{*} \;:=\; \bigl\{\, \Gamma \in [\Gamma_L, \Gamma_U] :\; C_r(s) - \Gamma N_r(s) \in \mathcal{H}_r(s \mid X) \text{ for every observed } (r, s) \,\bigr\},
  \] with \(X\) the entry variables over which (A1) is asserted and
  \(\mathcal{H}_r(s \mid X)\) the location set defined in
  Section~\ref{sec-momentsystem} (if a tenure \(s\) were observed at two
  episodes, \(\Gamma\) would also have to make
  \(C_r(s) - \Gamma N_r(s)\) equal across them, and the relevant set
  would be the intersection of Corollary~\ref{cor-repeated}). The
  identified set for the vector \((\tau(s))_s\) is
  \(\{(C_r(s) - \Gamma N_r(s))_{(r,s)} : \Gamma \in \Gamma^{*}\}\).
  Because \(\mathcal{H}_r(s \mid X) \subseteq \mathcal{H}_r(s)\)
  (Proposition~\ref{prp-covariates}), the set computed from the margins
  alone is sharp under the marginal version and only an outer bound
  under the joint version, where it can be strictly larger. Intersecting
  each pair's interval
  \(\{C_r(s) - \Gamma N_r(s) : \Gamma \in [\Gamma_L, \Gamma_U]\}\), of
  width \((\Gamma_U - \Gamma_L)\,|N_r(s)|\), with its own
  \(\mathcal{H}_r(s \mid X)\) gives a further outer bound, which can
  also be strictly larger. The set is a point when \(N_r(s) = 0\) at
  every pair or when the loading is known. The subtraction identity uses
  only the moment decomposition; the set statements also use the
  location map.
\item
  \emph{(Necessity.)} Without a negative-control battery at episode
  \(r\) (and without stationarity), \(\tau(s)\) and \(\beta_r(s)\) enter
  the observable moments of that episode only through their sum: for any
  \(\epsilon\), setting \(\tilde\tau(s) = \tau(s) + \epsilon\) and
  \(\tilde\beta_r(s) = \beta_r(s) - \epsilon\) reproduces all of them.
  The perturbation is attainable by an admissible selection process
  whenever the corresponding location shift \(h + \epsilon\) lies in the
  location set \(\mathcal{H}_r(s \mid X)\) at every episode at which
  tenure \(s\) is observed and none of those episodes carries a battery
  used under (A2); here \(X\) collects the entry variables over which
  (A1) is asserted, and is empty under its marginal version. If those
  sets contain \(h + \epsilon_0\) for some \(\epsilon_0 \ne 0\) and are
  intervals (as when every refreshment density \(f^{(r)}(\cdot \mid x)\)
  is unimodal), every \(\epsilon\) between \(0\) and \(\epsilon_0\) is
  attainable; a one-sided set permits perturbations of one sign only.
  For every attainable \(\epsilon \ne 0\), \(\tau(s)\) is not
  point-identified, and the non-identification is not a moments-only
  artefact: by Theorem~\ref{thm-multiwave} the perturbed structure can
  be chosen to reproduce the entire observed distribution of every
  cohort involved (every response pattern's joint distribution of
  reports across all waves, the entry-wave battery, and the retention
  pattern) together with the refreshment distributions, so that (A1) is
  preserved in the version asserted and no functional of the data
  distinguishes \(\tau(s)\) from \(\tau(s) + \epsilon\). The condition
  does not hold automatically. Under attrition independent of the latent
  outcome and a refreshment density with tails thinner than exponential,
  the location set is \(\{h\}\) (Corollary~\ref{cor-independent}(i));
  under outcome-dependent attrition it can also be \(\{h\}\) (the
  example following Corollary~\ref{cor-funnel}). Necessity therefore
  requires a nondegenerate location set, not merely outcome-dependent
  attrition. Cross-episode stationarity (\(\beta_1 = \beta_2 = \beta\))
  does not substitute when the episodes' tenure sets are disjoint (as in
  the Japanese panel, \(\{5\} \cap \{9, 13\} = \emptyset\)), because the
  common function \(\beta(\cdot)\) is never evaluated twice at the same
  argument, so the constraint binds nothing; when a tenure is repeated
  it adds, in the location family, a testable restriction without
  shrinking the identified set (Corollary~\ref{cor-repeated}).
  Restrictions that link cohorts or waves beyond (A1)--(A2), such as an
  additive cell-mean structure across cohorts, must be checked against
  the perturbation separately.
\item
  \emph{(Overidentification.)} With a \(K\)-item negative-control
  battery whose loadings are known, each pair \((r, s)\) yields \(K\)
  estimates of the single factor \(\beta^{N}_r(s)\), giving \(K - 1\)
  testable restrictions per pair (\(3(K-1)\) in the configuration of
  (a); \(K = 23\) in the Japanese panel), the battery-dispersion placebo
  test. Under the additional hypothesis of stationarity in tenure
  (\(\beta_r(s) = \beta(s)\) for a common \(\beta\)), no further
  restriction is testable in that tenure configuration (by the
  disjointness in (b)); a design with a repeated tenure across episodes
  (for example, refreshments eight waves apart in a panel observed at
  tenure 9 in both) adds one direct stationarity test per repeated
  tenure.
\end{enumerate}

\end{theorem}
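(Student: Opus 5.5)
Part (a) is algebra on the moment system followed by an appeal to the location sets; part (b) is a perturbation construction; part (c) is counting restrictions.

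\emph{Part (a).} Under (A1), $\Delta g_r(s) = \Delta g^N_r(s) = 0$, so $C_r(s) = \tau(s) + \beta_r(s)$ and $N_r(s) = \beta^N_r(s)$. Under (A2) with known $\Gamma$, $\beta_r(s) = \Gamma N_r(s)$. Subtracting gives the displayed formula pair by pair. Each pair uses only its own episode's moments, so no cross-episode stationarity enters.

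For bounded $\Gamma$, a value $\Gamma$ is in $\Gamma^{*}$ if and only if some structure satisfying (A1)--(A2) with that loading rationalizes the data.
\begin{itemize}
\item Necessity: for each pair, the implied shift $C_r(s) - \Gamma N_r(s)$ must lie in $\mathcal{H}_r(s \mid X)$. This is Theorem~\ref{thm-bounds} applied within each stratum and intersected by Proposition~\ref{prp-covariates}. Under the marginal version $X$ is empty.
\item Sufficiency: for $\Gamma \in \Gamma^{*}$, build each cohort by the stratified completion (Lemma~\ref{lem-seqcompletion} within each stratum, glued as in Proposition~\ref{prp-covariates}). This realizes the location map $y + \tau(s)$ at the tenures observed at refreshment episodes.
\item The main obstacle is checking that this construction also satisfies (A2) with the chosen $\Gamma$. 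The battery is reported at entry and fixed by the data, so $\beta^N_r(s) = N_r(s)$ is observed. The target's survivor latent mean is pinned at $\mathbb{E}[Y_s \mid S] - \tau(s)$, which is exactly what makes $\beta_r(s) = \Gamma N_r(s)$ hold. I would verify this identity explicitly on the common scale.
\end{itemize}
The outer-bound and strictness remarks follow from $\mathcal{H}_r(s \mid X) \subseteq \mathcal{H}_r(s)$ and from the fact that a common $\Gamma$ couples the pairs. For strictness, a two-stratum example like the Laplace one after Proposition~\ref{prp-covariates} should suffice.

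\emph{Part (b).} The moment statement is immediate, since only the sum $\tau(s) + \beta_r(s)$ appears in $C_r(s)$. For attainability, take $h + \epsilon \in \mathcal{H}_r(s \mid X)$ at every relevant episode. Then apply Theorem~\ref{thm-multiwave} stratum by stratum, with candidate map $y + h + \epsilon$ at tenure $s$ and the true maps elsewhere. Glue the strata across $X$ by Proposition~\ref{prp-covariates}. This reproduces every pattern measure, the entry battery and the refreshment laws, so (A1) holds in the asserted version. The survivor latent mean moves by $-\epsilon$, so $\tilde\beta_r(s) = \beta_r(s) - \epsilon$.

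The remaining claims in (b) are as follows.
\begin{itemize}
\item When the sets are intervals, the intermediate values of $\epsilon$ are attainable because an intersection of intervals containing both $h$ and $h + \epsilon_0$ contains everything between them. Unimodality gives intervals by Theorem~\ref{thm-bounds}.
\item The degenerate cases cite Corollary~\ref{cor-independent}(i) and the central-binding example.
\item The stationarity remark follows because the tenure sets $\{5\}$ and $\{9,13\}$ are disjoint, so equating $\beta_1(\cdot)$ and $\beta_2(\cdot)$ constrains nothing; the repeated-tenure case cites Corollary~\ref{cor-repeated}.
\end{itemize}

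\emph{Part (c).} With known loadings $\Gamma_k$, each negative control gives $N^{(k)}_r(s) = \Gamma_k \beta^N_r(s)$. These are $K$ equations in one unknown per pair, which yields $K - 1$ restrictions per pair and $3(K-1)$ in total. Stationarity adds none in this configuration, because of the disjointness shown in (b). A repeated tenure adds the single equality $C_1(s) - \Gamma N_1(s) = C_2(s) - \Gamma N_2(s)$.
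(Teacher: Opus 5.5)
Your route is the paper's: the triangular moment algebra, a stratum-wise application of Theorem~\ref{thm-multiwave} (that is, Lemma~\ref{lem-seqcompletion}) glued over $X$, used both to attain every $\Gamma \in \Gamma^{*}$ and to realize the $\epsilon$-perturbation, and one-factor counting in (c). However, the last clause of (c) is wrong as you state it. The equality $C_1(s) - \Gamma N_1(s) = C_2(s) - \Gamma N_2(s)$ is the common-$\tau(s)$ consistency condition from the parenthetical in (a). It follows from (A1)--(A2) alone and holds whether or not selection is stationary. It therefore has no power against non-stationarity and is not the ``direct stationarity test''. What $\beta_1(s) = \beta_2(s)$ adds is $C_1(s) = C_2(s)$: under (A1) both contrasts equal $\tau(s) + \beta_r(s)$ with the same $\tau(s)$ (Corollary~\ref{cor-repeated}, which you invoke yourself in (b)). Given the consistency condition, this is equivalent to $\Gamma N_1(s) = \Gamma N_2(s)$, and it is the one additional restriction per repeated tenure.

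In (a), the two strictness claims need different examples. A two-stratum example at a single pair can show that the margins-only set strictly contains the joint-version set; take a binary battery item with retention $.8$ and $.2$ in its strata and a Laplace target independent of it, so that $N \neq 0$. It cannot show that the per-pair outer bound is strictly larger. With one pair, intersecting $\{C - \Gamma N : \Gamma \in [\Gamma_L, \Gamma_U]\}$ with $\mathcal{H}_r(s \mid X)$ gives exactly the identified set. You need two pairs whose constraints on the common $\Gamma$ conflict. For example, take a rate-one Laplace episode with $C = 0$, $N = 1$ and a Gaussian episode with $C = 10$, $N = 1$, both at $p = .5$ under outcome-independent attrition, with $\Gamma \in [0, 1]$. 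The Gaussian pair forces $\Gamma = 0$, so the first pair's set shrinks from $[-\log 2, 0]$ to $\{0\}$. Finally, when you verify $\tilde\beta_r(s) = C_r(s) - \tilde\tau(s) = \Gamma N_r(s)$ under the joint version, make explicit that this needs $\Delta g_r(s) = 0$ in the constructed structure. That is, the glued law of $(X, Y^{*})$ must equal the refreshment's joint law. This holds only because the cohort's entry distribution of $X$ equals the refreshment sample's, which is an observable implication of joint (A1); if it fails, the identified set is empty.
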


\begin{proof}
\leavevmode

\begin{enumerate}
\def\labelenumi{(\alph{enumi})}
\item
  Substitute (A1) into the moment equations:
  \(C_r(s) = \tau(s) + \Gamma \beta^{N}_r(s)\) and
  \(N_r(s) = \beta^{N}_r(s)\). The system, stacked over
  \((r,s) \in \{(1,5),(2,9),(2,13)\}\), is triangular: each \(N_r(s)\)
  identifies its \(\beta^{N}_r(s)\), and each \(C_r(s)\) then identifies
  its \(\tau(s)\) by subtraction; the coefficient matrix is the identity
  on \((\beta^{N}_1(5), \beta^{N}_2(9), \beta^{N}_2(13))\) stacked with
  a unit-diagonal block on \((\tau(5), \tau(9), \tau(13))\). With
  \(\Gamma\) unknown, every \(\Gamma\) in \(\Gamma^{*}\) is attained:
  fix it and set \(\tilde\tau(s) = C_r(s) - \Gamma N_r(s)\), which lies
  in \(\mathcal{H}_r(s \mid x)\) for every \(x\) by definition of
  \(\Gamma^{*}\). For each cohort and each stratum \(X = x\),
  Theorem~\ref{thm-multiwave} applied within the stratum, with \(J^{*}\)
  the tenures at which that cohort meets a refreshment sample, the
  refreshment distributions \(F^{(r)}(\cdot \mid x)\), the entry
  coordinate carrying the battery, and at each such tenure the location
  map shifted to \(\tilde\tau(s)\) (the same map in every stratum),
  gives a structure that reproduces the cohort's observables within the
  stratum. Mixing the strata with the cohort's observed entry
  distribution of \(X\) (complete conditionally, then mix, as in the
  proof of Proposition~\ref{prp-covariates}) reproduces all of the
  cohort's observables, including their joint distribution with \(X\).
  Under the joint version of (A1) that entry distribution is the
  refreshment sample's, so the mixed population law of \((X, Y^{*})\) at
  each episode is the refreshment's joint law: (A1) holds, the
  incumbents' population mean of the target equals the entrants' mean
  \(m_r(1)\), and the survivors' latent mean at each such tenure is
  \(\mathbb{E}[Y \mid S] - \tilde\tau(s)\); hence
  \(\tilde\beta_r(s) = C_r(s) - \tilde\tau(s) = \Gamma N_r(s)\), and
  (A2) holds with loading \(\Gamma\). (If the entry distributions of
  \(X\) differ, the joint version of (A1) is refuted by the data and the
  identified set is empty.) Cohorts are sampled independently, so the
  per-cohort structures combine. Conversely, a \(\Gamma\) outside
  \(\Gamma^{*}\) implies, at some pair and some stratum, a shift outside
  that stratum's location set, which Theorem~\ref{thm-bounds} applied
  within the stratum excludes. The two outer bounds:
  \(\mathcal{H}_r(s \mid X) \subseteq \mathcal{H}_r(s)\) by the
  inclusion in Proposition~\ref{prp-covariates}, whose proof applies
  because the strata's sub-measures add up to the pooled ones, and the
  per-pair bound drops the requirement that one \(\Gamma\) serve all
  pairs; blocks 12 and 11 of the deterministic checks in
  Section~\ref{sec-ademp} give cases in which each is strictly larger.
\item
  The observable moments of episode \(r\) involve \((\tau, \beta_r)\)
  only through \(C_r(s) = \tau(s) + \beta_r(s)\) (given (A1)), so the
  map \((\tau, \beta_r) \mapsto\) moments factors through the sum, and
  the \(\epsilon\)-perturbation leaves the sum invariant. If the
  location shift \(h + \epsilon\) lies in \(\mathcal{H}_r(s \mid X)\),
  that is, in \(\mathcal{H}_r(s \mid x)\) for every \(x\), at every
  episode observing tenure \(s\), then for each cohort observed at
  tenure \(s\) at a refreshment episode, Theorem~\ref{thm-multiwave}
  applied within each stratum of \(X\) and mixed as in (a), with
  \(J^{*}\) its refreshment tenures, the candidate map at \(s\) shifted
  by \(\epsilon\) and the true maps at its other refreshment tenures,
  and with the entry coordinate carrying the negative-control battery,
  produces a structure that reproduces every observed pattern
  sub-measure, the entry marginal (hence every negative-control moment
  and every entry-wave selection differential), and the refreshment
  marginals; under it the survivors' latent mean at that episode moves
  by \(-\epsilon\), so \(\tilde\beta_r(s) = \beta_r(s) - \epsilon\) and
  \(\tilde\tau(s) = \tau(s) + \epsilon\) while \(C_r(s)\) and \(N_r(s)\)
  are unchanged. Under unimodality the stratum location sets are
  intervals containing \(h\) (Theorem~\ref{thm-bounds}), and so is their
  intersection, which gives the statement about \(\epsilon\) between
  \(0\) and \(\epsilon_0\). For the stationarity clause: the hypothesis
  \(\beta_1 = \beta_2\) constrains the function \(\beta(\cdot)\) only at
  arguments where both episodes evaluate it; with disjoint tenure sets
  the constraint set is vacuous, so the perturbation above again
  reproduces all moments; with a repeated tenure,
  Corollary~\ref{cor-repeated}.
\item
  Under (A2) the \(K\) negative-control moments at \((r,s)\) are
  \(N^{(k)}_r(s) = \lambda_k \beta^{N}_r(s)\), \(k = 1, \dots, K\), with
  known loadings \(\lambda_k\): a one-factor system with \(K\)
  observations and one unknown, hence \(K - 1\) restrictions per
  \((r,s)\), testable as dispersion of the battery around its common
  factor. The final clause follows from the argument in (b). \(\square\)
\end{enumerate}

\end{proof}

The distinction between the marginal and the joint version of (A1) in
(a) can matter only when the battery predicts retention
(\(N_r(s) \ne 0\) at some pair), and it can already arise at a single
pair. Let a binary battery item \(X = \pm 1\) take both values with
probability one half, let the target \(Y^{*}\) be independent of \(X\)
with a Laplace distribution of scale \(b = 1/\sqrt 2\) (both variables
have variance one), let the map be the identity, and let retention be
\(.8\) at \(X = 1\) and \(.2\) at \(X = -1\), independently of \(Y^{*}\)
within strata. Then \(p = .5\), \(C = 0\), \(N = .6\), and the true
loading is zero. With \(\Gamma \in [0, 1]\), the set computed from the
margins admits \(\Gamma \le b\log 2/.6 = .817\): at \(\Gamma = .5\) the
implied shift \(-.3\) satisfies the pooled constraint,
\(.5\,e^{.3/b} = .764 \le 1\), but violates that of the high-retention
stratum, \(.8\,e^{.3/b} = 1.223\). Under the joint version of (A1) the
identified set is \([0, b\log 1.25/.6] = [0, .263]\) (block 12 of the
deterministic checks in Section~\ref{sec-ademp}). A completion that uses
only the margins reproduces both margins but not their association: in
the high-retention stratum it would need a latent law of the target
other than the refreshment's. The joint version is the natural reading
of (A1) when the battery consists of time-invariant facts asked of both
samples; the marginal version is appropriate when only the target's
margin is claimed to be comparable. Conditioning on a 23-item battery is
impractical, but any coarsening \(\varphi(X)\) gives a valid set between
the two, because
\(\mathcal{H}_r(s \mid X) \subseteq \mathcal{H}_r(s \mid \varphi(X)) \subseteq \mathcal{H}_r(s)\);
the set computed from \(\varphi(X)\) is sharp when (A1) is asserted over
\(\varphi(X)\) only.

\begin{corollary}[Generational
refreshment]\protect\hypertarget{cor-generational}{}\label{cor-generational}

Suppose episode \(r\)'s refreshment cohort shares no birth-cohort
support with the incumbents, so (A1) fails between entrants and
incumbents, but two or more incumbent cohorts at episode \(r\) do
overlap. Then: (i) level contrasts against the entrants identify only
the composite \(\Delta g_r(s) + \tau(s) + \beta_r(s)\); after
negative-control correction under (A2) with loading \(\Gamma\),
\(C_r(s) - \Gamma N_r(s) = \tau(s) + \{\Delta g_r(s) - \Gamma\,\Delta g^{N}_r(s)\}\),
so \(\tau(s)\) is interval-identified, with width \(2M\), under a bound
\(|\Delta g_r(s) - \Gamma\,\Delta g^{N}_r(s)| \le M\) on the residual
cohort effect after transport; a bound on \(|\Delta g_r(s)|\) alone does
not suffice when \(\Delta g^{N}_r(s)\) is unrestricted; (ii) the
increment between incumbent tenures, \(\tau(s) - \tau(s')\) for
incumbent cohorts at tenures \(s, s'\), remains point-identified on the
incumbents' common support under (A1) between the two incumbent cohorts
and (A2) with a known loading for their contrast (a bounded loading
gives an interval), by applying Theorem~\ref{thm-tworef}'s argument to
the incumbent pair, with the negative-control contrast between the same
two cohorts supplying \(\beta^{N}_r(s) - \beta^{N}_r(s')\).

\end{corollary}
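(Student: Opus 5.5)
The proof is a substitution into the moment decomposition of Section~\ref{sec-momentsystem}, followed by the subtraction used in Theorem~\ref{thm-tworef}(a). The only change is that the cohort term no longer vanishes.

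\emph{Part (i).} At episode \(r\) the entrants belong to a birth cohort disjoint from the incumbents'. The clause of (A1) that sets \(\Delta g_r(s)=\Delta g^{N}_r(s)=0\) therefore fails, and I keep both terms. The decompositions still hold:
\[
C_r(s)=\Delta g_r(s)+\tau(s)+\beta_r(s), \qquad N_r(s)=\Delta g^{N}_r(s)+\beta^{N}_r(s).
\]
They use only the normalizations \(\tau(1)=0\) and \(\beta_r(1)=0\), and the second of these holds because the new cohort is an unselected draw at entry. Without further restrictions, the level contrast identifies only the composite \(\Delta g_r(s)+\tau(s)+\beta_r(s)\). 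Now impose (A2), \(\beta_r(s)=\Gamma\beta^{N}_r(s)\), and eliminate \(\beta^{N}_r(s)=N_r(s)-\Delta g^{N}_r(s)\). This gives
\[
C_r(s)-\Gamma N_r(s)=\tau(s)+\{\Delta g_r(s)-\Gamma\Delta g^{N}_r(s)\}.
\]
Under the bound \(M\) on the bracket, \(\tau(s)\) lies in \([C_r(s)-\Gamma N_r(s)-M,\ C_r(s)-\Gamma N_r(s)+M]\), an interval of width \(2M\).

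For sharpness at the moment level, take any value of the bracket in \([-M,M]\). Put \(\Delta g^{N}_r(s)=0\), set \(\Delta g_r(s)\) equal to that value, let \(\beta^{N}_r(s)=N_r(s)\) and \(\beta_r(s)=\Gamma N_r(s)\), and solve for \(\tau(s)\). This reproduces \(C_r(s)\) and \(N_r(s)\) and satisfies (A2).

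To show that a bound on \(|\Delta g_r(s)|\) alone does not suffice, fix \(\Delta g_r(s)\) and vary \(\Delta g^{N}_r(s)\) freely. Adjust \(\beta^{N}_r(s)=N_r(s)-\Delta g^{N}_r(s)\) so that \(N_r(s)\) is unchanged, and set \(\beta_r(s)=\Gamma\beta^{N}_r(s)\). Then \(\tau(s)=C_r(s)-\Delta g_r(s)-\Gamma\beta^{N}_r(s)\) ranges over all of \(\mathbb{R}\) whenever \(\Gamma\ne0\).

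\emph{Part (ii).} I take two incumbent cohorts \(e<e'\) observed at episode \(r\) at tenures \(s>s'\), and I use the cohort at tenure \(s'\) as the reference in place of the entrants. Within the episode the period effect cancels. Define the incumbent contrasts
\[
D_r:=m_r(s)-m_r(s')=\{g(e)-g(e')\}+\tau(s)-\tau(s')+\beta_r(s)-\beta_r(s'),
\]
and let \(D^{N}_r\) be the analogous contrast on the negative controls. (A1) is asserted between these two cohorts on their common support, which makes the cohort terms zero. (A2) with a known loading gives \(\beta_r(s)-\beta_r(s')=\Gamma\{\beta^{N}_r(s)-\beta^{N}_r(s')\}\), and the control contrast \(D^{N}_r\) identifies the bracket. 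The increment is then
\[
\tau(s)-\tau(s')=D_r-\Gamma D^{N}_r,
\]
which is the triangular subtraction of Theorem~\ref{thm-tworef}(a) applied to the incumbent pair. If \(\Gamma\in[\Gamma_L,\Gamma_U]\), the right-hand side sweeps an interval of width \((\Gamma_U-\Gamma_L)|D^{N}_r|\).

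\emph{Main obstacle.} I expect the only delicate point in (ii) to be that the reference cohort at tenure \(s'\) is itself selected and conditioned. So \(\beta_r(s')\ne0\) and \(\tau(s')\ne0\) in general, and the statement can identify only the increment, not the levels. The remaining check is that the negative controls of both cohorts are entry-wave reports of their respective survivors. Then \(D^{N}_r\) equals \(\beta^{N}_r(s)-\beta^{N}_r(s')\) once the cohort term vanishes, because the full-cohort entry means coincide under (A1) between the two incumbent cohorts.
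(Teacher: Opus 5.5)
Your proposal is correct and follows the paper's own argument. In (i) you keep the cohort terms in the moment decomposition, substitute (A2) to obtain the corrected display and the interval of width \(2M\), and observe that \(\Gamma\,\Delta g^{N}_r(s)\) is free. In (ii) you apply the subtraction of Theorem~\ref{thm-tworef} to the incumbent pair, with the same-pair negative-control contrast. Your moment-level attainment construction and the explicit \(\Gamma \neq 0\) qualification in the non-sufficiency step are small additions that the paper leaves implicit.
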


\begin{proof}
\leavevmode

\begin{enumerate}
\def\labelenumi{(\roman{enumi})}
\tightlist
\item
  With disjoint support, \(\Delta g_r(s)\) cannot be zeroed by (A1) and
  enters every entrant contrast, and \(\Delta g^{N}_r(s)\) enters every
  negative-control contrast; substituting
  \(\beta_r(s) = \Gamma\beta^{N}_r(s)\) gives the display, and the bound
  on the residual gives the interval. If only \(|\Delta g_r(s)| \le M\)
  is assumed, \(\Gamma\Delta g^{N}_r(s)\) is free and the corrected
  contrast is unrestricted. (ii): restrict to the two incumbent cohorts
  (entered at \(e_1, e_2\), observed at episode \(r\) with tenures
  \(s > s'\)) on their overlapping birth years; under (A1) within this
  pair the target contrast is
  \([\tau(s) - \tau(s')] + [\beta_r(s) - \beta_r(s')]\) and the
  same-pair negative-control contrast is
  \(\beta^{N}_r(s) - \beta^{N}_r(s')\); subtract under (A2). \(\square\)
\end{enumerate}

\end{proof}

Refreshments in practice are of two kinds. An \emph{age-matched}
refreshment replenishes the same birth cohorts (attrition repair); a
\emph{generational} refreshment extends the panel to a new cohort
(coverage repair). Chadi (\citeproc{ref-chadi2021}{2021}) distinguishes
a different pair of entrant types (entrants induced by the data
collector and ``natural'' entrants into a household panel) and uses
their coexistence to separate attrition bias from
participation-experience effects; the distinction here is between
entrants whose birth years overlap the incumbents' and entrants from a
new birth cohort. The Japanese panel's 2011 episode is of the first kind
and its 2019 episode of the second: the 2019 entrants' birth years
(1987--1998) do not intersect the incumbents' (1966--1986). Under
(A1)--(A2) the panel therefore identifies \(\tau(5)\) (episode 1,
levels) and \(\tau(13) - \tau(9)\) (episode 2, increments), while
\(\tau(13)\)'s level rests on a bound on the residual cohort effect. For
a panel planning a refreshment, the choice between the two kinds is
therefore also a choice of what becomes identified.

Part (a) shows that point identification comes from the negative-control
battery together with (A2), and not from the second refreshment itself.
The second refreshment extends the set of tenures at which \(\tau\) is
recovered (from \(\{5\}\) to \(\{5, 9, 13\}\) when the second
refreshment is age-matched, and to the increment \(\tau(13) - \tau(9)\)
when it is generational, as in the Japanese panel) and adds
overidentifying restrictions at the newly observed pairs. Part (b) shows
that stationary selection and two refreshments do not identify the path
by differencing. In designs whose refreshment episodes observe disjoint
tenure sets the stationarity restriction binds nothing; in designs that
repeat a tenure it adds a testable restriction and does not shrink the
set. This is the selection-side counterpart of the non-identification
result in the companion identification paper (its Theorem 1): there,
tenure--period--cohort collinearity; here, the sum structure
\(\tau + \beta_r\). Part (c) shows that under (A2) a placebo battery is
an identifying instrument and not only a robustness check. The 23-item
battery of the Japanese panel (Section~\ref{sec-implementation},
Table~\ref{tbl-ncbattery}) is of this kind; its item contrasts at the
2019 episode include one rejection by a per-item test, an item on a
telephone in the childhood home. One reading is that telephone
availability varied with the era of childhood more finely than the
three-year birth bands absorb. Another is a difference in how the entry
questionnaires worded the item, which (A1) assumes away. The dispersion
test of (c) is designed to expose an item of the first kind, which is
also a reason to prefer facts whose distribution does not move with the
era of childhood.

\section{Constructive Designs: Restrictions That Identify the Survivors'
Mean Shift}\label{sec-constructive}

Theorem~\ref{thm-sharp} shows what a refreshment sample identifies
without restrictions on attrition, and Theorem~\ref{thm-tworef} shows
what a second refreshment and a negative-control battery add. Applied
work lies between these two cases: a single refreshment sample is
combined with a restriction on the attrition process, often stated
informally, as in the survival-matching design of Halpern-Manners,
Warren, and Torche (\citeproc{ref-halpernmanners2017}{2017}). In this
section we state such designs as restrictions within the theory. Each
design is a restriction on the selection functional that identifies the
survivors' mean shift, and each is biased in a characteristic way when
its restriction fails. Because the designs rest on different
restrictions, their disagreements can be used to check whether the
restrictions are compatible. Table~\ref{tbl-prior}, at the end of the
section, collects the restrictions stated in prior comparisons of
experienced and inexperienced respondents and their counterparts here.
The designs are stated in the tenure notation used throughout, with the
dose \(k\) of prior interviews written as \(k = s - 1\).

\subsection{Multi-wave setting and the selection
functional}\label{sec-multiwave}

Specialize the multiwave setting of Section~\ref{sec-mw-setting} to the
cohort structure of the moment system (Section~\ref{sec-momentsystem})
under monotone attrition. A continuing cohort enters at wave \(e\) and
is compared, at wave \(t > e\), with a fresh cohort entering at \(t\);
the continuing cohort's tenure is \(s = t - e + 1\) (dose
\(k = s - 1\)). Write \(R_{iw} \in \{0,1\}\) for a completed interview
at wave \(w\) and \(S_{i,a:b} = \prod_{w=a}^{b} R_{iw}\) for survival
through waves \(a, \dots, b\). The decision governing \(R_{i,w+1}\) is
taken at wave \(w\), after the interview at \(w\), and may depend on the
transient component \(\varepsilon_{iw}\) of that wave's latent response.
Hence \(S_{i,e:t} = 1\) reflects decisions taken at waves
\(e, \dots, t-1\) and, under the independent-innovation model (M) of
Proposition~\ref{prp-fail}, is not selected on \(\varepsilon_{it}\). A
fresh cohort's \(S_{i,t:t+k} = 1\) reflects decisions at
\(t, \dots, t+k-1\) and is selected on \(\varepsilon_{it}\). This is the
timing implemented in the simulation code and the package. The latent
(unconditioned) response at \(t\) is \(Y^{*}_{it}\); the continuing
survivors report \(Y_{it} = c(Y^{*}_{it})\), the fresh cohort reports
\(Y^{*}_{it}\). We write
\(h^{S} := \mathbb{E}[Y_{t} - Y^{*}_{t} \mid S_{e:t} = 1]\) for the
survivors' mean conditioning effect, which is defined for any
conditioning mechanism and equals \(h\) in the location-shift case
\(c(y) = y + h\).

Every design must identify the \emph{selection functional}
\begin{equation}\phantomsection\label{eq-selfun}{
\delta_e(w) \;:=\; \mathbb{E}\big[Y^{*}_{w} \mid S_{e:t} = 1,\ \text{cohort } e\big] \;-\; \mathbb{E}\big[Y^{*}_{w} \mid \text{cohort } e\big], \qquad w \in \{e, \dots, t\},
}\end{equation} the difference between eventual survivors and the whole
entry cohort on the latent response at wave \(w\). At \(w = t\) it is
the raw-scale version of \(\beta_r(s)\) in
Section~\ref{sec-momentsystem}; at \(w = e\) it is observable, because
every entrant reports at entry and entry reports are unconditioned. The
naive contrast between continuing survivors and the fresh cohort
decomposes, under refreshment validity, as
\begin{equation}\phantomsection\label{eq-naive}{
C \;:=\; \mathbb{E}[Y_t \mid S_{e:t}=1] - \mathbb{E}[Y^{*}_t \mid \text{fresh}] \;=\; h^{S} + \delta_e(t),
}\end{equation} so that \(h^{S}\) is identified if and only if
\(\delta_e(t)\) is. By Theorem~\ref{thm-sharp}, the data at \(t\) bound
\(\delta_e(t)\) but do not identify it, and by
Theorem~\ref{thm-multiwave} the cohort's own earlier waves do not
identify it either. Every design below therefore rests on a restriction
that links \(\delta_e(t)\) to something observed.

We use the following restrictions. \textbf{B1} is refreshment validity
together with cohort overlap (A1): the fresh cohort's latent
distribution at \(t\) equals the continuing cohort's population latent
distribution at \(t\) on a common support.

\begin{itemize}
\tightlist
\item
  \textbf{B2 (survival-matched attrition exchangeability).}
  \(\mathbb{E}[Y^{*}_t \mid S_{e:t}=1,\ \text{cont}] = \mathbb{E}[Y^{*}_t \mid S_{t:t+k}=1,\ \text{fresh}]\):
  fresh respondents who go on to survive the same number of waves as the
  continuing survivors have the same latent mean.
\item
  \textbf{B3 (stationary, symmetric attrition).}
  \(\mathbb{E}[Y^{*}_t \mid S_{e:t+1}=1,\ \text{cont}] = \mathbb{E}[Y^{*}_t \mid S_{t:t+k+1}=1,\ \text{fresh}]\):
  when both arms are selected by exactly \(k+1\) response decisions, one
  of them taken at \(t\), the selected latent means coincide. The
  stationary case of model (M) in Proposition~\ref{prp-fail}(i) (a
  common loading, traits equally distributed in the two cohorts,
  innovations independent and identically distributed across waves and
  independent of the traits, and a common continuation probability
  \(\pi(A_i, \varepsilon_{iw})\) at every wave of both cohorts with
  decisions conditionally independent across waves) provides sufficient
  primitive conditions for B3.
\item
  \textbf{B4 (time-invariant selection differential).}
  \(\delta_e(t) = \delta_e(e)\): the survivors' advantage on the latent
  response is the same at the comparison wave as at entry. This is (A2)
  of Section~\ref{sec-momentsystem} with the negative control replaced
  by the item's own entry-wave report and loading \(\Gamma = 1\). It is
  a parallel-trends restriction on the unconditioned mean between
  survivors and the full cohort. For a binary outcome, it is Assumption
  3\(^{\mathrm{Alt}}\) of Das, Toepoel, and van Soest
  (\citeproc{ref-dasetal2011}{2011}, online Appendix 1, p.~27).
\end{itemize}

B2--B4 are restrictions on means of the latent outcome; they identify
the mean shift \(h^{S}\) (or \(h^{S+}\)), and they identify neither the
conditioning map \(c\) itself nor the selection process. The
distributional versions of survival matching and symmetric matching
require the corresponding restrictions to hold in distribution, which is
stronger than B2 or B3 and which we do not test. In those versions the
whole latent distribution of the matched fresh arm is taken as that of
the survivors, so that \(c\) is recovered on that distribution's support
by the quantile alignment of Corollary~\ref{cor-noattrition}. No mean
restriction determines a selection function: many selection processes
are compatible with B2, B3, or B4, and for this reason the designs are
compared against the identified set of Theorem~\ref{thm-sharp} and not
against any particular selection model.

B2--B4 are alternative restrictions and are not nested. B2 needs \(k\)
waves of follow-up on the fresh cohort; B3 needs \(k+1\); B4 needs none,
but needs the item at the cohort's entry wave with the same coding.

\subsection{Three designs and what each identifies}\label{sec-designs}

\begin{proposition}[Survival
matching]\protect\hypertarget{prp-sm}{}\label{prp-sm}

Under B1 and B2, \(h^{S}\) is point identified:
\(h^{S} = \mathbb{E}[Y_t \mid S_{e:t}=1,\ \text{cont}] - \mathbb{E}[Y^{*}_t \mid S_{t:t+k}=1,\ \text{fresh}]\).
If B2 holds in distribution (the two conditional distributions of
\(Y^{*}_t\) coincide), the map \(c\) is identified on the survivor
latent-outcome support in the model augmented by distributional B2,
\(c = Q^{-1} \circ F_{Y^{*}_t \mid S_{t:t+k}=1,\ \text{fresh}}\) almost
everywhere there and hence, by continuity, on all of it; the
unrestricted set \(\mathcal{C}_{ID}\) of Theorem~\ref{thm-sharp} is
unchanged, distributional B2 being an additional restriction that
selects, when it is compatible with the data, the members of
\(\mathcal{C}_{ID}\) that equal this map on that support.

\end{proposition}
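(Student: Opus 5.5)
The mean statement follows from the naive decomposition in Equation~\ref{eq-naive} together with B2. By definition of \(h^{S}\),
\[
\mathbb{E}[Y^{*}_t \mid S_{e:t}=1,\ \text{cont}] \;=\; \mathbb{E}[Y_t \mid S_{e:t}=1,\ \text{cont}] - h^{S}.
\]
The left side is the survivors' latent mean; it is the only unobserved quantity in the contrast. B2 equates it with \(\mathbb{E}[Y^{*}_t \mid S_{t:t+k}=1,\ \text{fresh}]\). That quantity is observed, because fresh respondents report \(Y^{*}_t\) at \(t\) without conditioning (B1, refreshment validity) and their survival through \(t+k\) is recorded. Substituting and rearranging gives the displayed formula. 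Equivalently, B2 identifies \(\delta_e(t)\) as the observed difference between the matched fresh mean and the full fresh mean; under B1 the full fresh mean equals the continuing cohort's population latent mean at \(t\). Finiteness of the means is assumed throughout, as in Corollary~\ref{cor-meanbounds}.

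For the distributional clause I would repeat the argument of Corollary~\ref{cor-noattrition}, with the matched fresh arm in place of \(F_2\). Under the model, the survivors' reported distribution \(Q\) is the pushforward by \(c\) of their latent distribution \(L := \mathcal{L}(Y^{*}_t \mid S_{e:t}=1,\ \text{cont})\). Distributional B2 sets \(L = \mathcal{L}(Y^{*}_t \mid S_{t:t+k}=1,\ \text{fresh}) =: F_m\), which is observed. So \(c_{\#}F_m = Q\), and hence \(F_m(y) = Q(c(y))\) for every \(y\), since \(c\) is a strictly increasing homeomorphism. Take any point \(y\) of \(\operatorname{supp} F_m\) whose left and right neighbourhoods both carry mass; all but countably many points qualify, and the exceptions are \(F_m\)-null when \(F_m\) is atomless. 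At such a point \(Q\) is strictly increasing across \(c(y)\), so \(c(y) = Q^{-1}(F_m(y))\). Two continuous maps that agree \(F_m\)-almost everywhere agree on the whole support, because a full-measure set is dense in it. This gives identification on the survivor latent support.

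The final clause is largely bookkeeping, and the hard part is stating it precisely. Distributional B2 constrains only the unobserved \(L\) and the fresh arm. It imposes nothing on the observables \((F_1,F_2,G,p)\), so Theorem~\ref{thm-sharp} still describes \(\mathcal{C}_{ID}\) for the unaugmented model. In the augmented model a candidate \(c'\) must satisfy \((c'^{-1})_{\#}Q = F_m\), which means it must coincide with \(Q^{-1}\circ F_m\) on \(\operatorname{supp}F_m\). Any such \(c'\) must also lie in \(\mathcal{C}_{ID}\), because \(p\,F_m \le F_2\) is necessary by Theorem~\ref{thm-sharp}. Conversely, when \(p\,F_m \le F_2\) holds (compatibility), the completion of Lemma~\ref{lem-completion} applied to \(\mu_{c'}\) rationalizes the data with the stayers' latent law equal to \(F_m\). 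So the augmented set is exactly the members of \(\mathcal{C}_{ID}\) equal to that map on the support, and it is empty when compatibility fails.

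Two technical points need care. First, atomlessness of \(F_m\) is needed for the almost-everywhere step; I would assume it, as the paper does wherever quantile alignments are invoked. Second, the survivor latent support should be identified with \(\operatorname{supp} F_m\) through B2.
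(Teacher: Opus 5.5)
Your proof is correct and follows the paper's route: the mean formula comes from substituting the B2 proxy for the survivors' latent mean into the decomposition of Equation~\ref{eq-naive}, and the distributional clause is the quantile-alignment argument of Corollary~\ref{cor-noattrition} with the matched fresh distribution in place of \(F_2\). Your treatment of the final clause spells out what the paper leaves implicit. There you use Lemma~\ref{lem-completion} to show that the augmented set consists exactly of the members of \(\mathcal{C}_{ID}\) that agree with the aligned map on the support, and is empty when \(p\,F_m \le F_2\) fails.
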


\begin{proposition}[Symmetric survival
matching]\protect\hypertarget{prp-ssm}{}\label{prp-ssm}

Under B1 and B3, the mean shift among survivors who also respond at
\(t+1\), \(h^{S+} := \mathbb{E}[Y_t - Y^{*}_t \mid S_{e:t+1}=1]\), is
point identified:
\(h^{S+} = \mathbb{E}[Y_t \mid S_{e:t+1}=1,\ \text{cont}] - \mathbb{E}[Y^{*}_t \mid S_{t:t+k+1}=1,\ \text{fresh}]\);
with B3 in distribution, \(c\) is identified on the latent-outcome
support of that subpopulation in the model augmented by distributional
B3. The two survivor estimands differ by a composition term,
\(h^{S} = r\,h^{S+} + (1-r)\,h^{S-}\) with
\(r = \Pr(R_{t+1}=1 \mid S_{e:t}=1)\) and \(h^{S-}\) the shift among
survivors who do not respond at \(t+1\).

\end{proposition}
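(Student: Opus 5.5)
The proof follows the pattern of Proposition~\ref{prp-sm}, with the survivor population replaced by the subpopulation that also responds at \(t+1\).

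\emph{Step 1: the mean identity.} Start from the definition
\[
h^{S+} = \mathbb{E}[Y_t \mid S_{e:t+1}=1,\ \text{cont}] - \mathbb{E}[Y^{*}_t \mid S_{e:t+1}=1,\ \text{cont}].
\]
The first term is observed, because the continuing cohort's respondents at \(t\) who also respond at \(t+1\) are identified in the retention records. The second term is latent. B3 equates it with \(\mathbb{E}[Y^{*}_t \mid S_{t:t+k+1}=1,\ \text{fresh}]\). That quantity is observed because the fresh cohort reports unconditioned \(Y^{*}_t\) at its entry wave \(t\), and its survival through \(t+k+1\) is recorded. B1 is needed to give the fresh arm's latent response at \(t\) the same meaning as the continuing cohort's, so that B3 compares like quantities on the common support. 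Substituting B3 yields the displayed formula.

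\emph{Step 2: the distributional clause.} Assume B3 holds in distribution, so the law of \(Y^{*}_t\) given \(S_{e:t+1}=1\) in the continuing arm equals the observed fresh law \(F^{+}\) given \(S_{t:t+k+1}=1\). Let \(Q^{+}\) be the reported distribution of the continuing respondents with \(S_{e:t+1}=1\). Then \(c\) pushes \(F^{+}\) forward to \(Q^{+}\), which is an equality of probability measures. The argument of Corollary~\ref{cor-noattrition} then applies verbatim with \((F_2,Q)\) replaced by \((F^{+},Q^{+})\). It gives \(c = (Q^{+})^{-1}\circ F^{+}\) almost everywhere on the support of \(F^{+}\), and continuity extends this to the whole support. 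As in Proposition~\ref{prp-sm}, this identification holds in the augmented model and leaves \(\mathcal{C}_{ID}\) itself unchanged.

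\emph{Step 3: the composition identity.} Partition the survivors \(S_{e:t}=1\) by \(R_{t+1}\), using the law of total expectation for \(Y_t-Y^{*}_t\). Since \(S_{e:t+1}=S_{e:t}R_{t+1}\), this gives \(h^{S}=r\,h^{S+}+(1-r)\,h^{S-}\), with \(r=\Pr(R_{t+1}=1\mid S_{e:t}=1)\) observed. If \(r\in\{0,1\}\), the corresponding conditional term is simply dropped.

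\emph{Main obstacle.} No step is deep; the only care needed is bookkeeping. One must check that B3's conditioning events match the target subpopulation exactly, namely \(S_{e:t+1}\) for the continuing arm and not \(S_{e:t}\). One must also check that integrability holds, which is implicit in the finite-mean setting, so that the conditional expectations are defined and the decomposition in Step 3 is valid.
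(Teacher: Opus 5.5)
Your proposal is correct and follows the paper's proof. Like you, the paper substitutes the B3 proxy (the fresh $(k+1)$-wave survivors' mean, with the continuing arm restricted to $S_{e:t+1}=1$) for the survivors' latent mean, obtains the distributional clause from the quantile-alignment argument of Corollary~\ref{cor-noattrition}, and derives the composition identity from the law of total expectation over $R_{t+1}$ among survivors; the paper merely phrases Step~1 through Equation~\ref{eq-naive} rather than directly from the definition of $h^{S+}$, and the atomlessness needed in Step~2 is covered by its standing assumption for quantile alignments.
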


\begin{proposition}[Entry-wave correction (the Das--Toepoel--van Soest
functional)]\protect\hypertarget{prp-ec}{}\label{prp-ec}

Under B1 and B4, \(h^{S}\) is point identified from the continuing
cohort's own entry wave and the fresh cohort's first wave:
\begin{equation}\phantomsection\label{eq-ec}{
h^{S} \;=\; \mathbb{E}[Y_t \mid S_{e:t}=1,\ \text{cont}] \;-\; \underbrace{\big(\mathbb{E}[Y_e \mid S_{e:t}=1,\ \text{cont}] - \mathbb{E}[Y_e \mid \text{cont}]\big)}_{\delta_e(e)} \;-\; \mathbb{E}[Y^{*}_t \mid \text{fresh}] .
}\end{equation} In the location-shift submodel augmented by B4, the
identified set for \(h\) is
\(\mathcal{H}_{ID} \cap \{C - \delta_e(e)\}\), with \(\mathcal{H}_{ID}\)
the unchanged set of Theorem~\ref{thm-bounds}: the point
\(C - \delta_e(e)\) when the restrictions are compatible with the
observations and empty otherwise. The mean identity itself needs no
location-shift structure. The design uses no follow-up of the fresh
cohort, is feasible for every \(s \le t\), and admits a
covariate-standardized version. Writing \(\delta_e(\cdot \mid x)\) for
the selection functional within \(X = x\), entry covariates observed in
both cohorts, and \(m^{S}_e\), \(m_e\) and \(m^{*}_t\) for the
regression functions on \(x\) of the entry outcome among the survivors,
of the entry outcome among all entrants and of the fresh cohort's
outcome, the conditional restriction
\(\delta_e(t \mid x) = \delta_e(e \mid x)\) identifies
\(h^{S} = \mathbb{E}[Y_t \mid S] - \mathbb{E}[m^{S}_e(X) - m_e(X) \mid S] - \mathbb{E}[m^{*}_t(X) \mid S]\),
every regression prediction being integrated over the survivors'
covariate distribution. The restriction bears on the terms within each
\(x\); it identifies the shift only when the two predictions are
integrated over one and the same distribution, and a version that
averages the entry term over a subset of the survivors (those with an
observed entry report, or those eligible at entry) and the fresh
prediction over all of them is not identified by it alone
(Section~\ref{sec-implementation}). The observable combination in
Equation~\ref{eq-ec},
\(\mathbb{E}[Y_t \mid S] - \mathbb{E}[Y_e \mid S] + \mathbb{E}[Y_e] - \mathbb{E}[Y^{*}_t \mid \text{fresh}]\),
is the one Das, Toepoel, and van Soest
(\citeproc{ref-dasetal2011}{2011}) use for a binary item under their
Assumption 3 {[}pp.~43--44{]} and, for the survivor effect, under their
Assumption 3\(^{\mathrm{Alt}}\) {[}online Appendix 1, pp.~27--28{]}; the
proposition extends the survivor-effect result to general outcome means
and does not propose a new estimator. Das, Toepoel, and van Soest
(\citeproc{ref-dasetal2011}{2011}) distinguish two targets, the
population and the survivor effect: their main-text Assumption 3
concerns conditioned outcomes, whereas Assumption 3\(^{\mathrm{Alt}}\)
concerns unconditioned outcomes and identifies the binary survivor
effect, and the identifying expression under Assumption
3\(^{\mathrm{Alt}}\) {[}online Appendix 1, p.~28{]} is the same
observable functional as under Assumption 3 {[}p.~44{]}. B4 extends
Assumption 3\(^{\mathrm{Alt}}\) to general outcome means. Under a
homogeneous shift the two targets coincide; under heterogeneous effects
they do not, and a design can satisfy one restriction while violating
the other. Simulation 3 (Appendix) illustrates the difference for a
continuous outcome.

\end{proposition}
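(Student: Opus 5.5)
The plan is to start from the naive decomposition, Equation~\ref{eq-naive}, which holds under B1 (refreshment validity and cohort overlap) and needs no location structure: $C = h^{S} + \delta_e(t)$, with $h^{S} := \mathbb{E}[Y_t - Y^{*}_t \mid S_{e:t}=1]$. Under B4, $\delta_e(t) = \delta_e(e)$. I will then show that $\delta_e(e)$ is observable. At wave $e$ every entrant reports, and $c_1 = \mathrm{id}$, so $Y_e = Y^{*}_e$. Hence both $\mathbb{E}[Y^{*}_e \mid S_{e:t}=1,\text{cont}]$ and $\mathbb{E}[Y^{*}_e \mid \text{cont}]$ are moments of the continuing cohort's entry reports, split by later survival status. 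Substituting into $h^{S} = C - \delta_e(e)$ and writing out $C$ gives Equation~\ref{eq-ec}. Only means and the additive decomposition are used, which is why the mean identity needs no location-shift structure. The remark that this is the Das--Toepoel--van Soest observable combination then follows by expanding the expression.

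For the set statement in the location submodel, I would argue containment in both directions. For \emph{necessity}, any structure that rationalizes the data and satisfies B4 has $h = h^{S} = C - \delta_e(e)$ by the identity above. The same structure also rationalizes the data without B4, so $h \in \mathcal{H}_{ID}$ by Theorem~\ref{thm-bounds}, or by Theorem~\ref{thm-multiwave} in the multiwave setting. For \emph{sufficiency}, take $h' = C - \delta_e(e) \in \mathcal{H}_{ID}$. Theorem~\ref{thm-multiwave} with Lemma~\ref{lem-seqcompletion} gives a multiwave completion with map $y + h'$ that reproduces every pattern sub-measure and the refreshment marginal at $t$. In this completion the survivors' latent mean at $t$ is $\mathbb{E}[Y_t \mid S] - h'$, and the population mean at $t$ equals the fresh mean. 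So $\tilde\delta_e(t) = C - h' = \delta_e(e)$. The entry-wave quantities are held fixed, so $\tilde\delta_e(e) = \delta_e(e)$, and B4 holds. The expected obstacle is checking that the completion keeps $\delta_e(e)$ unchanged. It does, because $\mathrm{proj}_1$ of each pattern measure equals the observed entry sub-measure and survival status is encoded in the pattern. If $h' \notin \mathcal{H}_{ID}$, no structure exists, so the set is empty.

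For the covariate-standardized version, I would repeat the decomposition within each $X = x$: $\mathbb{E}[Y_t \mid S, x] - m^{*}_t(x) = h^{S}(x) + \delta_e(t \mid x)$. This uses B1 conditionally (the joint version of (A1)), under which the fresh regression $m^{*}_t(x)$ equals the continuing cohort's population latent mean given $x$. Next substitute $\delta_e(t \mid x) = \delta_e(e \mid x) = m^{S}_e(x) - m_e(x)$, again using that entry reports are unconditioned. Then integrate over the survivors' covariate distribution, since $h^{S} = \mathbb{E}[h^{S}(X) \mid S]$. The caveat about averaging the entry term over a subset is shown by noting that the integration must use the same measure for every term. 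Otherwise a composition term $\mathbb{E}[\cdot \mid S'] - \mathbb{E}[\cdot \mid S]$ remains, which B4 does not restrict; a two-point covariate example exhibits it.
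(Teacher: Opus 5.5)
Your proposal is correct and follows the paper's own argument: the decomposition $C = h^{S} + \delta_e(t)$ of Equation~\ref{eq-naive} under B1, then B4 to replace $\delta_e(t)$ by $\delta_e(e)$, which is observable because entry reports are unconditioned, and $h^{S} = h$ in the location submodel. You go beyond the paper's proof in two places. First, the paper only asserts that $C - \delta_e(e)$ lies in $\mathcal{H}_{ID}$ if and only if B4 is compatible with the data, whereas you prove the sufficiency direction with the completion of Lemma~\ref{lem-seqcompletion}, checking that it leaves $\delta_e(e)$ fixed and gives $\tilde\delta_e(t) = C - h'$. Second, your within-$x$ derivation of the standardized identity correctly makes explicit that B1 must hold conditionally on $X$, which the statement leaves implicit.
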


\begin{proof}
All three follow from Equation~\ref{eq-naive}. Under B1 the fresh
cohort's mean of \(Y^{*}_t\) (or of its survival-matched subset) is an
unbiased proxy for the corresponding continuing-cohort latent mean
whenever the restriction identifies the survivors' latent mean: under B2
it is the fresh \(k\)-wave survivors' mean, under B3 the fresh
\((k+1)\)-wave survivors' mean (with the continuing arm restricted to
\(S_{e:t+1}=1\)), under B4 the population mean plus \(\delta_e(e)\),
which is observed because entry reports are unconditioned and available
for every entrant. Subtracting the proxy from the survivors' reported
mean leaves \(h^{S}\) (or \(h^{S+}\)). The distributional statements
replace means by distributions and apply
Corollary~\ref{cor-noattrition}: once the latent distribution of the
relevant survivors is known, \(c\) is, on that distribution's support,
the increasing rearrangement transporting it to those survivors'
reported distribution. The composition identity in
Proposition~\ref{prp-ssm} is the law of total expectation over
\(R_{t+1}\) among survivors. In the location-shift submodel
\(h^{S} = h\), so under B4 the identified value of \(h\) is
\(C - \delta_e(e)\), which lies in \(\mathcal{H}_{ID}\) if and only if
B4 is compatible with the data.
\end{proof}

The three propositions rest on one argument: each restriction names the
observed quantity from which the survivors' latent mean is obtained, and
so identifies the relevant survivor mean without determining the
selection function uniquely. Theorem~\ref{thm-tworef} is the same idea
with a battery of items whose \(h\) is zero by construction in place of
the item's own entry wave; the entry-wave correction is its single-item
version, available when the item has a comparable entry-wave measure,
but without overidentifying restrictions.

The comparison with Das, Toepoel, and van Soest
(\citeproc{ref-dasetal2011}{2011}) concerns the model, the targets, the
restrictions and the identified objects. Their conditioning model lets
each binary answer change at reinterview, a stochastic change of
answers; ours is a deterministic monotone map, although the mean
statements of Propositions \ref{prp-sm}--\ref{prp-ec}, unlike their
distributional and location-shift clauses, do not use it. Their main
text defines the population conditioning effect PC and, as an
alternative (\citeproc{ref-dasetal2011}{Das, Toepoel, and van Soest
2011, 41}), the survivors' effect PC\(^{\mathrm{Alt}}\) together with
the attrition bias AB\(^{\mathrm{Alt}}\) of unconditioned answers; for a
binary item these are \(h^{S}\) and \(\delta_e(t)\). Their Assumption 3
{[}p.~43{]} fixes the attrition bias of conditioned answers over time,
and their Assumption 3\(^{\mathrm{Alt}}\) {[}online Appendix 1, p.~27{]}
fixes AB\(^{\mathrm{Alt}}\) and is B4 for a binary item. Both
assumptions identify their target through the observable functional of
Equation~\ref{eq-ec} {[}p.~44; online Appendix 1, p.~28{]}, and without
them the sharp binary bounds are those on PC {[}p.~41{]} and on
PC\(^{\mathrm{Alt}}\) {[}online Appendix 1, p.~26{]}, the latter being
the binary case of the survivor bounds in Remark~\ref{rem-modelfree}.
The survivor interpretation of the entry-wave functional for binary
items is therefore theirs. They also note that their approach extends to
non-binary outcomes through the binary events \(\{Y > y\}\) {[}pp.~36,
52{]}; imposed on every such event, Assumption 3\(^{\mathrm{Alt}}\)
implies B4 whenever \(Y_e\) and \(Y^{*}_t\) have finite means (integrate
the event-wise differences over \(y\)), whereas B4 restricts only the
mean. This section adds the formulation for general outcome means (B4)
in the multiwave notation, the symmetric comparison, the bias identities
of Proposition~\ref{prp-fail}, and the diagnostics of
Section~\ref{sec-diagnostics}.

\begin{refremark}[Transporting the retention function]
Das, Toepoel, and van Soest (\citeproc{ref-dasetal2011}{2011, 43},
Assumption 4) also consider a stationarity restriction on the attrition
probability and not on the attrition bias: attrition depends on the
first- and second-wave answers in the same way. Its version for
unconditioned answers {[}online Appendix 1, pp.~27--28, Assumption
4\(^{\mathrm{Alt}}\){]}, written for the two-wave setting of
Theorem~\ref{thm-sharp} and a general outcome, is
\begin{equation}\phantomsection\label{eq-retention}{
\Pr(S = 1 \mid Y_2^{*} = y) \;=\; \pi_1(y) := \Pr(S = 1 \mid Y_1 = y) \qquad \text{for } F_2\text{-almost every } y,
}\end{equation} where \(\pi_1 = d(p\,G_1)/dF_1\) is the
outcome-conditional retention function at entry,
\(\Pr(S = 1 \mid Y_1 = y)\), not the bivariate selection function
\(\pi(y_1, y_2)\) of Section~\ref{sec-setup}. It is identified because
the first wave is observed for the whole cohort, but only \(F_1\)-almost
surely, so Equation~\ref{eq-retention} has content only where the
second-wave latent distribution puts mass on values that the first wave
takes: under \(F_2 \ll F_1\) (or an explicit rule specifying \(\pi_1\)
on every \(F_1\)-null set to which \(F_2\) assigns positive mass), the
measure \(H(dy) := \pi_1(y)\,F_2(dy)/p\) is identified. The restriction
then says that \(H\) is the stayers' latent distribution, so \(H\) must
have total mass one, which is a check on the restriction. By the
quantile alignment of Corollary~\ref{cor-noattrition} applied to \(H\)
and \(Q\), \(c\) is identified on the support of \(H\) as the increasing
rearrangement of \(H\) onto \(Q\). On that support the restriction
selects the member of \(\mathcal{C}_{ID}\) whose rationalizing retention
function, as a function of the latent outcome, equals the one observed
at entry, provided such a member exists (the normalization of \(H\) does
not guarantee a map in \(\mathcal{M}\) compatible with the observables);
outside it the map is restricted only by membership in \(\mathcal{M}\).
Without the support condition even the survivors' mean shift is not
identified. Let \(F_1\) be uniform on \([0, 1]\), \(F_2\) and \(Q\)
uniform on \([2, 3]\), \(p = 1/2\), and let \(Y_1\) be independent of
the second-wave variables among stayers. Retention equal to \(1/2\)
throughout with \(c\) the identity, and retention equal to one on
\([2, 2.5]\) and zero elsewhere with \(c(y) = 2y - 2\), produce the same
observables and the same entry function \(\pi_1 = 1/2\) on \([0, 1]\);
their survivor mean shifts are \(0\) and \(.25\). The restriction
constrains the outcome-conditional retention function and not only a
mean, and it is not implied by B4. Under (M) of
Proposition~\ref{prp-fail} it holds when selection acts on traits alone
and the latent distribution does not change between the waves
(\(\mu_1 = \mu_2\), \(\lambda_1 = \lambda_2\)), because
\((A_i, Y^{*}_{i1})\) and \((A_i, Y^{*}_{i2})\) then have the same law
and survival depends on \(A_i\) alone. It can fail under state-dependent
attrition, as B2 and B4 do, because the response decision taken at the
first wave depends on that wave's innovation, though not necessarily: a
dependence on the innovation that averages out within each value of the
outcome (as when its sign is set by a trait with mean zero) leaves both
functions equal. And, unlike B4, it can also fail when only the period
mean changes (\(\mu_1 \ne \mu_2\)), which shifts the second function
relative to the first, unless retention does not depend on the outcome
at all.

\label{rem-retention}

\end{refremark}

\subsection{Failure modes}\label{sec-failure}

The three designs rest on different restrictions, so they fail under
different violations. Official statistics show that restricting the
comparison to respondents present at every wave does not suffice to
remove the pattern. Among Current Population Survey respondents who
reported labour force status in all eight interviews, the slope of the
multiplicative rotation-group index of the unemployment rate over the
eight months in sample is \(-1.23\) in 1982--1993 and \(-1.22\) in
1994--2014 (\citeproc{ref-krueger2017}{Krueger, Mas, and Niu 2017},
Table 5, p.~263). The bias identities below are exact; the directions
require a model of how traits, transient states and response decisions
are related.

\begin{proposition}[Biases under
violation]\protect\hypertarget{prp-fail}{}\label{prp-fail}

Let
\(\Delta_{SM} := \mathbb{E}[Y^{*}_t \mid S_{e:t}=1,\ \text{cont}] - \mathbb{E}[Y^{*}_t \mid S_{t:t+k}=1,\ \text{fresh}]\),
\(\Delta_{SSM} := \mathbb{E}[Y^{*}_t \mid S_{e:t+1}=1,\ \text{cont}] - \mathbb{E}[Y^{*}_t \mid S_{t:t+k+1}=1,\ \text{fresh}]\),
and \(\eta := \delta_e(t) - \delta_e(e)\). Then, under B1, \[
\hat h_{SM} \to h^{S} + \Delta_{SM}, \qquad \hat h_{EC} \to h^{S} + \eta, \qquad \hat h_{SSM} \to h^{S+} + \Delta_{SSM}.
\] Suppose in addition model (M): the latent response is
\(Y^{*}_{iw} = \mu_w + \lambda_w' A_i + \varepsilon_{iw}\), with a
vector \(A_i\) of person-level traits distributed identically in the two
cohorts and transient innovations \(\varepsilon_{iw}\) that are
independent and identically distributed across persons and waves, have
mean zero, and are independent of \(A_i\); and the response decision
taken at wave \(w\) continues participation with probability
\(\pi_w(A_i, \varepsilon_{iw})\), the decisions being independent across
waves given \((A_i, \varepsilon_{i\cdot})\). Then:

\begin{enumerate}
\def\labelenumi{(\roman{enumi})}
\item
  \emph{(State-dependent, stationary attrition.)} If
  \(\lambda_w \equiv \lambda\) and \(\pi_w \equiv \pi\) at every wave of
  both cohorts, then \(\Delta_{SSM} = 0\) and \[
  \Delta_{SM} \;=\; \eta \;=\; -\,\mathbb{E}[\varepsilon_t \mid S_{t:t+k} = 1,\ \text{fresh}] ;
  \] if \(\pi(a, \cdot)\) is nondecreasing for every \(a\) (continuation
  more likely at higher transient values), \(\Delta_{SM} = \eta \le 0\),
  and if it is nonincreasing, \(\Delta_{SM} = \eta \ge 0\). Survival
  matching and the entry-wave correction carry the same bias, and
  symmetric matching is unbiased for \(h^{S+}\).
\item
  \emph{(Non-stationary trait selection.)} If
  \(\lambda_w \equiv \lambda\) and \(\pi_w(a, \varepsilon) = \pi_w(a)\)
  does not depend on \(\varepsilon\), then \(\eta = 0\), while
  \(\Delta_{SM} = \lambda'\{\mathbb{E}[A \mid S_{e:t}=1, \text{cont}] - \mathbb{E}[A \mid S_{t:t+k}=1, \text{fresh}]\}\)
  and
  \(\Delta_{SSM} = \lambda'\{\mathbb{E}[A \mid S_{e:t+1}=1, \text{cont}] - \mathbb{E}[A \mid S_{t:t+k+1}=1, \text{fresh}]\}\):
  the entry-wave correction is unbiased, and survival and symmetric
  matching are biased whenever these trait contrasts are nonzero, that
  is, whenever selection makes the survivors' mean of \(\lambda'A\)
  differ between the continuing cohort's selection waves and the fresh
  cohort's.
\item
  \emph{(Drift in the trait loading.)} If
  \(\pi_w(a, \varepsilon) = \pi(a)\) at every wave of both cohorts but
  \(\lambda_t \ne \lambda_e\), then \(\Delta_{SM} = \Delta_{SSM} = 0\)
  and
  \(\eta = (\lambda_t - \lambda_e)'\{\mathbb{E}[A \mid S_{e:t}=1, \text{cont}] - \mathbb{E}[A \mid \text{cont}]\}\):
  only the entry-wave correction is biased; this is the failure of the
  parallel-trends content of B4.
\end{enumerate}

Outside (M) the directions are not implied. With symmetric innovations
and continuation depending on \(\varepsilon_t^2\), selection leaves the
mean of \(\varepsilon_t\) unchanged among survivors; and with serially
dependent innovations, survivors selected on earlier innovations are
also selected on the current one.

\end{proposition}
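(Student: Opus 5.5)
The plan is to obtain the three limits directly from the naive decomposition in Equation~\ref{eq-naive} and the definitions of the three estimators, and then to evaluate \(\Delta_{SM}\), \(\Delta_{SSM}\) and \(\eta\) under model (M) case by case.

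\emph{Limits under B1.} Each estimator is the survivors' reported mean minus a proxy for their latent mean. For \(\hat h_{SM}\) the proxy is \(\mathbb{E}[Y^{*}_t \mid S_{t:t+k}=1,\text{fresh}]\), so the limit is \(h^{S} + \Delta_{SM}\) by adding and subtracting \(\mathbb{E}[Y^{*}_t \mid S_{e:t}=1,\text{cont}]\). The argument for \(\hat h_{SSM}\) is the same, with \(S_{e:t+1}\) in place of \(S_{e:t}\) and \(h^{S+}\) as target. For \(\hat h_{EC}\), B1 replaces the fresh mean by the continuing cohort's population latent mean at \(t\). The limit is then \(C - \delta_e(e) = h^{S} + \delta_e(t) - \delta_e(e)\).

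\emph{Under (M)}, write every survivor mean as \(\mu_t + \lambda_t'\mathbb{E}[A\mid\cdot] + \mathbb{E}[\varepsilon_t\mid\cdot]\). The key timing fact is the following. \(S_{e:t}\) depends on decisions at waves \(e,\dots,t-1\), hence on \((A,\varepsilon_e,\dots,\varepsilon_{t-1})\) plus independent randomization. By independence of innovations from \(A\) and across waves, \(\mathbb{E}[\varepsilon_t \mid S_{e:t}=1]=0\), and \(\varepsilon_t\) is independent of \(A\) given \(S_{e:t}\). By contrast, \(S_{t:t+k}\) and \(S_{e:t+1}\), \(S_{t:t+k+1}\) involve the decision at \(t\), which loads on \(\varepsilon_t\).

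\emph{Case (i).} With stationarity, the continuing arm selected by \(S_{e:t+1}\) and the fresh arm selected by \(S_{t:t+k+1}\) face \(k+1\) i.i.d.\ decisions with one of them at \(t\). After relabelling waves, the joint law of \((A,\varepsilon_t,\text{survival})\) coincides in the two arms, because traits are equally distributed and the innovations are exchangeable. So \(\Delta_{SSM}=0\). For \(\Delta_{SM}\), the continuing \(S_{e:t}\) involves \(k\) decisions not at \(t\), and the fresh \(S_{t:t+k}\) involves \(k\) decisions, the first at \(t\). Again by i.i.d.\ relabelling, the trait laws given survival agree. Hence \(\Delta_{SM} = 0 - \mathbb{E}[\varepsilon_t\mid S_{t:t+k}=1,\text{fresh}]\).

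For \(\eta\), \(\delta_e(e) = \lambda'(\mathbb{E}[A\mid S]-\mathbb{E}A) + \mathbb{E}[\varepsilon_e\mid S_{e:t}=1]\) and \(\delta_e(t) = \lambda'(\mathbb{E}[A\mid S]-\mathbb{E}A)\). So \(\eta = -\mathbb{E}[\varepsilon_e\mid S_{e:t}=1]\). Shifting waves by \(t-e\) identifies this with the fresh expression, so \(\eta = \Delta_{SM}\).

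\emph{Sign claim.} Conditional on \(A\) and the later innovations, the survival probability is \(\pi(A,\varepsilon_t)\) times a factor not involving \(\varepsilon_t\). Thus \(\mathbb{E}[\varepsilon_t\mid S=1]\) is a weighted average over \(A\) of \(\mathrm{Cov}(\varepsilon_t,\pi(a,\varepsilon_t))/\mathbb{E}\pi\)-type terms. These are nonnegative by Chebyshev's association inequality when \(\pi(a,\cdot)\) is nondecreasing, using independence of \(\varepsilon_t\) from \(A\). I expect this step to be the main obstacle. The weights must be handled carefully, since conditioning on \(A\) changes the weight, but each conditional mean \(\mathbb{E}[\varepsilon_t \pi(a,\varepsilon_t)]\ge \mathbb{E}\varepsilon_t\,\mathbb{E}\pi(a,\varepsilon_t)=0\). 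I will therefore write \(\mathbb{E}[\varepsilon_t\mid S=1] = \mathbb{E}[w(A)\,\mathbb{E}(\varepsilon_t\pi(A,\varepsilon_t)\mid A)]/\Pr(S)\) with \(w\ge0\), which avoids ratio issues.

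\emph{Cases (ii) and (iii).} In (ii), selection is on \(A\) only, so every \(\mathbb{E}[\varepsilon_w\mid\cdot]=0\). Then \(\delta_e(t)=\delta_e(e)=\lambda'(\mathbb{E}[A\mid S]-\mathbb{E}A)\), giving \(\eta=0\), and the \(\Delta\)s reduce to the stated trait contrasts. In (iii), the continuation function is common and trait-only, so the survival laws of \(A\) match across arms and \(\Delta_{SM}=\Delta_{SSM}=0\). The \(\mu\) and \(\varepsilon\) terms cancel, leaving \(\eta=(\lambda_t-\lambda_e)'(\mathbb{E}[A\mid S]-\mathbb{E}A)\).

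\emph{Closing remarks.} The two counterexamples outside (M) are immediate. A symmetric \(\varepsilon_t\) with selection on \(\varepsilon_t^2\) gives zero mean by symmetry. Under serial dependence, the conditional-independence step fails.
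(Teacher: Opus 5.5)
Your proposal is correct and follows essentially the same route as the paper's proof. The limits come from substituting each design's proxy into Equation~\ref{eq-naive}. Under (M) you use the same three steps: the timing argument, in which continuing survivors through $t$ are not selected on $\varepsilon_t$ while the fresh arm's first decision is; the relabelling of $k$ or $k+1$ stationary decisions, which gives $\Delta_{SSM}=0$ and $\eta=\Delta_{SM}$; and conditioning on $A=a$ with the covariance inequality for the sign, where your nonnegative weight $w(A)$ just writes out what the paper compresses into ``has the sign of $\mathrm{Cov}(\varepsilon_t,\pi(a,\varepsilon_t))$''.
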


\begin{proof}
The identities: substitute the proxies of Propositions
\ref{prp-sm}--\ref{prp-ec} into Equation~\ref{eq-naive}; the SM proxy
misses the continuing survivors' latent mean by \(\Delta_{SM}\), the SSM
proxy misses the corresponding mean of the survivors through \(t+1\) by
\(\Delta_{SSM}\), and the EC proxy misses by
\(\delta_e(t) - \delta_e(e)\). Under (M) the common term \(\mu_t\)
cancels from \(\Delta_{SM}\), \(\Delta_{SSM}\) and \(\eta\). The
continuing survivors \(S_{e:t} = 1\) have passed the \(k\) decisions
taken at waves \(e, \dots, t-1\), and the fresh survivors
\(S_{t:t+k} = 1\) the \(k\) decisions taken at \(t, \dots, t+k-1\).

\begin{enumerate}
\def\labelenumi{(\roman{enumi})}
\item
  With stationary \(\pi\) and \(\lambda\) and identically distributed
  innovations, the joint law of \((A, \text{decisions})\) is the same
  for the \(k\) decisions of the two arms, so
  \(\mathbb{E}[A \mid S_{e:t}, \text{cont}] = \mathbb{E}[A \mid S_{t:t+k}, \text{fresh}]\).
  The innovation \(\varepsilon_t\) is independent of \(A\) and of the
  continuing cohort's decisions before \(t\), so
  \(\mathbb{E}[\varepsilon_t \mid S_{e:t}, \text{cont}] = 0\), whereas
  the fresh cohort's first decision is taken at \(t\) and depends on
  \(\varepsilon_t\); hence
  \(\Delta_{SM} = -\mathbb{E}[\varepsilon_t \mid S_{t:t+k}, \text{fresh}]\).
  For the entry-wave correction,
  \(\delta_e(t) = \lambda'\{\mathbb{E}[A \mid S_{e:t}] - \mathbb{E}A\}\)
  because \(\varepsilon_t\) is independent of the decisions before
  \(t\), while
  \(\delta_e(e) = \lambda'\{\mathbb{E}[A \mid S_{e:t}] - \mathbb{E}A\} + \mathbb{E}[\varepsilon_e \mid S_{e:t}]\)
  because the first decision of the continuing cohort is taken at \(e\)
  and depends on \(\varepsilon_e\): the survivors are selected on the
  entry innovation and not on the current one. So
  \(\eta = -\mathbb{E}[\varepsilon_e \mid S_{e:t}, \text{cont}]\), and
  since the continuing cohort's survivors pass \(k\) decisions, the
  first depending on \(\varepsilon_e\), exactly as the fresh cohort's
  pass \(k\) decisions, the first depending on \(\varepsilon_t\), the
  two expectations are equal: \(\eta = \Delta_{SM}\). For the sign,
  condition on \(A = a\): the other decisions are independent of the
  relevant innovation given \(a\), so
  \(\mathbb{E}[\varepsilon_t\, 1\{S\} \mid a]\) has the sign of
  \(\mathrm{Cov}(\varepsilon_t, \pi(a, \varepsilon_t))\), which is
  nonnegative when \(\pi(a, \cdot)\) is nondecreasing, by the covariance
  inequality for monotone functions of one variable, and nonpositive
  when it is nonincreasing. For \(\Delta_{SSM}\), both arms pass
  \(k + 1\) decisions, exactly one of them taken at \(t\) and depending
  on \(\varepsilon_t\); because the innovations are exchangeable across
  waves and \(\pi\) is stationary, the position of that decision among
  the \(k+1\) does not change the joint law of \((A, \varepsilon_t)\)
  given survival, so the two selected latent means coincide.
\item
  Decisions do not depend on innovations, so
  \(\mathbb{E}[\varepsilon_w \mid S] = 0\) for every \(w\) and both
  arms,
  \(\delta_e(t) = \delta_e(e) = \lambda'\{\mathbb{E}[A \mid S_{e:t}] - \mathbb{E}A\}\),
  and \(\eta = 0\); the expressions for \(\Delta_{SM}\) and
  \(\Delta_{SSM}\) are the trait parts of the definitions.
\item
  Both arms of SM and of SSM pass the same number of decisions under the
  same \(\pi\), so their trait laws coincide and the common loading
  \(\lambda_t\) at the comparison wave cancels; the innovations do not
  enter the decisions; and
  \(\delta_e(t) - \delta_e(e) = (\lambda_t - \lambda_e)'\{\mathbb{E}[A \mid S_{e:t}] - \mathbb{E}A\}\).
  The two examples in the last paragraph are immediate. \(\square\)
\end{enumerate}

\end{proof}

Under (M), violations (i) and (ii) of Proposition~\ref{prp-fail} bias
different subsets of the three estimators: stationary state dependence
biases survival matching and the entry-wave correction equally and
leaves symmetric matching unbiased; non-stationary trait selection
biases the two matching designs and leaves the entry-wave correction
unbiased. Section~\ref{sec-diagnostics} uses their disagreements as
compatibility checks. Outside (M) the directions are not implied;
Simulation 3 illustrates them in one parametrization.

Part (i) bears directly on the survival-matching comparisons used in
practice. In the comparisons of adjacent rotation groups of the Current
Population Survey (\citeproc{ref-halpernmanners2012}{Halpern-Manners and
Warren 2012, 1506--8}; \citeproc{ref-warren2012}{Warren and
Halpern-Manners 2012, 517}), of rotation groups one year apart in its
December Food Security Supplement (\citeproc{ref-warren2024}{Warren,
Himmelstern, and Halpern-Manners 2024, 198}), and of the 2006 and 2008
cohorts of the General Social Survey
(\citeproc{ref-halpernmanners2017}{Halpern-Manners, Warren, and Torche
2017, 108--9}), \(k = 1\). The fresh arm's only selection decision is
then the one taken after the comparison-wave interview, whereas the
continuing arm's was taken a wave earlier. Restricting both arms to
respondents with the same number of completed waves therefore does not
suffice to remove attrition bias. Williams and Mallows
(\citeproc{ref-williamsmallows1970}{1970, 1339}) made this point for the
comparison of a cohort's first and second interviews among respondents
present at both, and it qualifies the view that such a restriction
eliminates attrition error ``without having to impose any further
assumptions'' (\citeproc{ref-bach2018}{Bach 2018, 14}). Equal attrition
mechanisms at corresponding waves, the assumption that Okubo
(\citeproc{ref-okubo2024}{2024}) adds to survival matching, do not
remove the bias either: under (M) that assumption is the stationarity of
part (i), and the bias remains because the comparison wave closes the
continuing survivors' selection window and opens the fresh survivors'.
The comparison requires B2 and, under stationary state dependence, the
symmetric version B3.

\begingroup\footnotesize\setstretch{1.05}
\begin{longtable}[]{@{}
  >{\raggedright\arraybackslash}p{(\linewidth - 6\tabcolsep) * \real{0.1591}}
  >{\raggedright\arraybackslash}p{(\linewidth - 6\tabcolsep) * \real{0.3864}}
  >{\raggedright\arraybackslash}p{(\linewidth - 6\tabcolsep) * \real{0.2045}}
  >{\raggedright\arraybackslash}p{(\linewidth - 6\tabcolsep) * \real{0.2500}}@{}}
\caption{Restrictions used in prior comparisons of experienced and
inexperienced respondents, and their counterparts
here}\label{tbl-prior}\tabularnewline
\toprule\noalign{}
\begin{minipage}[b]{\linewidth}\raggedright
Comparison in prior work
\end{minipage} & \begin{minipage}[b]{\linewidth}\raggedright
Restriction, as stated there
\end{minipage} & \begin{minipage}[b]{\linewidth}\raggedright
Counterpart here
\end{minipage} & \begin{minipage}[b]{\linewidth}\raggedright
What the counterpart shows
\end{minipage} \\
\midrule\noalign{}
\endfirsthead
\toprule\noalign{}
\begin{minipage}[b]{\linewidth}\raggedright
Comparison in prior work
\end{minipage} & \begin{minipage}[b]{\linewidth}\raggedright
Restriction, as stated there
\end{minipage} & \begin{minipage}[b]{\linewidth}\raggedright
Counterpart here
\end{minipage} & \begin{minipage}[b]{\linewidth}\raggedright
What the counterpart shows
\end{minipage} \\
\midrule\noalign{}
\endhead
\bottomrule\noalign{}
\endlastfoot
Continuing vs.~fresh respondents, unadjusted & Attrition completely at
random, or at random given the first answer, ``typically'' assumed in
the conditioning literature (\citeproc{ref-dasetal2011}{Das, Toepoel,
and van Soest 2011, 42--43}); valid only without attrition or with
attrition orthogonal to the outcome (\citeproc{ref-warren2024}{Warren,
Himmelstern, and Halpern-Manners 2024, 197}) & \(\delta_e(t) = 0\) in
Equation~\ref{eq-naive} & Otherwise the contrast adds \(\delta_e(t)\),
which differential nonresponse alone can generate
(\citeproc{ref-williamsmallows1970}{Williams and Mallows 1970,
1342--43}); Theorem~\ref{thm-sharp} bounds the split \\
Worst-case bounds & None (\citeproc{ref-dasetal2011}{Das, Toepoel, and
van Soest 2011, 41}, and online Appendix 1, p.~26) &
Theorem~\ref{thm-sharp} with an unrestricted map
(Corollary~\ref{cor-meanbounds}) & The map restriction narrows the set
(Corollary~\ref{cor-funnel}) \\
Weighting or covariate adjustment & Attrition random given observed
characteristics (\citeproc{ref-halpernmanners2017}{Halpern-Manners,
Warren, and Torche 2017, 108}; \citeproc{ref-kraemer2024}{Kraemer et al.
2024, 45}; \citeproc{ref-struminskaya2016}{Struminskaya 2016,
100--101}); adjusts observed differences only
(\citeproc{ref-warren2012}{Warren and Halpern-Manners 2012, 515};
\citeproc{ref-halpernmanners2014}{Halpern-Manners, Warren, and Torche
2014, n. 1}, p.~567); untestable (\citeproc{ref-bach2018}{Bach 2018,
15}) & Missing at random given covariates: the current selection bias
vanishes within cells, \(\delta_e(t \mid x) = 0\), one sufficient
identifying restriction for such adjustments; distinct from the
conditional entry-wave restriction
\(\delta_e(t \mid x) = \delta_e(e \mid x)\) (Proposition~\ref{prp-ec}),
which allows both biases to be nonzero and equal & Covariate balance
alone establishes neither restriction; without maintained conditional
ignorability, the identification problem of Theorem~\ref{thm-sharp}
remains within covariate cells (Section~\ref{sec-shrink}) \\
Both arms restricted to the same number of completed waves & \(k = 1\)
(\citeproc{ref-halpernmanners2012}{Halpern-Manners and Warren 2012};
\citeproc{ref-halpernmanners2017}{Halpern-Manners, Warren, and Torche
2017}; \citeproc{ref-warren2024}{Warren, Himmelstern, and
Halpern-Manners 2024}, one annual interval of the December Food Security
Supplement), \(k = 3\) (\citeproc{ref-eckmanbach2021}{Eckman and Bach
2021}), \(k = 4\) (\citeproc{ref-okubo2024}{Okubo 2024}). Attrition
``cannot explain'' the contrast
(\citeproc{ref-halpernmanners2012}{Halpern-Manners and Warren 2012,
1507}); equal ``propensity to persist''
(\citeproc{ref-warren2012}{Warren and Halpern-Manners 2012, 517};
\citeproc{ref-warren2024}{Warren, Himmelstern, and Halpern-Manners 2024,
198}); cohorts ``equated on both observed and unobserved
characteristics'' (\citeproc{ref-halpernmanners2017}{Halpern-Manners,
Warren, and Torche 2017, 109}); no differences other than exposure,
tested on observables (\citeproc{ref-eckmanbach2021}{Eckman and Bach
2021, 64--65}); no assumption about the form of attrition needed
(\citeproc{ref-bach2018}{Bach 2018, 14}) & B2 (Proposition~\ref{prp-sm})
& Proposition~\ref{prp-fail} (i): biased under stationary
state-dependent attrition; (ii): biased under non-stationary trait
selection \\
The same, with equal attrition mechanisms at corresponding waves & Okubo
(\citeproc{ref-okubo2024}{2024, sec. 2.2}) & B2 with stationary \(\pi\)
in (M) & Proposition~\ref{prp-fail} (i): the bias remains; B3
(Proposition~\ref{prp-ssm}) removes it \\
Both arms restricted to a balanced panel of two entry cohorts & The two
halves of a panel, which entered a year apart, are both restricted to
respondents present at all five waves, because attrition would bias a
comparison of respondents with \(N+1\) and \(N\) participations in the
same year (\citeproc{ref-vanlandeghem2019}{Van Landeghem 2019, 88}) & B3
in the form ``the same number of response decisions in both arms, one of
them at the comparison wave'', which that design achieves at every
comparison year but its last & Unbiased in the stationary case of (M);
Proposition~\ref{prp-fail} (ii): biased under non-stationary trait
selection \\
Stationary attrition bias & Das, Toepoel, and van Soest
(\citeproc{ref-dasetal2011}{2011}) {[}p.~43, Assumption 3; online
Appendix 1, p.~27, Assumption 3\(^{\mathrm{Alt}}\){]}; applied to the
refreshment samples of the German and Swiss household panels by Van
Landeghem (\citeproc{ref-vanlandeghem2014}{2014, 237}), who also
restricts the fresh cohort to respondents who remain for three further
years & B4 (Proposition~\ref{prp-ec}) & Proposition~\ref{prp-fail}
(iii): biased under loading drift; (i): same bias as survival
matching \\
Stationary attrition probability & Das, Toepoel, and van Soest
(\citeproc{ref-dasetal2011}{2011}) {[}p.~43, Assumption 4; online
Appendix 1, pp.~27--28, Assumption 4\(^{\mathrm{Alt}}\){]} & Retention
transport (Remark~\ref{rem-retention}) & Under \(F_2 \ll F_1\),
identifies \(c\) on the support of the transported stayers'
distribution; can fail under state dependence and under a shift of the
latent mean \\
Randomized prior exposure within a panel; attrition checked or modelled
& Attrition independent of the outcome and of assignment given
covariates (\citeproc{ref-torche2012}{Torche et al. 2012, 897}), with a
sensitivity analysis that imputes every attritor as a yes or every
attritor as a no (\citeproc{ref-torche2012}{Torche et al. 2012, n. 1},
p.~914); attrition unrelated to assignment
(\citeproc{ref-halpernmanners2014}{Halpern-Manners, Warren, and Torche
2014, 574}; \citeproc{ref-kraemer2025}{Kraemer et al. 2025, 16--17}) &
Both arms are survivors & Equal attrition rates do not by themselves
make the survivors comparable
(\citeproc{ref-halpernmanners2017}{Halpern-Manners, Warren, and Torche
2017, n. 14}, p.~119; \citeproc{ref-ghanem2026}{Ghanem, Hirshleifer, and
Ortiz-Becerra 2026}, Prop. 3, p.~708); they suffice under random
assignment with monotone selection (\citeproc{ref-lee2009}{Lee 2009},
Remark 2, p.~1084; \citeproc{ref-ghanem2026}{Ghanem, Hirshleifer, and
Ortiz-Becerra 2026}, Prop. 3(ii)) \\
Refreshment identification of attrition & No panel conditioning Das,
Toepoel, and van Soest (\citeproc{ref-dasetal2011}{2011}) & The identity
map & One member of \(\mathcal{C}_{ID}\) \\
\end{longtable}
\endgroup

\subsection{Increments, generational refreshments, and cohort-level
nuisances}\label{sec-increments}

Two further constructions transfer directly. First, when two continuing
cohorts drawn from the same birth cohorts are observed at a common wave
\(t\) with tenures \(s_A > s_B\), applying Proposition~\ref{prp-ec} to
each with its own entry wave identifies the increment
\(h^{S_A}(s_A) - h^{S_B}(s_B)\) without any fresh cohort at \(t\), under
B4 for each cohort and comparability of the two cohorts' latent
distributions at \(t\). This is the constructive version of
Corollary~\ref{cor-generational}(ii), and it is the quantity that
remains identifiable under a generational refreshment, when B1 fails
between entrants and incumbents. Second, suppose the two cohorts differ
in a fixed attribute of the survey process that shifts every report
additively by \(\kappa\); in the JLPS, the 2011 cohort's questionnaires
are returned by mail while the 2007 cohort's are collected by
interviewer visit. Then no within-cohort adjustment removes \(\kappa\),
every level contrast identifies its estimand plus \(\kappa\), and the
diagnostics of Section~\ref{sec-diagnostics}, which compare designs
within the same pair of cohorts, are unaffected. The combination
\(\vartheta := \{h^{S}(s) + \kappa\} - \{[h^{S_A}(s_A) - h^{S_B}(s_B)] + \kappa\}\)
is free of \(\kappa\) but constrains only the difference of two
conditioning contrasts: it equals \(h^{S}(s)\) when the path has
saturated between \(s_B\) and \(s_A\) and zero under a time-homogeneous
linear path with \(s_A - s_B = s - 1\); it does not identify either
contrast separately. We record this because it is the situation of the
JLPS and because \(\vartheta\) is easily misread as a level.

\section{Diagnostics for the Attrition Process}\label{sec-diagnostics}

Because Proposition~\ref{prp-sm} and Proposition~\ref{prp-ec} identify
the same quantity under different restrictions, their difference is an
overidentification statistic, and by Proposition~\ref{prp-fail} its
expectation is a combination of the bias terms; the same holds for
Proposition~\ref{prp-ssm} against Proposition~\ref{prp-sm}.

\begin{proposition}[Two directional
diagnostics]\protect\hypertarget{prp-tests}{}\label{prp-tests}

Define \(\mathcal{T}_{\mathrm{NS}} := \hat h_{SM} - \hat h_{EC}\) and
\(\mathcal{T}_{\mathrm{SD}} := \hat h_{SSM} - \hat h_{SM}\). Add

\begin{itemize}
\tightlist
\item
  \textbf{B5 (survivor-effect homogeneity).} \(h^{S+} = h^{S}\): the
  mean conditioning shift is the same among survivors of \(k\) and of
  \(k+1\) further waves.
\end{itemize}

Under B1--B5 both statistics have expectation zero. B5 is needed because
\(\mathcal{T}_{\mathrm{SD}}\) compares estimands defined on different
survivor sets: without it,
\(\mathbb{E}[\mathcal{T}_{\mathrm{SD}}] = (h^{S+} - h^{S}) + \Delta_{SSM} - \Delta_{SM}\)
even when B1--B4 hold, and a rejection can reflect a change in effect
composition induced by the additional survival requirement rather than
any failure of the attrition restrictions. The statistics are therefore
compatibility checks among the maintained restrictions, not unique
classifiers of which one fails. Under model (M) of
Proposition~\ref{prp-fail}: with stationary state-dependent attrition,
\(\mathbb{E}[\mathcal{T}_{\mathrm{NS}}] = \Delta_{SM} - \eta = 0\) and
\(\mathbb{E}[\mathcal{T}_{\mathrm{SD}}] = (h^{S+} - h^{S}) - \Delta_{SM}\);
with non-stationary trait selection,
\(\mathbb{E}[\mathcal{T}_{\mathrm{NS}}] = \Delta_{SM}\) and
\(\mathbb{E}[\mathcal{T}_{\mathrm{SD}}] = (h^{S+} - h^{S}) + \Delta_{SSM} - \Delta_{SM}\),
which need not vanish. In (M), therefore, \(\mathcal{T}_{\mathrm{NS}}\)
has no power against stationary state dependence, and
\(\mathcal{T}_{\mathrm{SD}}\) can respond to non-stationarity as well as
to state dependence. Each statistic is a difference of subset means
within two independent cohorts; its variance is estimated from influence
functions that include the estimated group shares
(Section~\ref{sec-variance}), or by a person-level bootstrap within
cohort that recomputes every arm and every share.

\end{proposition}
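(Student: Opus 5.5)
The plan is to work entirely from the probability limits recorded in Proposition~\ref{prp-fail}, which hold under B1 alone: $\hat h_{SM} \to h^{S} + \Delta_{SM}$, $\hat h_{EC} \to h^{S} + \eta$ and $\hat h_{SSM} \to h^{S+} + \Delta_{SSM}$. Taking differences gives the general expressions
\[
\mathbb{E}[\mathcal{T}_{\mathrm{NS}}] = \Delta_{SM} - \eta, \qquad \mathbb{E}[\mathcal{T}_{\mathrm{SD}}] = (h^{S+} - h^{S}) + \Delta_{SSM} - \Delta_{SM}.
\]
These are read as limits, or as exact expectations of the population functionals that the sample means estimate. Each of the three estimators is a combination of subset means, and the expectation statements are to be understood in that sense.

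The first step is the zero-mean claim under B1--B5. B2 is, by definition, $\Delta_{SM}=0$. B4 is $\eta = \delta_e(t) - \delta_e(e) = 0$. B3 is $\Delta_{SSM}=0$. B5 is $h^{S+}=h^{S}$. Substituting these into the two displays sets both expectations to zero. The second display, with B1--B4 imposed but B5 dropped, leaves exactly the stated residual $h^{S+}-h^{S}$. This justifies the remark that B5 is needed. To show that it is genuinely needed, I would cite a heterogeneous-effect configuration such as Simulation 3, in which the shift among survivors correlates with later response.

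The second step evaluates the statistics under model (M), using the parts of Proposition~\ref{prp-fail}. Under stationary state dependence, part (i) gives $\Delta_{SSM}=0$ and $\Delta_{SM}=\eta$. The first display is therefore $0$, and the second is $(h^{S+}-h^{S}) - \Delta_{SM}$. Under non-stationary trait selection, part (ii) gives $\eta = 0$. The first display becomes $\Delta_{SM}$, and the second keeps its general form. That form need not vanish, because the trait contrasts in (ii) are generally nonzero.

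The two interpretive sentences follow directly. Under (M), $\mathcal{T}_{\mathrm{NS}}$ has mean zero under every stationary state-dependent mechanism, so it has no power there. $\mathcal{T}_{\mathrm{SD}}$ has a nonzero mean in both regimes.

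The last claim concerns the variance and the bootstrap. Each arm mean has the form $\mathbb{E}[Y\,1\{S\}]/\Pr(S)$ within one cohort, where $S$ ranges over the survival indicators of the continuing and fresh cohorts. The two cohorts are sampled independently. The delta method then gives influence functions
\[
\frac{\bigl(Y - m\bigr)\,1\{S\}}{\Pr(S)},
\]
which incorporate the estimated shares, and these influence functions add across arms. A person-level bootstrap within each cohort reproduces the same linearization. I would defer the details to Section~\ref{sec-variance}.

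The main obstacle is conceptual rather than computational. The expectations are expectations of ratio estimators, so I will state them as probability limits, with exactness holding only to first order. I also need to check that B3 enters only through $\Delta_{SSM}$, since B3 is itself a restriction on the $S_{e:t+1}$ subpopulation.
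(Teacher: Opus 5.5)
Your proposal follows the paper's own proof: it differences the limits \(\hat h_{SM} \to h^{S} + \Delta_{SM}\), \(\hat h_{EC} \to h^{S} + \eta\) and \(\hat h_{SSM} \to h^{S+} + \Delta_{SSM}\) of Proposition~\ref{prp-fail}, then substitutes first B2--B5 (that is, \(\Delta_{SM} = \Delta_{SSM} = \eta = 0\) and \(h^{S+} = h^{S}\)) and then the values of \(\Delta_{SM}, \Delta_{SSM}, \eta\) under (M)(i) and (M)(ii). One wording fix: you say \(\mathcal{T}_{\mathrm{SD}}\) ``has a nonzero mean in both regimes,'' but under (M) its mean only need not vanish, since with symmetric innovations and continuation depending on \(\varepsilon\) only through \(\varepsilon^2\) one gets \(\Delta_{SM} = 0\), and then under B5 \(\mathbb{E}[\mathcal{T}_{\mathrm{SD}}] = 0\) despite state dependence; this is why the statement says only that it ``can respond.''
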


\begin{proof}
Immediate from Proposition~\ref{prp-fail}: \(\mathcal{T}_{\mathrm{NS}}\)
estimates \(\Delta_{SM} - \eta\) and \(\mathcal{T}_{\mathrm{SD}}\)
estimates \((h^{S+} - h^{S}) + \Delta_{SSM} - \Delta_{SM}\); substitute
the values of \(\Delta_{SM}, \Delta_{SSM}, \eta\) under (M)(i) and
(M)(ii).
\end{proof}

A rejection indicates incompatibility among the maintained restrictions
B1--B5; it does not uniquely identify the source of the failure, and it
does not say which design is wrong. A mechanism that biases SM and EC
alike, such as stationary state dependence in (M), leaves
\(\mathcal{T}_{\mathrm{NS}}\) unaffected. The diagnostics therefore
complement, and do not replace, the overidentification tests of
Theorem~\ref{thm-tworef}(c), which use a battery of items instead of a
pair of designs.

\emph{Monte Carlo evidence.} Simulation 3 (Section~\ref{sec-ademp})
calibrates the designs to the size of the Japanese panel's 2011 episode
(\(k = 4\), \(n_{\text{old}} = 4{,}800\), \(n_{\text{new}} = 960\),
\(R = 1{,}000\) replications per regime) under six attrition regimes
generated within model (M). The constructive estimators have biases
below \(0.003\) in absolute value (outcome units) and 95\% coverage
between \(0.94\) and \(0.96\) under the regimes their restrictions
allow. Survival matching is biased under non-stationarity (\(-0.081\),
coverage \(0.65\)) and under state dependence (\(+0.104\), \(0.53\));
the entry-wave correction under state dependence (\(+0.103\), \(0.36\))
but not under non-stationarity (\(-0.002\), \(0.95\)); symmetric
matching under non-stationarity (\(-0.068\), \(0.76\)) but not under
state dependence (\(+0.002\), \(0.95\)), as
Proposition~\ref{prp-fail}(i)--(ii) implies for these processes. The
diagnostics reject at rates between \(0.039\) and \(0.056\) under the
three regimes in which B1--B4 hold; \(\mathcal{T}_{\mathrm{NS}}\)
rejects at \(0.62\) under non-stationarity and at \(0.040\) under state
dependence, as (M) implies, and \(\mathcal{T}_{\mathrm{SD}}\) at
\(0.99\) under state dependence and at \(0.08\) under non-stationarity.
Inverse-probability weighting on entry covariates, the adjustment that
most users would apply first, is unbiased only under selection on
observables; it is biased by \(-0.11\) to \(-0.31\) under every regime
with selection on the unobserved trait. The constructive designs
restrict the timing of selection; they do not require selection to be on
observables.

\section{Design Theory: What a Refreshment Schedule
Identifies}\label{sec-design}

The results so far take the panel's schedule as given. In this section
we treat the schedule as the object of choice. A design is the fielded
support \(\mathcal{S} \subset \{(e,t): t \ge e\}\) (which entry cohorts
are interviewed at which times) together with an assignment of
instrument and mode to cells. Under the additive cell-mean structure of
the companion identification paper (its Assumption M4),
\(\mu(e,t) = \alpha(t) + g(e) + \tau(s)\) with \(\tau(1) = 0\) and
\(g(e_1) = 0\), stacking the fielded cells gives a linear system
\(\mu = X_{\mathcal{S}}\,\theta\) with \(\theta = (\alpha, g, \tau)\)
and \(X_{\mathcal{S}}\) the design's incidence matrix. Here \(\mu(e,t)\)
is the mean report of cohort \(e\) at \(t\) in the population, or among
survivors after a selection correction such as (A2); with attrition, raw
survivor means add selection terms \(\beta\) that vary by cohort and
period and break the additive structure, which is the subject of
Section~\ref{sec-tworef}. Two parameter vectors generate the same cell
means iff their difference lies in
\(\mathcal{K} := \ker X_{\mathcal{S}}\); write \(\mathcal{K}_\tau\) for
the set of tenure components of its elements. Let \(d\), the
\emph{stride} of the schedule, be the greatest common divisor of the
spacings between entry cohorts. We use the \emph{increment graph} of the
companion identification paper: its vertices are the observed
increments, the tenures \(u\) at which some cohort is observed at
tenures \(u\) and \(u + 1\) in consecutive periods, and \(u\) and \(u'\)
are joined when two different cohorts are observed in the same two
consecutive periods with those increments. Condition C\(_d\) holds when
every observed tenure is reached from tenure one through observed
increments and the connected components of the increment graph are
exactly the residue classes of the observed increments modulo \(d\).

\begin{theorem}[What a design
identifies]\protect\hypertarget{thm-rank}{}\label{thm-rank}

Let \(\mathcal{S}\) be a fielded support with cell means following the
additive structure and the normalizations \(\tau(1) = 0\),
\(g(e_1) = 0\).

\begin{enumerate}
\def\labelenumi{(\roman{enumi})}
\item
  \emph{(Matrix form.)} A linear functional \(\lambda'\tau\) is
  identified from cell means if and only if \(\lambda\) annihilates
  \(\mathcal{K}_\tau\). If, in addition, known linear restrictions
  \(A\tau = a\) are imposed (anchors), the admissible directions are
  \(\mathcal{K}_\tau \cap \ker A\): \(\lambda'\tau\) is identified iff
  \(\lambda\) annihilates it, and \(\tau\) is point identified on the
  observed tenures iff \(\mathcal{K}_\tau \cap \ker A = \{0\}\). The
  rank of \(X_{\mathcal{S}}\), a basis of \(\mathcal{K}_\tau\), and the
  identification of any proposed functional can be computed before
  fielding.
\item
  \emph{(Directions that are never identified.)} For every
  \(m \in \mathbb{R}\) and every \(d\)-periodic \(\rho\) with
  \(\rho(1) = 0\), the vector \(h_\tau(s) = m(s-1) + \rho(s)\),
  \(h_g(e) = m(e - e_1)\),
  \(h_\alpha(t) = -m(t - e_1) - \rho(t - e_1 + 1)\) lies in
  \(\mathcal{K}\). Hence no design identifies a level or the linear
  trend of \(\tau\) from cell means alone, and when \(d > 1\) no
  functional whose weights fail to sum to zero within some residue class
  modulo \(d\) other than that of \(s = 1\) is identified; in
  particular, no ordinary second difference is.
\item
  \emph{(When those are all.)} If \(\mathcal{S}\) satisfies C\(_d\), the
  tenure components of the directions in (ii) span \(\mathcal{K}_\tau\),
  and \(\lambda'\tau\) is identified iff \(\sum_s \lambda_s(s - 1) = 0\)
  and \(\sum_{s \equiv r} \lambda_s = 0\) for every residue
  \(r \not\equiv 1 \pmod d\); the centred lag-\(d\) second differences
  \(\Delta_d^2\tau(s) = \tau(s+d) - 2\tau(s) + \tau(s-d)\) are then
  identified, and when \(d = 1\) every ordinary second difference is.
  Without C\(_d\), \(\mathcal{K}_\tau\) can be larger even when
  \(d = 1\) and all observed tenures are linked through tenures observed
  in common periods: with entries at waves 1, 3 and 4 observed through
  wave 4, the seven cells leave nine parameters with rank seven, and
  besides the affine direction the kernel contains
  \(h_\tau = (0, 0, 1, 1)\) on tenures \(1\)--\(4\),
  \(h_\alpha = (0, 0, -1, -1)\) on periods \(1\)--\(4\), and
  \(h_g = (0, 1, 1)\) on cohorts \(1, 3, 4\), which moves
  \(\tau(3) - 2\tau(2) + \tau(1)\). For staggered trapezoids
  \(\{(e,t): e \le t \le T\}\), the dimension of \(\mathcal{K}_\tau\)
  equals the number of components of the increment graph; C\(_d\) holds
  whenever the last cohort is observed for
  \(w := T - e_K \ge (e_2 - e_1) - d\) periods after its entry; with
  three cohorts the dimension is exactly \(\max\{d, (e_2 - e_1) - w\}\);
  and with more cohorts that expression is an upper bound.
\item
  \emph{(Mode assignment.)} A mode change applied uniformly to all cells
  at a time \(t^{*}\) is absorbed in \(\alpha(t^{*})\) and costs no
  identification. A mode assignment that is a function of tenure (for
  example, migrating long-tenure respondents to a cheaper mode first)
  adds a term \(m(s)\) to the cell mean that loads on the same
  coordinates as \(\tau(s)\); no functional separating \(m\) from
  \(\tau\) is identified, and the mode effect is confounded with the
  conditioning path itself.
\end{enumerate}

\end{theorem}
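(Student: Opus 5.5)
Part (i) is linear algebra on the system \(\mu = X_{\mathcal{S}}\theta\), and I will prove it first. A functional \(\lambda'\tau\) can be written as \(\ell'\theta\) with \(\ell = (0,0,\lambda)\). Such a functional is identified exactly when \(\ell\) lies in the row space of \(X_{\mathcal{S}}\), which is the orthogonal complement of \(\mathcal{K}\). That holds iff \(\lambda' h_\tau = 0\) for every \(h \in \mathcal{K}\), i.e.\ iff \(\lambda\) annihilates \(\mathcal{K}_\tau\). With anchors, I stack \(A\) (padded with zeros for \(\alpha, g\)) under \(X_{\mathcal{S}}\). The stacked kernel then projects to tenure components in \(\mathcal{K}_\tau \cap \ker A\), if the anchors are read as acting on \(\tau\) only and the kernel is projected accordingly. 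Point identification is the case where this intersection is \(\{0\}\). Computability before fielding is immediate, since \(X_{\mathcal{S}}\) depends only on the support.

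For (ii), I verify the displayed vector directly. For a fielded cell \((e,t)\) with \(s = t-e+1\), I compute
\[
h_\alpha(t) + h_g(e) + h_\tau(s) = -m(t-e_1) - \rho(t-e_1+1) + m(e-e_1) + m(t-e) + \rho(t-e+1),
\]
whose \(m\)-terms cancel. The \(\rho\)-terms cancel because \(t-e_1+1 \equiv t-e+1 \pmod d\): every entry spacing \(e-e_1\) is a multiple of \(d\). The normalizations also hold, since \(h_\tau(1) = \rho(1) = 0\) and \(h_g(e_1) = 0\).

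The consequences then follow from (i). Taking \(m\neq 0\) shows that the linear trend, and any level other than \(\tau(1)\), are unidentified. For \(d>1\), take \(\rho\) equal to the indicator of a residue class \(r \not\equiv 1\) and \(m=0\). Then \(\lambda'h_\tau = \sum_{s\equiv r}\lambda_s\), so identification forces zero class sums. An ordinary second difference has weights \((1,-2,1)\) on three consecutive tenures; when \(d>1\) these fall in at least two classes with nonzero sums, at least one of which is not the class of \(1\).

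For (iii), under C\(_d\) I want to show that \(\mathcal{K}_\tau\) contains nothing beyond the span of the (ii) directions. I take \(h\in\mathcal{K}\) and subtract an (ii)-direction so that \(h_\tau\) vanishes at tenure \(1\) and at one increment per component, the subtracted direction's \(m\) and \(\rho\) being chosen to fix the slope and the \(d\)-periodic part.

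The main step is the increment graph argument, which I would adapt from the companion identification paper. Two cohorts observed in the same consecutive periods \(t,t+1\) give, after differencing in \(t\) within each cohort and subtracting across cohorts, the equation \(\Delta h_\tau(u) = \Delta h_\tau(u')\). The \(\alpha\) terms cancel in this subtraction, and the \(g\) terms cancel under within-cohort differencing. So \(\Delta h_\tau\) is constant on components of the graph. Under C\(_d\) the components are the residue classes, so \(\Delta h_\tau\) is \(d\)-periodic. The reachability clause of C\(_d\) then lets me sum increments from tenure \(1\) to every observed tenure. Hence \(h_\tau\) equals a linear term plus a \(d\)-periodic term, exactly the (ii) form. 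The characterization of \(\lambda\) follows by computing annihilators of \(\{s-1\}\) and the class indicators. For \(\Delta_d^2\), the weights \((1,-2,1)\) lie in a single class, sum to zero, and give \(\sum\lambda_s(s-1)=0\), so the second differences are identified.

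The counterexample for entries at waves \(1,3,4\) is checked by listing the seven cells, verifying rank seven, and substituting the displayed kernel vector. For the trapezoid dimension count I would identify \(\dim\mathcal{K}_\tau\) with the number of free constants of \(\Delta h_\tau\), one per component. I would then count components by tracking which increments the first two cohorts share within \(w\) periods; with three cohorts this gives \(\max\{d,(e_2-e_1)-w\}\). With more cohorts, extra cohorts only add edges, which gives the upper bound. This counting is the part I expect to be fiddliest.

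For (iv), a uniform mode shift at \(t^*\) adds a constant to \(\alpha(t^*)\), so the model is unchanged. A tenure-indexed \(m(s)\) enters only through the sum \(\tau(s)+m(s)\). The map \((m,\tau)\mapsto \mu\) therefore factors through that sum, and \((m+\epsilon,\tau-\epsilon)\) is observationally equivalent for every \(\epsilon\), so no functional separating \(m\) from \(\tau\) is identified.
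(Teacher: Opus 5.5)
Your arguments for (i), (ii) and (iv) are the paper's. For the spanning claim in (iii) you give a self-contained proof where the paper only cites Theorem~1(c) of the companion paper. Differencing within and across cohorts makes $\Delta h_\tau$ constant on components of the increment graph, C$_d$ makes it $d$-periodic on the observed increments, and reachability integrates it to $m(s-1)+\rho(s)$. That part is correct and makes the clause independent of the companion. The gap is in the trapezoid clauses, which the paper takes from the companion's Lemma~1.

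\emph{Realization.} One constant per component gives only that $\dim\mathcal{K}_\tau$ is at most the number of components. Equality needs every function constant on components to be realized by some element of $\mathcal{K}$. In a trapezoid this is quick but must be said. Put $h_g(e_1)=0$ and $h_\alpha(t)=-h_\tau(t-e_1+1)$, using the first cohort, which is observed in every period. Then $-h_\alpha(t)-h_\tau(t-e+1)$ does not depend on $t$, because its increment $\Delta h_\tau(t-e_1+1)-\Delta h_\tau(t-e+1)$ compares two increments joined by an edge.

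\emph{The upper bound.} More seriously, ``extra cohorts only add edges, which gives the upper bound'' fails. Edge monotonicity compares a schedule with a sub-schedule, and the three-cohort formula for a sub-schedule carries that sub-schedule's stride, which can exceed $d$. Take entries at waves $1,7,9,10$ observed through wave $15$. Then $d=1$ and $w=5$, so the claimed bound is $\max\{1,6-5\}=1$, and C$_1$ does hold. But the three-cohort sub-schedules containing wave $1$ (needed for the vertex sets to agree), namely $(1,7,9)$, $(1,7,10)$ and $(1,9,10)$, have $2$, $3$ and $3$ components. The sufficient follow-up condition for C$_d$ with four or more cohorts rests on the same bound, so it is also unproved.

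\emph{The missing idea} is to count in $\mathbb{Z}/a\mathbb{Z}$ with $a:=e_2-e_1$.
\begin{itemize}
\item The first pair joins $u$ and $u+a$ for every $u\le T-e_2$, so each component is a union of residue classes mod $a$.
\item Each cohort $j\ge 3$ joins $[u]$ and $[u+e_j-e_1]$ at least for $u=1,\dots,w$.
\item Every edge stays within one coset of $d\mathbb{Z}/a\mathbb{Z}$.
\end{itemize}
Let $O$ be the block of the $a-w$ residues $w+1,\dots,a$. A component avoiding $O$ is invariant under translation by every $e_j-e_1$, and hence is a whole coset. If $a-w\ge d$, $O$ meets every coset, so every component meets $O$ and there are at most $a-w$ components. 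If $a-w<d$, each coset meets $O$ at most once and is a single component, so there are exactly $d$. For three cohorts, only the last cohort's at most $w$ edges merge the $a$ classes, so at least $a-w$ components remain. Together with the trivial lower bound $d$, this gives the exact formula.

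Your phrase ``increments the first two cohorts share within $w$ periods'' puts the $w$-window on the wrong cohorts. The first two share every period from $e_2$ to $T$; the window belongs to the last cohort.
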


\begin{proof}
\leavevmode

\begin{enumerate}
\def\labelenumi{(\roman{enumi})}
\tightlist
\item
  Two parameter vectors generate the same cell means iff their
  difference lies in \(\mathcal{K}\); with anchors, the admissible
  differences are the elements of \(\mathcal{K}\) whose tenure
  components satisfy \(A h_\tau = 0\), and since \(A\) acts on \(\tau\)
  alone their tenure projection is \(\mathcal{K}_\tau \cap \ker A\). A
  linear functional is constant on an affine family iff it annihilates
  the family's direction space. (ii) At every fielded cell,
  \(h_\alpha(t) + h_g(e) + h_\tau(s) = m[-(t - e_1) + (e - e_1) + (s - 1)] + [\rho(t - e + 1) - \rho(t - e_1 + 1)] = 0\),
  because \(s - 1 = t - e\) and \(e \equiv e_1 \pmod d\). The functional
  statements follow by taking \(\rho = 0\), or \(m = 0\) with \(\rho\)
  the indicator of one residue class \(r \not\equiv 1\); for \(d > 1\)
  the weights \((1, -2, 1)\) of an ordinary second difference at
  \(s - 1, s, s + 1\) fall into more than one residue class, and at
  least one class other than that of tenure one has a nonzero sum. (iii)
  The spanning statement is Theorem 1(c) of the companion identification
  paper, and the trapezoid statements are the component count, the
  sufficient follow-up condition, the three-cohort formula and the upper
  bound of its Lemma 1; the functional characterization is (i) with
  \(\mathcal{K}_\tau\) spanned by \(s - 1\) and the residue-class
  indicators. In the example the increment graph has the components
  \(\{1, 3\}\) and \(\{2\}\), and the rank and the null vector are
  verified by exact arithmetic (Section~\ref{sec-ademp}). (iv) A uniform
  switch adds a constant to all cells at \(t^{*}\), which is a change in
  \(\alpha(t^{*})\). A tenure-dependent assignment adds \(m(s_{it})\), a
  function of the same coordinate as \(\tau(s_{it})\); the columns of
  \(X_{\mathcal{S}}\) that carry \(\tau(s)\) and \(m(s)\) are identical,
  so their sum is the only identified object. \(\square\)
\end{enumerate}

\end{proof}

A proposed schedule determines the matrix \(X_{\mathcal{S}}\), so its
identified space can be computed before any interview is conducted.
Anchors enter the same calculation. A negative-control battery at a
refreshment episode anchors the level of \(\tau\) at that episode's
tenure (Theorem~\ref{thm-tworef}). Under C\(_d\), a level anchor at a
non-entry tenure \(s > 1\) congruent to one modulo \(d\) removes the
affine direction and none of the periodic ones (at \(s = 1\) it would
restate the normalization \(\tau(1) = 0\)). A plateau restriction over
\(d + 1\) consecutive tenures removes all \(d\) directions, and an
external estimate of the linear part of \(\tau\) removes the affine
direction only. The Japanese panel of Section~\ref{sec-implementation}
(entries at waves 1, 5 and 13, observed through wave 19) has 41 cells,
39 free parameters and rank 35; it satisfies C\(_4\), so the identified
set of the path is four-dimensional. It identifies the eleven centred
lag-4 second differences \(\Delta_4^2\tau(s)\), \(s = 5, \dots, 15\),
and cross-class contrasts such as \(\Delta^2\tau(2) - \Delta^2\tau(6)\)
(a 14-dimensional space of functionals), but no ordinary second
difference. Level anchors at tenures 5, 9 and 13, all congruent to one
modulo four, which Theorem~\ref{thm-tworef}(a) would supply with an
age-matched second refreshment, leave the three periodic directions. So
do the anchors the panel actually supplies, a level at tenure 5 and the
increment \(\tau(13) - \tau(9)\). A plateau over five consecutive
tenures would leave none (Section~\ref{sec-ademp}).

\subsection{The stride of a refreshment schedule}\label{sec-stride}

For staggered trapezoids, Lemma 1 of the companion identification paper
gives a sufficient follow-up condition for the identified set of the
path to have the minimal dimension \(d\), and the exact dimension with
three cohorts. The proposition states them as three design rules.

\begin{proposition}[Refreshment spacing as an identification
decision]\protect\hypertarget{prp-stride}{}\label{prp-stride}

Let a panel enter cohorts at waves \(e_1 < \dots < e_K\) and be observed
at every wave through \(T\), with \(d\) the stride and \(w = T - e_K\).

\begin{enumerate}
\def\labelenumi{(\roman{enumi})}
\item
  \emph{(The identified set without anchors.)} Without anchors, the
  identified set of the path contains \(\tau + \{m(s-1) + \rho(s)\}\),
  \(\rho\) \(d\)-periodic with \(\rho(1) = 0\): a schedule with stride
  \(d > 1\) cannot distinguish, from cell means, a path from one that
  differs from it by a \(d\)-periodic pattern. Under C\(_d\) this is the
  whole identified set, and the identified curvature is spanned by the
  centred lag-\(d\) second differences and contrasts across residue
  classes.
\item
  \emph{(Coprime spacing.)} Adding a cohort at a wave \(e'\) with
  \(\gcd(d, e' - e_1) = 1\) makes the stride one. If in addition the
  last cohort of the enlarged schedule is followed for at least
  \((e_2 - e_1) - 1\) periods after its entry, where \(e_2\) is the
  second-earliest entry after the addition, the identified set is the
  affine line and every second difference of \(\tau\) is identified.
\item
  \emph{(Follow-up.)} With three cohorts, the identified set has
  dimension \(\max\{d, (e_2 - e_1) - w\}\): each additional period of
  follow-up of the last cohort removes one dimension until the floor
  \(d\) is reached. With more cohorts, \(w \ge (e_2 - e_1) - d\) remains
  sufficient for dimension \(d\) and \(\max\{d, (e_2 - e_1) - w\}\) is
  an upper bound, but shorter follow-up of the last cohort can suffice:
  with entries at waves 1, 5, 6 and 9 and final wave 9, the last cohort
  has no follow-up and the identified set is nevertheless the affine
  line.
\end{enumerate}

\end{proposition}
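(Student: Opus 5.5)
Since Proposition~\ref{prp-stride} restates Theorem~\ref{thm-rank} and Lemma 1 of the companion identification paper as design rules, the proof reduces to specializing those results to staggered trapezoids and verifying the explicit example. For (i), Theorem~\ref{thm-rank}(ii) already places every vector \(h_\tau(s) = m(s-1) + \rho(s)\), with its matching \(h_g\) and \(h_\alpha\), in \(\mathcal{K}\). The identified set of the path is therefore \(\tau + \mathcal{K}_\tau\) and contains the stated family. Under C\(_d\), Theorem~\ref{thm-rank}(iii) says these directions span \(\mathcal{K}_\tau\), so the containment becomes an equality. The functionals annihilating \(\mathcal{K}_\tau\) are then those with \(\sum_s \lambda_s(s-1) = 0\) and zero sums within each residue class \(r \not\equiv 1\). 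I would check that the centred lag-\(d\) second differences and the cross-class contrasts satisfy both conditions. Each \(\Delta_d^2\) has weights \((1,-2,1)\) at \(s-d, s, s+d\), all in one residue class and annihilating the linear term.

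For (ii), adding an entry at \(e'\) leaves the first entry unchanged, or, if \(e' < e_1\), the spacings are recomputed from \(e'\). In either case the enlarged set of spacings contains \(d\) (or its multiples generating \(d\)) together with \(e' - e_1\). Hence the gcd divides \(\gcd(d, e'-e_1) = 1\), and the new stride is one. The follow-up hypothesis \(w \ge (e_2 - e_1) - 1\) is exactly the sufficient condition of Theorem~\ref{thm-rank}(iii) with \(d = 1\), so C\(_1\) holds. The \(d = 1\) clause of Theorem~\ref{thm-rank}(iii) then gives that \(\mathcal{K}_\tau\) is spanned by \(s-1\), which is the affine line, and that every ordinary second difference is identified. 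One point needs care: \(e_2\) here means the second-earliest entry of the enlarged schedule, and that is the quantity appearing in the trapezoid condition.

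For (iii), the three-cohort formula \(\max\{d, (e_2-e_1) - w\}\) and the upper bound for more cohorts are Theorem~\ref{thm-rank}(iii), that is, Lemma 1 of the companion paper. The only genuinely new content is the four-cohort example with entries \(1,5,6,9\) and \(T = 9\). There \(d = \gcd(4,5,8) = 1\) and \(w = 0\), so the bound gives \(\max\{1, 4\} = 4\), yet I claim the dimension is one.

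The example is the main obstacle. I would first try the increment-graph route, since by Theorem~\ref{thm-rank}(iii) the dimension of \(\mathcal{K}_\tau\) equals the number of components. The observed increments are \(u = 1,\dots,8\) from cohort 1, \(u \le 4\) from cohort 5 and \(u \le 3\) from cohort 6. Cohort 9 contributes no increments but is linked through \(g\) and \(\alpha(9)\), and its cell has tenure one, already covered. Cohorts 1 and 5 share consecutive periods \((t,t+1)\) for \(t \ge 5\), so increments \(u\) and \(u-4\) are joined: \(5\)–\(1\), \(6\)–\(2\), \(7\)–\(3\), \(8\)–\(4\). Cohorts 1 and 6 join \(u\) and \(u-5\): \(6\)–\(1\), \(7\)–\(2\), \(8\)–\(3\). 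Cohorts 5 and 6 join \(u\) and \(u-1\): \(2\)–\(1\), \(3\)–\(2\), \(4\)–\(3\). Together these connect all of \(\{1,\dots,8\}\) into a single component. C\(_1\) therefore holds, and the kernel is the affine line. As a fallback I would verify the rank directly: \(X_{\mathcal{S}}\) has \(9+4+4+1 = 18\) cells and 20 free parameters, and the claim amounts to rank 18 with a kernel equal to the affine direction only. That can be confirmed by exact arithmetic, as in Section~\ref{sec-ademp}.
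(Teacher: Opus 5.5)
Your treatment of (i), (ii) and the general clauses of (iii) matches the paper's proof: both read these off Theorem~\ref{thm-rank}(ii)--(iii) and Lemma 1 of the companion paper. The enlarged stride is forced to be \(\gcd(d, e'-e_1) = 1\) because every \(e_k - e_1\) is a multiple of \(d\).

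The only difference is the four-cohort example, which the paper certifies only by the exact rank computation in the appendix. You derive it instead from the component count of the increment graph, and your edges are correct. The cohort pairs entering at waves 1 and 5, 1 and 6, and 5 and 6 join \(u\) to \(u-4\), \(u-5\) and \(u-1\) respectively. This connects \(\{1,\dots,8\}\), so \(\mathcal{K}_\tau\) is the affine line and C\(_1\) holds.

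Your route gives a hand-checkable proof. It also shows why the bound \(\max\{d,(e_2-e_1)-w\} = 4\) is loose here: the bound sees only \(e_1\), \(e_2\) and the last cohort's follow-up, whereas the cohort entering at wave 6, followed for three periods, supplies the lag-1 and lag-5 links. The paper's computation, for its part, does not depend on applying the component-count lemma correctly. It also delivers the paper's count of trapezoids that reach dimension \(d\) below the sufficient follow-up.

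Your fallback is miscounted, though. The cohort entering at wave 5 is observed at waves 5--9, so the support has \(9+5+4+1 = 19\) cells against 20 free parameters. The claim to verify is therefore rank 19 with nullity one. Rank 18 would leave a two-dimensional kernel, contradicting the claim, because no kernel vector has zero tenure component (the first cohort is observed in every period). Since your increment-graph argument does not use the fallback, this is a slip rather than a gap.
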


\begin{proof}
\leavevmode

\begin{enumerate}
\def\labelenumi{(\roman{enumi})}
\tightlist
\item
  is Theorem~\ref{thm-rank}(ii)--(iii). (ii): the stride after the
  addition is \(\gcd(d, e' - e_1)\), because every \(e_k - e_1\) is a
  multiple of \(d\); the dimension statement is the sufficient follow-up
  condition of the companion identification paper's Lemma 1 with stride
  one. (iii) is the three-cohort formula and the upper bound of that
  lemma; the four-cohort example is verified by exact rank computation
  (Section~\ref{sec-ademp}), which also shows that among the four- and
  five-cohort trapezoids with first entry at wave 1, later entries in
  \(\{2, \dots, 10\}\) and up to eight periods of follow-up, 176 attain
  dimension \(d\) with \(w < (e_2 - e_1) - d\). \(\square\)
\end{enumerate}

\end{proof}

Applied to the panel of Section~\ref{sec-implementation}: its entries at
waves 1, 5 and 13 give \(d = 4\). A fourth refreshment can restore
stride one only if it enters at a wave \(e'\) with
\(\gcd(4, e' - 1) = 1\), that is, at an even wave: entry at wave 20
gives \(\gcd(4, 19) = 1\) and entry at wave 22 gives
\(\gcd(4, 21) = 1\), whereas entry at wave 21 (\(\gcd(4, 20) = 4\))
leaves the stride at four and entry at wave 19 (\(\gcd(4, 18) = 2\))
halves it. Stride one is necessary for identifying ordinary curvature
but not sufficient: with the entry at wave 20 the identified set has
dimension 4, 3, 2 and 1 for final waves 20, 21, 22 and 23, so three
periods of follow-up after the new entry are needed; here the sufficient
condition of Proposition~\ref{prp-stride}(ii),
\(w \ge (5 - 1) - 1 = 3\), is also necessary. Refreshment timing is
usually decided on budgetary and coverage grounds. By the proposition,
it also determines what is identified. The same arithmetic applies to
any refreshed panel: a panel refreshed every two years on an annual
interview calendar has \(d = 2\) and identifies curvature only at lag
two, and a panel refreshed at irregular intervals whose gaps have no
common divisor can identify ordinary curvature once the follow-up is
sufficient. Rotation designs with interrupted participation, such as the
4--8--4 pattern of the US Current Population Survey, are treated
separately in the companion identification paper.
Table~\ref{tbl-strides} collects a few cases; every entry is an exact
rank computation.

\begin{table}[htbp]
\caption{Strides and identified-set dimensions of common refreshment
schedules (staggered trapezoids observed at every wave through \(T\)),
by exact rank computation. The last column applies when the dimension
equals \(d\).}\label{tbl-strides}
\begin{tabular}{@{}
  >{\raggedright\arraybackslash}p{(\linewidth - 8\tabcolsep) * \real{0.2600}}
  >{\raggedright\arraybackslash}p{(\linewidth - 8\tabcolsep) * \real{0.0700}}
  >{\raggedright\arraybackslash}p{(\linewidth - 8\tabcolsep) * \real{0.2800}}
  >{\raggedright\arraybackslash}p{(\linewidth - 8\tabcolsep) * \real{0.1500}}
  >{\raggedright\arraybackslash}p{(\linewidth - 8\tabcolsep) * \real{0.2400}}@{}}
\toprule\noalign{}
\begin{minipage}[b]{\linewidth}\raggedright
Schedule (entry waves)
\end{minipage} & \begin{minipage}[b]{\linewidth}\raggedright
stride \(d\)
\end{minipage} & \begin{minipage}[b]{\linewidth}\raggedright
dimension of the identified set
\end{minipage} & \begin{minipage}[b]{\linewidth}\raggedright
follow-up needed for dimension \(d\)
\end{minipage} & \begin{minipage}[b]{\linewidth}\raggedright
identified curvature (dimension \(d\))
\end{minipage} \\
\midrule\noalign{}
annual refreshment: \(1, 2, 3, \dots\) & 1 & 1 & none & all
\(\Delta^2\tau\) \\
biennial refreshment: \(1, 3, 5, \dots\) & 2 & 2 & none
(\(e_2 - e_1 = d\)) & \(\Delta_2^2\tau\) and cross-class contrasts \\
Japanese panel: \(1, 5, 13\) & 4 & 4 & none (\(e_2 - e_1 = d\)) &
\(\Delta_4^2\tau\) and cross-class contrasts \\
Japanese panel with a fourth entry at wave 20 & 1 & 4, 3, 2, 1 for
\(T = 20, 21, 22, 23\); 1 thereafter & \(T \ge 23\) & all
\(\Delta^2\tau\) once \(T \ge 23\) \\
Japanese panel with a fourth entry at wave 21 & 4 & 4 & none &
\(\Delta_4^2\tau\) and cross-class contrasts \\
\(1, 7, 10\) & 3 & \({\max\{3,\ 6 - (T - 10)\}}\) & \(T \ge 13\) &
\(\Delta_3^2\tau\) and cross-class contrasts once \(T \ge 13\) \\
a single refreshment at wave \(e_2\) & \(e_2 - 1\) & \(e_2 - 1\) & none
& \(\Delta_{e_2-1}^2\tau\) and cross-class contrasts \\
\bottomrule\noalign{}
\end{tabular}
\end{table}

In the last row, a panel with a single refreshment sample has stride
equal to the refreshment lag, so it identifies curvature only at that
lag, by comparing the entrants with the incumbents at tenures \(1\),
\(1 + (e_2 - 1)\) and \(1 + 2(e_2 - 1)\). Most of the applied literature
uses this design, and its practice of reporting level contrasts under
normalizations instead of shapes is consistent with the stride.

\begin{theorem}[The balance--information
trade-off]\protect\hypertarget{thm-balance}{}\label{thm-balance}

Suppose aggregates are published as tenure-balanced averages: at every
period the published statistic is
\(\bar y_t = \sum_s w_s\, \mu(t - s + 1, t)\) with weights
\(w_s \ge 0\), \(\sum_s w_s = 1\), constant across \(t\) (one-level
rotation balanced on time in sample). Then, under the additive structure
with cohort effects absorbed into the period path, \[
\bar y_t \;=\; \alpha(t) + \bar\tau, \qquad \bar\tau := \sum_s w_s\,\tau(s),
\] so that (i) the tenure component of the published level is the
constant \(\bar\tau\): it does not vary over time, so changes in the
published series are free of time-in-sample bias while its level carries
the bias \(\bar\tau\) in every period (balancing holds the bias constant
but does not eliminate it); and (ii) the published series depends on
\(\tau\) only through \(\bar\tau\), which is not separately identified
from the level of \(\alpha\), so no nonconstant functional of \(\tau\),
in particular no normalization-free functional (differences, second
differences, saturation tests), is identified from the published series
alone. The information is lost because only the balanced aggregate is
released, not because the design is balanced: the rotation-group
components from which the aggregate is formed identify what
Theorem~\ref{thm-rank} says the corresponding staggered support
identifies.

\end{theorem}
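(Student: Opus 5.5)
The plan is to substitute the additive cell-mean structure directly into the published aggregate and then read off both claims from the resulting decomposition. At period \(t\) the cell entered at \(e = t - s + 1\) has tenure \(s\), so \(\mu(t-s+1,t) = \alpha(t) + g(t-s+1) + \tau(s)\). The hypothesis that cohort effects are absorbed into the period path means that \(\sum_s w_s\, g(t-s+1)\) is folded into \(\alpha(t)\). Equivalently, under one-level rotation the cohort composition at each \(t\) is fixed by the weights, and we redefine \(\alpha(t)\) to include this weighted cohort term. Using \(\sum_s w_s = 1\), the period term factors out, and we obtain \(\bar y_t = \alpha(t) + \sum_s w_s \tau(s) = \alpha(t) + \bar\tau\), with \(\bar\tau\) independent of \(t\) because the weights are constant across \(t\).

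For (i), two consequences follow at once. First, \(\bar y_t - \bar y_{t'} = \alpha(t) - \alpha(t')\), so changes in the published series contain no \(\tau\). Second, the level of the series contains \(\bar\tau\) in every period, and \(\bar\tau \ne 0\) in general: the normalization \(\tau(1) = 0\) forces \(\bar\tau = 0\) only when all the weight sits on tenure one. This gives the "constant, not eliminated" statement.

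For (ii), I would argue through a kernel, exactly as in Theorem~\ref{thm-rank}(i). The map \((\alpha, \tau) \mapsto (\bar y_t)_t\) factors through \((\alpha(t) + \bar\tau)_t\), so \(\tau\) enters only via the scalar \(\bar\tau\). Given any perturbation \(h_\tau\) with \(\bar h := \sum_s w_s h_\tau(s)\), set \(h_\alpha(t) = -\bar h\) for every \(t\). This pair leaves every \(\bar y_t\) unchanged, and it is admissible because \(\alpha\) is unrestricted while \(h_\tau\) only needs \(h_\tau(1) = 0\). Hence every direction in \(\tau\) with \(h_\tau(1) = 0\) lies in the kernel's tenure projection. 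A functional \(\lambda'\tau\) is identified iff \(\lambda\) annihilates all such vectors, that is, iff \(\lambda_s = 0\) for every \(s \ne 1\). Such a functional is a multiple of \(\tau(1)\), which is constant (zero), so no nonconstant functional of \(\tau\) is identified. Differences, second differences and saturation contrasts are examples of nonconstant functionals and are therefore not identified.

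The closing remark, that the loss comes from aggregation and not from balance, follows because the rotation-group cells \(\mu(e,t)\) themselves form the staggered support to which Theorem~\ref{thm-rank} applies. The aggregate is a fixed linear image of those cells, so it can only lose information relative to them.

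I expect the main obstacle to be stating precisely what "cohort effects absorbed into the period path" licenses, since \(\sum_s w_s g(t-s+1)\) does vary with \(t\). I would make this explicit as a reparametrization \(\tilde\alpha(t) := \alpha(t) + \sum_s w_s g(t-s+1)\). The nonidentification argument then runs with \(\tilde\alpha\) free, which is legitimate because \(\alpha\) is unrestricted, so \(\tilde\alpha\) ranges over all sequences regardless of \(g\).
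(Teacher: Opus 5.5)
Your proposal is correct and follows the paper's proof: you substitute the additive structure, fold the weighted cohort term $\sum_s w_s\, g(t-s+1)$ into the period path, and observe that $\tau$ enters the published series only through $\bar\tau$, which a constant shift of $\alpha$ offsets. Your compensating perturbation $h_\alpha(t) \equiv -\sum_s w_s h_\tau(s)$ makes precise the paper's remark that $\alpha(t) + \bar\tau$ does not separate its summands, and because it makes every $\tau$ with $\tau(1) = 0$ observationally equivalent, it extends your conclusion from the linear functionals you treat to all nonconstant functionals, as the statement requires.
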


\begin{proof}
Substituting the additive structure into the balanced average gives
\(\bar y_t = \alpha(t) + \sum_s w_s g(t - s + 1) + \sum_s w_s \tau(s)\).
With cohort effects absorbed into the period path (or under sampling
equivalence \(g \equiv 0\)), the middle term is a function of \(t\)
alone and joins \(\alpha\). The map \(\tau \mapsto (\bar y_t)_t\) then
depends on \(\tau\) only through the scalar \(\bar\tau\), which is (i);
(ii) follows because a functional of \(\tau\) is identified from the
published series only if it is a function of the series, hence of
\(\bar\tau\), and because \(\alpha(t) + \bar\tau\) does not separate its
two summands.
\end{proof}

Part (i) restates, in the present notation, a property the rotation
literature established. Bailar (\citeproc{ref-bailar1975}{1975, 26}, eq.
(4.2)) shows that under a constant rotation group bias the level
estimates are biased while month-to-month changes are not, and concludes
that ``neither estimator is `best' for every statistic''
(\citeproc{ref-bailar1975}{Bailar 1975, 30}). Park, Kim, and Choi
(\citeproc{ref-parkkimchoi2001}{2001, 1488}, eq. (6)) establish the
constant level bias and its cancellation in changes for balanced
one-level rotation designs, and their component-level model
(\citeproc{ref-parkkimchoi2001}{Park, Kim, and Choi 2001, 1489}, eq.
(7)) supports estimating rotation group bias from the rotation-group
components. Part (ii) adds an explicit identification statement for the
case in which only the fixed-weight aggregate is released (the series
then depends on the time-in-sample path only through its weighted
average, which is confounded with the period level), together with its
connection with the rank calculation of Theorem~\ref{thm-rank}. It is
the design-side counterpart of the absorption result in the companion
identification paper (its Theorem 3): there, the orthogonality that
makes two-way fixed-effects estimates invariant to affine drift also
makes the affine drift unidentifiable from them. Releasing
disclosure-approved rotation-group summaries alongside the balanced
aggregate can preserve information that the aggregate loses. We leave
open the allocation problem of how to distribute a fixed budget of fresh
interviews across injection times to maximize the information on the
drift coefficient; numerical evaluation of the design information matrix
on lattice designs suggests that two injections at maximal temporal
spread dominate, but we have no proof and state no theorem.

\section{Implementation: JLPS}\label{sec-implementation}

The Japanese Life Course Panel Surveys (JLPS) provide the three-cohort
configuration of Theorem~\ref{thm-tworef}, with a negative-control
battery of 23 time-invariant childhood-circumstance items asked once at
entry in every cohort. The empirical contrasts below illustrate the
methods under the stated identifying restrictions; the accompanying
diagnostics assess compatibility among restrictions and do not validate
any one identifying assumption. The birth years of the 2011 entrants lie
within those of the incumbents (1966--1986): the birth-year supports
overlap, and the analysis additionally assumes cohort comparability as
specified in (A1), equal latent distributions on the common support,
which overlap does not establish and which differences in sampling
frames, nonresponse at entry, mode (Section~\ref{sec-increments}) or
coding could violate. The battery consists of the 23 items whose
question stems appear in all three cohorts' entry questionnaires (the
map from each item to its variable in the three entry questionnaires is
built from the data provider's variable labels; it is not redistributed
with the replication materials and is available from the author to
licensed users). We assume, and do not test, equality of response
categories and of administration across the entry questionnaires. The
2019 episode is the generational case of
Corollary~\ref{cor-generational}: entrants born 1987--1998 against
incumbents born 1966--1986, an empty overlap, verified in the data.

The estimand implemented at that episode is therefore the incumbent
increment: the 2007 cohort (tenure 13, \(n = 2{,}638\) at wave 13)
against the 2011 cohort (tenure 9, \(n = 619\)) on their common
birth-year support, with the same-pair negative-control contrast
\(\hat\beta^{N}_2(13) - \hat\beta^{N}_2(9)\) as the correction under
(A2). Across the 23 items, stratified by three-year birth band and sex
(Table~\ref{tbl-ncbattery}), the median absolute standardized contrast
is \(.033\) and the inverse-variance pooled value is \(-.035\). One item
in 23 is rejected at a Benjamini--Hochberg \(q < .10\) (the item
recording a telephone, mobile telephones included, in the childhood
home, whose availability varied with the era of childhood), and nine are
equivalent by two one-sided tests with margin \(.10\) at the 5\% level,
without multiplicity adjustment. The standard error \(.0095\) of that
pooled value, computed as \((\sum_k 1/\mathrm{se}_k^2)^{-1/2}\), treats
the 23 item contrasts as independent; because the items are answered by
the same respondents, we also computed a person-level bootstrap within
cohort (\(2{,}000\) replications, every item contrast, stratum weight
and pooling weight recomputed), which keeps their covariance. The
bootstrap standard error is \(.0186\), 2.0 times the independence value
(a variance ratio, or design effect, of \(3.84\); the mean correlation
between item contrasts across replications is \(.116\)), and the 95\%
percentile interval is \([-.072, .001]\), which includes zero. Without
the rejected item the pooled value is \(-.028\) (bootstrap standard
error \(.0185\)). Under (A2) with a common loading \(\Gamma\), increment
estimates that skip the correction would be biased by \(\Gamma\) times
the survivor-composition drift, estimated at about \(.035\,\Gamma\)
standard deviations in magnitude (\(.017\) to \(.052\) over
\(\Gamma \in [.5, 1.5]\)). Because the interval for the pooled value
includes zero, the battery does not establish that the survivor
composition has drifted; it limits the size of the drift at the 95\%
level, and at the far end of the interval the implied bias would be
\(.072\,\Gamma\) (\(.036\) to \(.108\) over the same range of
\(\Gamma\)). A small negative-control contrast measures the drift in
survivor composition on the battery; it removes the target's selection
bias only to the extent that (A2) and its loading hold.

In practice, Theorem~\ref{thm-sharp} yields one empirical check per
item. For a continuous item, \(\sup_y \hat p\, \hat q(y)/\hat f_2(y)\)
over a trimmed support, with \(\hat q\) the stayers' reported density,
is a plug-in estimate of \(\Psi(0)\), the constraint ratio of
Theorem~\ref{thm-bounds} for the candidate of no conditioning (equal to
\(\operatorname{ess\,sup}\bar\pi\) when the true shift is zero), and the
caveat of Remark~\ref{rem-trim} applies to it. At the population level,
\(\Psi(0) > 1\) excludes the no-conditioning candidate. Its sample
analogue is an exploratory diagnostic until sampling uncertainty has
been calibrated. At the population level, likewise, a value of
\(\Psi(0)\) well below one, together with a log-density that does not
vary too fast, implies that shifts near zero belong to the location set,
because \(\Psi(h') \le \Psi(0)\,e^{L|h'|}\) when \(\log f_2\) is
\(L\)-Lipschitz, the argument of Corollary~\ref{cor-funnel}(iii); a
value near one does not imply a tight set, because \(\Psi(0)\) does not
restrict \(\Psi\) elsewhere. For a discrete item, the mass ratio
\(\max_a \hat p\,\hat Q(\{a\})/\hat F_2(\{a\})\) of
Corollary~\ref{cor-discrete} needs neither trimming nor slack, and its
population version is exact; its sample analogue for the Japanese panel
is reported below.

The constructive designs of Section~\ref{sec-constructive} have been
applied to the same panel at the 2011 episode, with the fresh cohort's
follow-up supplying \(k = 4\) and \(k + 1 = 5\) matched waves and the
2007 cohort's entry wave supplying the entry-wave correction
(computations in the R package \texttt{panelcond}; aggregate outputs
only). The arms are 2,797 continuing survivors of five consecutive waves
against 963 fresh entrants, of whom 574 survive four further waves and
540 five. Item-level inference is by a joint person bootstrap with 500
replications that resamples persons within cohort and recomputes every
arm and every group share, so it does not rely on the analytic formula
of Section~\ref{sec-variance}.

The item universe is declared variable by variable in a specification
table checked against the 2011 and 2007 questionnaires
(\texttt{kit/R/15\_item\_scale.csv}). Of the 540 wave-5 variables, 53
are excluded (open-ended and after-coded classifications, dates and the
components of clock times and durations, duplicate recodes, and a
spouse-income bracket whose top code means that there is no spouse).
Each of the others is given a scale, with codes outside the scale
(``other'', ``not fixed'', ``public sector'', ``no parent at the time'')
set to not applicable. The fifteen nominal single-choice questions,
among them party identification, occupation, work status, housing
tenure, marital status and relationship status, enter as 90 zero/one
indicators, one per category, and the four clock times and the duration
of the current relationship are rebuilt from their components by one
rule at both waves. This gives 523 columns: 284 binary, 102 ordered and
41 continuous items, the 90 indicators and 6 derived items.

The counts below cover the 490 columns for which a mean contrast between
the arms is meaningful. We exclude the other 33 from every count and
every multiplicity adjustment: 31, in the marriage-history block,
because their item-nonresponse rate differs between continuing
respondents and entrants by more than fifteen percentage points, one as
a follow-up to such a question, and one sensitivity variant of the
bedtime item. The flag is a screening rule on the observed nonresponse
rates; the questionnaire filters behind the difference have not been
verified against the instruments. The detection counts below are not
sensitive to the rule. Moving the threshold to ten or to twenty points
leaves the flagged set, and with it every count, unchanged. Dropping the
routing exclusion altogether returns the 32 columns to the counts and
changes the flagged numbers by at most one (31, 21, 14, 29 and 22 in the
order below), with the same three items flagged by all five designs.

The entry-wave correction exists for the 320 columns whose 2007
counterpart shows the same codes after the recodes of the table (a party
listed in 2007 but not in 2011 and one listed in 2011 but not in 2007,
and a 2007 after-code for agriculture, are folded into ``other'' at both
waves). Of these, 16 are among the routing-flagged columns and 35 have a
2007 version that differs in reference period, options or format; their
entry-wave estimators are computed but stay outside the counts, and
admitting the latter would raise the two entry-wave counts below by two
each (\texttt{kit/KIT\_README.md}). The remaining 269 columns enter the
main entry-wave analysis. Excluding the sensitivity variant leaves 268
columns in the detection counts: 181 columns whose 2007 question has the
same wording, response options and eligibility criteria, allowing minor
differences of layout (classes A and B), and 87 asked in 2007 only of a
subgroup (class C).

Two entry-wave corrections are computed, and their respondent sets are
stated because the restrictions are stated on them. Let \(T\) be the
survivors with a substantive answer at the comparison wave (the
continuing arm for the entry-wave corrections and survival matching,
symmetric matching additionally requiring a response at \(t + 1\); for
an item asked only of a subgroup, its current members \(G_t\) by
construction). Let \(E\) be the entrants of the continuing cohort who
were eligible for the item at entry, \(G_e\), and gave a substantive
entry answer; here \(G_e\) is the set reached by the entry
questionnaire's routing: for a class-C item, the 2007 subgroup named in
the specification table (the employed, employees, the married,
respondents with a partner or parents); for an item asked of everyone,
the whole cohort. Let \(P = T \cap E\), and let \(F\) be the fresh
entrants with a substantive answer. The unadjusted correction is
\(\bar Y_T - (\bar Y^{e}_P - \bar Y^{e}_E) - \bar Y^{*}_F\), with
\(\bar Y^{e}\) the mean entry answer: the sample form of
Equation~\ref{eq-ec} with the answerers in place of the cohorts. Its
target is \(h^{S}\) for the survivors in \(T\), and it equals that
target when the fresh answerers stand for \(G_t\) and the entry
answerers for \(G_e\) and for \(T \cap G_e\) (representative item
completion, the requirement also placed on the mass diagnostic below),
and
\(\mathbb{E}[Y^{*}_t \mid T] - \mathbb{E}[Y^{*}_t \mid G_t] = \mathbb{E}[Y_e \mid T \cap G_e] - \mathbb{E}[Y_e \mid G_e]\).
When the eligible population is unchanged across waves, the bridge
reduces to B4 with the response indicator included in the conditioning
set. Classes A and B describe questionnaire comparability and are not
sufficient for unchanged eligibility: membership in groups such as the
non-working or the unmarried can change even under an identical routing
rule. The eligible population changes between the waves for every
class-C item by construction, since its 2007 filter differs from the
2011 one, and for any item routed on a status that can change, such as
the follow-up asked of those not working or the reasons for remaining
single. In such cases a person who marries or stops working between the
waves contributes to one difference and not to the other. The condition
is then a transport of the selection bias across the two eligible
populations; B4 within one fixed population does not imply it, and
matching wording and coding do not establish it. Two artificial
populations with no conditioning illustrate this. In one, eligibility at
the comparison wave is the outcome itself; in the other, the same
routing rule is applied at both waves but the status it is applied to
changes. Both give a correction of \(-.5\) with B4 intact. Item
completion that depends on the answer biases the two classes alike:
artificial populations with no conditioning give \(-.5\) when the entry
answer is missing according to its value and \(+.17\) when the fresh
answer is (\texttt{kit/R/15\_test\_ec\_adj.R}).

The covariate-standardized correction fits three least-squares
regressions on the entry covariates (sex, birth year and education): of
the entry answer among \(P\) and among \(E\), and of the fresh answer
among \(F\). It is
\(\bar Y_T - \overline{(m_P - m_E)}_T - \overline{m_F}_T\), every
prediction averaged over the covariate distribution of \(T\): the
unadjusted correction with its two comparison terms replaced by
regression predictions integrated over one population, the sample form
of the standardized version in Proposition~\ref{prp-ec}. Sufficient
conditions for identification are the conditional bridge
\(\delta_e(t \mid x) = \delta_e(e \mid x)\) and representative item
completion within covariate values, with the response indicators
included in the conditioning sets. The regression implementation further
assumes correctly specified conditional means. Where the covariate
support of \(T\) extends beyond that of \(P\), the entry regression is
extrapolated linearly, and its interpretation at those values rests on
the maintained extrapolation model. An earlier version of this analysis
averaged the entry residual over \(P\) and the fresh prediction over
\(T\), two covariate distributions whenever some survivors lack an entry
answer or, for class C, were not eligible at entry. Under that rule
equal conditional selection terms do not give a zero bias (a
sixteen-type population with no conditioning and equal conditional terms
gives \(1/12\)), and the standardized counts below supersede those of
that version. The class-C estimates are reported under the stated
transport restriction, and the entry-wave counts and the diagnostic
\(\mathcal{T}_{\mathrm{NS}}\) are given both for classes A--C and for
classes A and B alone (\texttt{kit/KIT\_README.md}, §3c). The A--B
analysis excludes the documented class-C questionnaire differences, but
the stated bridge and item-completion conditions remain necessary for
any included item whose eligible population changes; it is a sensitivity
comparison, not a fixed-population design.

The counts are exploratory: the Benjamini--Hochberg adjustment is
applied within the stated families, and the screening that defines the
families is not sufficient for false-discovery control.

The designs disagree. At a Benjamini--Hochberg-adjusted threshold of
\(q < .10\), the naive contrast flags 30 of 490 columns; survival
matching and symmetric matching flag 21 and 13 of 489, one item lacking
variation in the matched arms. The unadjusted and covariate-standardized
entry-wave corrections flag 29 and 22 of the 268 columns in the main
entry-wave analysis. Restricting the analysis to the 181 columns of
classes A and B and recomputing the adjustment within that family gives
28 and 19 flags; in the full 268-column family the class-C columns
account for 5 and 3 of the 29 and 22, contributions that are not
additive to the separately adjusted A--B counts. On the common set of
268 columns the five counts are 18, 16, 9, 29 and 22. Three items are
flagged by all five designs: subjective social position, owner-occupancy
of a detached house, and anxiety about married life as a reason for
remaining single. The last is an item asked of the unmarried, whose
eligible population changes between the waves, so that its two
entry-wave flags rest on the transport restriction stated above.

Using bootstrap standard errors, \(\mathcal{T}_{\mathrm{NS}}\) rejects
at the 5\% level for 39 of the 268 columns on which it exists (14.6\%;
it needs the entry wave; 24 of the 181 columns of classes A and B,
13.3\%), \(\mathcal{T}_{\mathrm{SD}}\) for 43 of 489 columns (8.8\%),
and \(\mathcal{T}_{\mathrm{SD}}\) for 73 of 409 item-nonresponse
indicators (17.8\%). The last share overstates the breadth of the
evidence: the items of a question grid are answered or skipped together,
and 50 of the 73 rejections come from four grids. Given B5, the
rejections are read as incompatibilities among the maintained
restrictions rather than as counts of failures of a named kind.

We also compare two person-level response-style composites. The first is
the share of a respondent's answers that fall in an extreme category,
over a list of 46 rating-scale items: agreement, satisfaction and
evaluation scales with verbal anchors, listed by variable name with
their number of categories in the archive
(\texttt{kit/R/15\_style\_items.csv}). The second is the share that fall
in the neutral middle category, over the 28 of these items that have
one. All five designs are computed on the 42 items of the list (25 with
a neutral middle) that were asked with the same codes at entry, so that
the entry-wave and comparison-wave composites, and the five designs, use
one battery. Frequency scales, quantity bands, classifications and
nominal codes are outside it. Extreme-category use falls by .22 to .31
standard deviations and midpoint use rises by .19 to .21 across the five
designs. Both contrasts retain their direction, and broadly similar
magnitudes, across the alternative item sets examined
(\texttt{kit/KIT\_README.md}). Over all 46 rating items, for the three
same-wave designs, extreme-category use falls by .23 to .32 and midpoint
use rises by .20 to .23. Over the 17 agreement items, which are asked of
every respondent, the fall is .20 to .28 and the rise .15 to .17. (The
composites are shares of the items a respondent answered; the
job-characteristics items and the job and marriage satisfaction items
reach only respondents with a job or a spouse, whose shares differ
between the arms.) Over the 21 five-point agreement and satisfaction
scales, the fall is .21 to .29 and the rise .15 to .17, the smallest
midpoint contrasts. With the 24 frequency scales that were asked with
the same codes at entry added to the common battery, extreme-category
use falls by .22 to .29; midpoint use is unchanged, since the frequency
scales have no neutral middle. The rule of an earlier version of this
analysis, which admitted every item with four to seven consecutively
numbered codes, among them frequency scales, occupational rank, smoking
and drinking, and education, gives on the present universe a fall of .21
to .26 and a rise of .20 to .22.

For employment the designs differ: after the entry-wave correction,
continuing respondents are more often employed than entrants, by 4.9
percentage points, a difference flagged by that correction (\(q = .03\),
Benjamini--Hochberg over the 268 columns in the main entry-wave
analysis) and by survival matching (\(q = .05\)), but not after
covariate standardization (\(q = .14\)) or without correction
(\(q = .38\)).

The sample analogue of the mass-domination criterion of
Corollary~\ref{cor-discrete},
\(\max_a \hat p\,\hat Q(\{a\})/\hat F_2(\{a\})\), uses no trimming and
no slack, and it is a complete-case quantity. Here \(\hat Q\) is the
distribution of the stayers' substantive answers and \(\hat F_2\) that
of the fresh entrants who gave one. Its categories are the union of
those observed among the stayers and in the fresh cohort; a codebook
category observed in neither contributes \(0 \le 0\) and is not a
category here. For an item asked only of a subgroup, such as respondents
with a spouse, the comparison is within the subgroup, so \(\hat p\) is
the share of the continuing cohort that survived and answered, divided
by the share of fresh entrants who reached the question (one for an item
asked of everyone). Reading these quantities through
Corollary~\ref{cor-discrete} requires two conditions beyond refreshment
validity. The reach rate must stand for eligibility, which requires the
same eligible share in the two cohorts and routing that can be
distinguished from ordinary nonresponse; in these data an item
nonresponse recorded as missing, and not with a no-answer code, cannot
be distinguished from routing. For this diagnostic, incumbent survivors
without a substantive item response are treated as unobserved and
excluded from the numerator defining \(\hat p\). And the fresh entrants
who answer must be representative of the eligible subgroup,
\(\mathcal{L}(Y^{*} \mid G, A_f = 1) = \mathcal{L}(Y^{*} \mid G)\),
writing \(G\) for eligibility and \(A_f\) for a substantive fresh
answer; refreshment validity concerns the sample, not its answerers, and
does not imply this. Das, Toepoel, and van Soest
(\citeproc{ref-dasetal2011}{2011, 37}) make the corresponding assumption
explicit for their design: no nonresponse in the refreshment sample, or
nonresponse missing at random. Without it, fresh item nonresponse alone
can produce a ratio above one: if half the population has outcome one,
survival is .8 and independent of the outcome, every survivor answers,
and among fresh entrants everyone with outcome zero answers but only a
fifth of those with outcome one, then the complete-case ratio at outcome
one is \(.8 \times .5 / (1/6) = 2.4\) with no conditioning at all.

A category reported by stayers but by no fresh entrant makes the sample
ratio infinite. Such a category is excluded from the finite maxima but
recorded. Among the 485 columns of the universe with two to nine
observed categories (the zero/one indicators of the nominal questions
included, every one of them assessed with its question's reach), 10 have
such a category, one each, none of them an indicator. Nine of the ten
are items asked only of a subgroup, six of them on how a respondent with
a fiancé(e) or partner met that person; two belong to the
marriage-history block flagged above, so that for them the comparison
may not be between like groups. The finite ratio exceeds one for 11
columns. The largest are two indicators of nominal questions whose
category has fewer than ten fresh entrants, support for one minor party
(\(1.955\)) and a spouse working as a family employee (\(1.581\)),
followed by two rare events, whether the respondent's mother died in the
past year (\(1.369\)) and whether the respondent expects to have taken
over the family business in ten years' time (\(1.270\)). For 9 of the 11
the exceedance occurs only in categories with fewer than ten fresh
entrants; the two items that exceed one in a category with at least ten
are how often the spouse prepares meals (\(1.192\)) and whether the
respondent would ask siblings for help in finding work (\(1.020\)).

We also assess compatibility with the identity map while allowing
fresh-sample missing outcomes to occupy any category. Let \(l_a\) be the
stayers who answered \(a\) and \(r_a\) the fresh entrants who answered
\(a\), both per eligible member, and let \(r_M\) be the fresh entrants
who reached the item but gave no substantive answer. A common latent
distribution consistent with no conditioning then exists if and only if
\(\sum_a \max\{l_a - r_a, 0\} \le r_M\) (the eligible populations and
the categories being taken as common); in the example the required mass
is \(.3\) and the available mass \(.4\).

On these data the condition holds for all 21 flagged columns (the 10
with a positive/zero category and the 11 finite exceedances), the two
supported exceedances included. The fresh missing mass is small, with a
median of 1.8\% over the 485 columns, and the mass needed to cover the
stayers under the identity map is smaller still, at most \(.0065\) among
the flagged columns. The reach rates of the two arms differ by less than
ten percentage points over the 485 columns, the largest gaps being on
items asked of respondents in owner-occupied housing or of the
unmarried; since attrition may select on the routing variable, this gap
is descriptive and does not test the equal-eligibility requirement. The
condition is permissive, since it lets the missing mass fall wherever it
helps, so passing it shows that a flag is within what fresh item
nonresponse alone could produce, not that it is an artefact of it. For
an item with nominal codes the inequality is the implication of no
conditioning (the identity map), without the monotone-map interpretation
of Corollary~\ref{cor-discrete}. At the population level, no
conditioning implies the inequality for the distribution of all eligible
fresh entrants under every attrition process. A violation is therefore
evidence against the maintained model that no attrition process removes
(Corollary~\ref{cor-discrete}); that model comprises refreshment
validity, comparable coding, deterministic monotone measurement for an
ordered item whose categories the stayers all report, and, for the
complete-case ratio, the two item-response requirements above. We report
these ratios as descriptive complete-case diagnostics. Fresh-sample item
nonresponse can generate a flag even without conditioning; interpreting
the ratios through Corollary~\ref{cor-discrete} additionally requires
representative item completion within comparable eligible subgroups, and
sampling calibration alone does not establish that requirement. Nor does
an infinite or large sample ratio show that a population mass is zero or
that the population inequality fails. The criterion is a maximum of
estimated ratios whose sampling error is not calibrated here. Rare
categories inflate it, since the denominator is the fresh cohort's share
of the category, estimated from at most 963 entrants and from far fewer
for subgroup items, whose \(\hat p\) also carries the error of the
estimated subgroup share. No multiplicity adjustment is applied across
items. A one-sided confidence bound on the maximal ratio, with
multiplicity control across items, is the appropriate inferential
version and is left to future work.

\section{Discussion}\label{discussion}

We have shown what a refreshment sample can and cannot do. On its own it
cannot separate conditioning from attrition selection when the location
set is nondegenerate: Theorem~\ref{thm-sharp} gives the exact residual
ambiguity, which, on the latent-outcome support, vanishes without
attrition and, under outcome-independent attrition, whenever the
refreshment density has tails thinner than exponential
(Corollary~\ref{cor-independent}). Within a cohort and without
restrictions linking waves, the panel's own history does not narrow it
(Theorem~\ref{thm-multiwave}), so, within the maintained model,
additional identifying restrictions must exclude otherwise feasible
completions. A refreshment sample can, however, bound the ambiguity by
an observable density-ratio criterion: the Horowitz--Manski interval for
an unrestricted map (Corollary~\ref{cor-meanbounds}), the location set
of Theorem~\ref{thm-bounds}, and the mass inequality for an item whose
categories are all reported (Corollary~\ref{cor-discrete}). Covariates
shrink the ambiguity (Proposition~\ref{prp-covariates}), as do negative
controls under a transport restriction. With a negative-control battery
at each refreshment and the transport restriction (A2) with a known
loading, the ambiguity is removed at the tenures observed at refreshment
episodes. The no-conditioning restriction that the refreshment-sample
attrition literature maintains (\citeproc{ref-hirano2001}{Hirano et al.
2001}; \citeproc{ref-deng2013}{Deng et al. 2013};
\citeproc{ref-franguridihahn2026}{Franguridi et al. 2026, 4}) selects
one point of a set whose extent the data reveal, and whose plug-in
width, as Remark~\ref{rem-trim} shows, depends on where the support is
trimmed.

Franguridi and Kosenkova (\citeproc{ref-franguridi2026}{2026, n. 1},
p.~5) observe that without restrictions on the attrition process the
observed marginals leave the joint distribution of the two waves largely
unrestricted, so that bounds on their structural parameter are in most
cases uninformative (the stayers' observed joint law does restrict it,
since an admissible joint must dominate \(p\) times that law). The
object here is different, the conditioning map and not the joint
distribution, and Theorem~\ref{thm-sharp} shows that the refreshment
marginal does restrict it, to the extent measured by the density-ratio
criterion of Corollary~\ref{cor-funnel}. The same point applies to the
test of selection on observables of Franguridi and Kapteyn
(\citeproc{ref-franguridikapteyn2026}{2026, 3, 13}, and Appendix B,
p.~20), which compares the inverse-probability-weighted second-period
distribution of the stayers with that of the refreshment sample and
rejects when the two differ. Conditioning can make them differ even when
selection is ignorable, so a rejection need not indicate nonignorable
selection. Conversely, under credible selection on observables and the
maintained common monotone-map model, the test is a test of no
conditioning on the latent-outcome support (a distributional comparison
need not detect a stochastic or nonmonotone change of responses that
preserves the marginal distribution). Their application rejects for a
set of variables that includes a cognition score corrected beforehand
for practice effects (\citeproc{ref-franguridikapteyn2026}{Franguridi
and Kapteyn 2026, 12} and Appendix C, pp.~25--26), a conditioning
correction applied to one variable before the test.

Warren and Halpern-Manners (\citeproc{ref-warren2012}{2012, 521--22})
suspect that, unlike attrition, conditioning leaves the data
``irredeemably biased'' once it has occurred. The results here qualify
this view: with a refreshment sample the conditioning map is
set-identified without further assumptions, and each design restriction
selects a member of the set.

The results have three consequences for practice.

Das, Toepoel, and van Soest (\citeproc{ref-dasetal2011}{2011}, main text
and online Appendix 1) show that the same entry-wave functional
identifies population or survivor conditioning effects under different
stationarity assumptions. The general-mean formulation here (B4 for the
survivor effect) preserves that distinction, which is important when
conditioning effects are heterogeneous: the two readings coincide under
homogeneous effects and differ by the composition term
\(\mathbb{E}[\tau_i \mid S] - \mathbb{E}[\tau_i]\) otherwise (Simulation
3), so reports of entry-wave-corrected estimates should state which
target they intend.

The diagnostics are compatibility checks among the maintained
restrictions B1--B5, not classifiers. When non-stationary and
state-dependent attrition are both present, the estimator with the
smallest bias in Simulation 3 is survival matching, whose restriction
both failures violate, because the two failures bias it in opposite
directions; and under stationary state dependence in model (M),
\(\mathcal{T}_{\mathrm{NS}}\) has no power at all. A rejection is
evidence that the set of restrictions is inconsistent, to be followed by
the overidentification tests of Theorem~\ref{thm-tworef}(c).

The identified set of the conditioning path on a staggered trapezoid has
dimension at least the stride of the refreshment schedule, and more when
the follow-up of the last cohort is short. On irregular supports it can
be larger or smaller, and the rank computation of
Theorem~\ref{thm-rank}, not a closed-form condition, is the general
check.

The deterministic monotone-map model serves continuous items. For items
whose categories are all reported it is a specification test
(Corollary~\ref{cor-discrete}), and shifts in the use of extreme and
middle categories of the kind examined in
Section~\ref{sec-implementation}, if they reflect conditioning, are
incompatible with the deterministic weakly increasing-map model;
stochastic partial merging of categories is one possible explanation.
Neither the application of the binary analysis of Das, Toepoel, and van
Soest (\citeproc{ref-dasetal2011}{2011}) to each event \(\{Y > a\}\),
which they suggest for non-binary outcomes {[}p.~36{]}, nor a joint
treatment of ordered categories is attempted here.

For panels with refreshment designs we recommend the following. Report
the identified set, with the trim and slack of any plug-in version,
instead of assuming it away. Report survivor and population targets
separately when effects may be heterogeneous. Design panels with at
least two refreshments, an entry-wave negative-control battery and,
where possible, a repeated tenure across refreshment episodes, which in
the location family adds a direct test of stationary selection. Compute
the rank of a proposed schedule before fielding it. Choose refreshment
waves whose spacings have no common divisor, and follow the last
refreshment cohort long enough (for the panel of
Section~\ref{sec-implementation}, an entry at an even wave followed for
at least three waves). Release disclosure-approved rotation-group
summaries alongside balanced aggregates. Each of these is an
identification decision, although it is usually taken on other grounds.

\section{Appendix: Simulation Evidence (ADEMP)}\label{sec-ademp}

Three simulation studies, reported in the ADEMP format, examine the
finite-sample content of the results, and three deterministic scripts
verify the worked examples, the counterexamples, and the design-rank
statements. Code and outputs are in the \texttt{sims/} folder of the
replication archive: the scripts \texttt{sim1\_bounds.R},
\texttt{sim2\_multicohort.R} and \texttt{sim3\_designs.R}, and the
checks \texttt{check\_funnel\_examples.R},
\texttt{check\_multiwave\_completion.R} and
\texttt{check\_design\_rank.py}. No confidential data are involved.
\texttt{check\_manuscript\_values.R}, also in \texttt{sims/}, asserts
every simulation value quoted in the text and tables against a named
cell of the output files, with the estimand and units of each. The
data-generating processes and estimands below are those implemented in
the archived scripts.

\textbf{Simulation 1 (Theorem~\ref{thm-sharp},
Corollary~\ref{cor-funnel}, Theorem~\ref{thm-bounds}: a relaxed plug-in
procedure).} \emph{Data-generating mechanism}: bivariate Gaussian
\((Y_1, Y_2^{*})\) with unit variances and \(\rho = .5\); logistic
outcome selection \(\pi = \Lambda(a + \beta Y_2^{*})\) with
\(\beta \in \{0, .3, .6\}\) and \(a\) tuned to retention
\(p \in \{.6, .8\}\); location conditioning \(h \in \{0, .2\}\);
\(n = 4{,}000\) plus a refreshment sample of \(2{,}000\); 200
replications per cell (in two cells one replication produced an empty
plug-in set and is excluded from the set summaries). \emph{Estimand}:
\(h\). \emph{Methods}: a relaxed plug-in set
\(\{h' : \sup_{y} \hat p\, \hat q(y + h')/\hat f_2(y) \le 1 + \kappa\}\)
with kernel densities, the supremum taken over the support trimmed at
\(10\%\) of the refreshment density's mode, slack \(\kappa = .10\), and
candidates \(h' \in [-1, 1.5]\) on a grid of step \(.01\) that contains
the true values; the naive comparator is quantile alignment ignoring
attrition. As population benchmarks the script computes, without
sampling, the identified set of Theorem~\ref{thm-bounds} for the same
process, \(h + [0, \Delta^{*}]\) (a point when \(\beta = 0\)), and the
population analogue of the relaxed procedure (the same trim and slack
applied to the true densities). \emph{Performance}: The naive
estimator's bias is essentially zero at \(\beta = 0\) (at most \(.004\)
in absolute value), \(.056\)--\(.058\) at \(\beta = .3\) and
\(.114\)--\(.116\) at \(\beta = .6\) when \(p = .8\), and
\(.116\)--\(.118\) and \(.218\)--\(.220\) when \(p = .6\): raising
retention from \(.6\) to \(.8\) halves it. The population identified
sets have widths \(0\), \(.145\) and \(.262\) at \(p = .8\) and \(0\),
\(.260\) and \(.489\) at \(p = .6\) for \(\beta = 0, .3, .6\). The
relaxed plug-in sets are wider (mean widths \(.25\)--\(.32\) at
\(p = .8\) and \(.58\)--\(.68\) at \(p = .6\)), and positive at
\(\beta = 0\), where the identified set is a point. Their widths track
those of the population analogue of the relaxed procedure
(\(.31\)--\(.38\) at \(p = .8\), \(.61\)--\(.71\) at \(p = .6\)), so
they are properties of the trim and the slack, not of the identified set
(Remark~\ref{rem-trim}), and Figure~\ref{fig-funnel} reports them as
such. Coverage of the true \(h\) by the relaxed plug-in set, counting an
endpoint equal to the truth as covering it, is \(.930\)--\(1\) at
\(p = .6\) and \(.760\)--\(.965\) at \(p = .8\), falling with \(\beta\);
a uniformly valid version requires inference-aware calibration of the
slack, in the spirit of intersection-bounds methods, which we leave to
future work.

\begin{figure}

\centering{

\pandocbounded{\includegraphics[keepaspectratio]{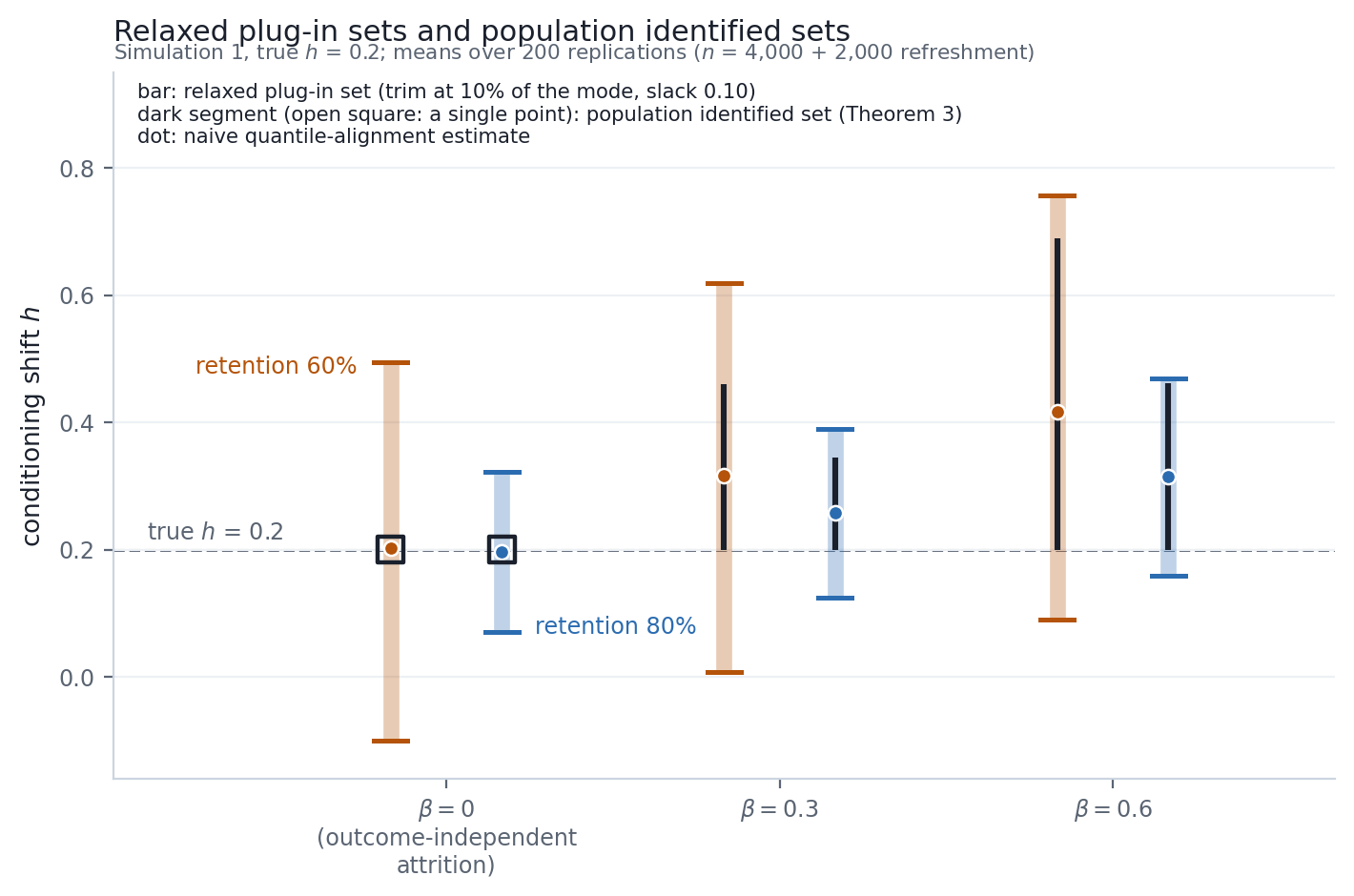}}

}

\caption{\label{fig-funnel}Relaxed plug-in sets and population
identified sets, Simulation 1, true \(h = 0.2\). Bars: mean over 200
replications of the relaxed plug-in set (support trimmed at 10\% of the
refreshment density's mode, slack 0.10). Dark segments: the population
identified set of Theorem~\ref{thm-bounds} for the same data-generating
process (open squares: a single point). Dots: mean naive
quantile-alignment estimate. The plug-in widths reflect the trim and the
slack; the population sets widen with selection strength and narrow with
retention.}

\end{figure}%

\textbf{Simulation 2 (Theorem~\ref{thm-tworef},
Corollary~\ref{cor-generational}).} \emph{Data-generating mechanism}:
three cohorts entering at waves 1, 5 and 13 (\(n = 3{,}000\) each),
compared at wave 13; a latent trait \(u_i \sim N(0,1)\) and per-wave
retention \(\Lambda(1.6 + 0.6u_i)\), so that the first cohort has passed
twelve response decisions and the second eight; target item
\(y = g_e + \tau(s) + u + e\) with \(e \sim N(0,1)\) and
\(\tau(s) = .15(1 - e^{-(s-1)/4})\); a \(K = 23\) negative-control
battery measured at entry, \(x_k = .6u + \eta_k\); a generational
variant adds a cohort effect \(g = .35\) to the wave-13 entrants' target
item; a contaminated variant adds recall drift \(.15\) to three battery
items of the first cohort; 300 replications. \emph{Estimands}, in units
of the target's population standard deviation \(\sqrt 2\): the incumbent
increment
\((\tau(13) - \tau(9))/\sqrt2 = .15(e^{-2} - e^{-3})/\sqrt2 = .0091\)
and the level contrast \((\tau(13) - \tau(1))/\sqrt2 = .1008\).
\emph{Methods}: contrasts of survivors' raw means, divided by
\(\sqrt 2\); negative-control corrections subtract \(\Gamma\) times the
mean raw battery contrast of the same pair of cohorts, with the true raw
loading \(\Gamma = 1/.6\) (exact (A2)) or a unit loading on the
population-standardized scale (\(\Gamma = \sqrt2/\sqrt{1.36}\), an
under-correction by the factor \(.73\)); a screened correction removes
items flagged by a per-item outlier screen (\(|z| > 2.5\), centred at
the median item contrast) before pooling by the mean, and a
median-pooled correction uses the median item contrast. The script also
computes the population limit of each estimator by numerical
integration. \emph{Performance}: the naive increment is biased by
\(.152\) (population limit \(.151\)) from differential selection; the
correction with the true loading has bias \(.002\) (Monte Carlo standard
error \(.002\); population limit zero) and RMSE \(.042\); the unit
loading leaves \(.043\) (limit \(.041\)): (A2) is a substantive
assumption, and a miscalibrated loading under-corrects in proportion.
The corrected level contrast has bias \(.003\) (standard error \(.002\))
in the age-matched design and \(-.250\) in the generational design,
where the corrected contrast contains \(-g/\sqrt2 = -.247\); the
incumbent increment stays unbiased there (\(-.000\)). The
battery-homogeneity \(Q\) test rejects at \(.003\)--\(.007\) at nominal
\(.05\) without drift and at \(.34\) with three drifted items; the
screen flags at least one of the three drifted items in \(68\%\) of
replications, against \(2\%\) for the same three items without drift,
and flags an undrifted item at a per-item rate below \(.01\). Undetected
drift contaminates the mean-pooled correction (bias \(-.017\));
screening before pooling (\(-.007\)) and median pooling (\(-.006\))
reduce it.

\textbf{Simulation 3 (constructive designs and diagnostics;
Section~\ref{sec-constructive}--Section~\ref{sec-diagnostics}).}
\emph{Data-generating mechanism}: model (M) of
Proposition~\ref{prp-fail} with a continuing cohort of
\(n_{\text{old}} = 4{,}800\) entering at wave 1 and a fresh cohort of
\(n_{\text{new}} = 960\) entering at wave 5 (dose \(k = 4\)); latent
outcome \(Y^{*}_{it} = 0.5X_i + 0.5U_i + \varepsilon_{it}\) with
\(X_i \sim\) Bernoulli\((.5)\), \(U_i \sim N(0,1)\),
\(\varepsilon_{it} \sim N(0,1)\) independent across waves, untreated
standard deviation \(1.146\); a heterogeneous conditioning shift
\(\tau_i(s) = 0.30\{1 - e^{-(s-1)/4}\}(1 + \gamma U_i)\) whose
population mean at \(s = 5\) is \(0.190\) (\(0.166\) standard
deviations); and a dropout logit
\(-2.2 + a_X X_i + b_w U_i + a_E\varepsilon_{iw}\) for the decision
taken at wave \(w\). Six regimes: MCAR; MAR on \(X\) (\(a_X = .8\));
MNAR on the trait (\(a_X = .4\), \(b_w = .8\)); non-stationary MNAR on
the trait, in which the trait coefficient is \(1.4\) at the continuing
cohort's waves before the comparison wave and \(.8\) at every other wave
of both cohorts, so that trait selection is stronger in the continuing
cohort's selection period; state-dependent MNAR (\(a_X = .4\),
\(b_w = .4\), \(a_E = .8\)); and both failures together. Heterogeneity
is \(\gamma = .3\) (survivor targets \(h^{S}\) and \(h^{S+}\) differ
from the population shift and from each other, the case B5 excludes) or
\(\gamma = 0\). \emph{Estimands}: \(h^{S}\) (SM, EC, naive), \(h^{S+}\)
(SSM), the population shift (IPW). \emph{Methods}: the five estimators
with influence-function standard errors that include the estimated group
shares (Section~\ref{sec-variance}); the diagnostics
\(\mathcal{T}_{\mathrm{NS}}\) and \(\mathcal{T}_{\mathrm{SD}}\) at the
5\% level. \emph{Performance measures}: bias against each estimator's
own estimand and against the population shift, coverage of 95\%
intervals, rejection rates. There are \(R = 1{,}000\) replications per
regime and setting. The two settings of \(\gamma\) within a regime use
the same seed, so their data are identical apart from the shift (common
random numbers) and paired differences isolate the effect of
heterogeneity. Part D adds the variance experiments described below.
Code: \texttt{sims/sim3\_designs.R}.

\begin{table}[htbp]
\caption{Simulation 3, bias against each estimator's own estimand in
outcome units (95\% coverage in parentheses) and rejection rates of the
two diagnostics at the 5\% level; 1,000 replications per regime. Monte
Carlo standard errors of the biases are at most 0.002 and of the
rejection rates at most 0.016.}\label{tbl-sim3}
\begin{tabular}{@{}
  >{\raggedright\arraybackslash}p{(\linewidth - 14\tabcolsep) * \real{0.3043}}
  >{\raggedright\arraybackslash}p{(\linewidth - 14\tabcolsep) * \real{0.1087}}
  >{\raggedright\arraybackslash}p{(\linewidth - 14\tabcolsep) * \real{0.1087}}
  >{\raggedright\arraybackslash}p{(\linewidth - 14\tabcolsep) * \real{0.1087}}
  >{\raggedright\arraybackslash}p{(\linewidth - 14\tabcolsep) * \real{0.1087}}
  >{\raggedright\arraybackslash}p{(\linewidth - 14\tabcolsep) * \real{0.0870}}
  >{\raggedright\arraybackslash}p{(\linewidth - 14\tabcolsep) * \real{0.0870}}
  >{\raggedright\arraybackslash}p{(\linewidth - 14\tabcolsep) * \real{0.0870}}@{}}
\toprule\noalign{}
\begin{minipage}[b]{\linewidth}\raggedright
Regime (\(\gamma = 0.3\))
\end{minipage} & \begin{minipage}[b]{\linewidth}\raggedright
naive
\end{minipage} & \begin{minipage}[b]{\linewidth}\raggedright
SM
\end{minipage} & \begin{minipage}[b]{\linewidth}\raggedright
SSM
\end{minipage} & \begin{minipage}[b]{\linewidth}\raggedright
EC
\end{minipage} & \begin{minipage}[b]{\linewidth}\raggedright
IPW
\end{minipage} & \begin{minipage}[b]{\linewidth}\raggedright
\(\mathcal{T}_{\mathrm{NS}}\) rej.
\end{minipage} & \begin{minipage}[b]{\linewidth}\raggedright
\(\mathcal{T}_{\mathrm{SD}}\) rej.
\end{minipage} \\
\midrule\noalign{}
MCAR & +.001 (.94) & −.000 (.95) & −.000 (.95) & +.001 (.95) & .000 &
.046 & .053 \\
MAR on \(X\) & −.058 (.75) & −.001 (.96) & −.002 (.95) & −.001 (.96) &
−.001 & .041 & .056 \\
MNAR trait & −.198 (.00) & .000 (.94) & +.002 (.94) & −.001 (.94) &
−.198 & .039 & .056 \\
MNAR non-stationary & −.279 (.00) & \textbf{−.081 (.65)} & \textbf{−.068
(.76)} & −.002 (.95) & −.292 & \textbf{.623} & .078 \\
MNAR state-dependent & −.127 (.17) & \textbf{+.104 (.53)} & +.002 (.95)
& \textbf{+.103 (.36)} & −.115 & .040 & \textbf{.993} \\
MNAR, both failures & −.292 (.00) & +.012 (.93) & \textbf{−.062 (.80)} &
\textbf{+.078 (.59)} & −.306 & \textbf{.430} & \textbf{.945} \\
\bottomrule\noalign{}
\end{tabular}
\end{table}

The first five rows of Table~\ref{tbl-sim3} match
Proposition~\ref{prp-fail}(i)--(ii) row by row: survival matching fails
under non-stationarity and under state dependence, symmetric matching
only under non-stationarity, the entry-wave correction only under state
dependence, and under state dependence survival matching and the
entry-wave correction carry the same bias (\(+.104\) and \(+.103\)), so
that \(\mathcal{T}_{\mathrm{NS}}\) keeps its nominal size there. The
sixth row is a case the diagnostics cannot classify: with both failures
present, \(\mathcal{T}_{\mathrm{SD}}\) rejects almost always and
\(\mathcal{T}_{\mathrm{NS}}\) less than half the time, and the estimator
with the smallest bias is survival matching; its restriction fails, but
the two failures bias it in opposite directions. A rejection is evidence
of incompatibility among the maintained restrictions, and a
non-rejection of one statistic is not evidence that the corresponding
restriction holds.

\emph{Survivor versus population targets (Proposition~\ref{prp-ec}).}
This part illustrates, for a continuous outcome, the distinction that
Das, Toepoel, and van Soest (\citeproc{ref-dasetal2011}{2011, 41}, and
online Appendix 1, pp.~27--28) draw for binary items between the
population and the survivor target of the same functional. With
heterogeneous shifts (\(\gamma = 0.3\)), dropout increases with \(U\),
so the survivors carry smaller shifts than the population: under MNAR on
the trait the survivor target averages \(.170\) against a population
shift of \(.190\). The entry-wave correction is unbiased for the
survivor effect \(h^{S}\) (\(-.001\)) and biased for the population
shift (\(-.021\), or 11\% of \(\tau(5)\)); with homogeneous shifts
(\(\gamma = 0\)) the two targets coincide and the population bias is
\(-.001\). Because the two settings share their data, the paired
difference of the population biases, \(-.020\), is the composition term
\(\mathbb{E}[\tau_i \mid S] - \mathbb{E}[\tau_i]\) up to a Monte Carlo
error below \(.0001\). Read as an estimate of the population effect (the
target of their Assumption 3, a stationarity-of-attrition-bias
assumption on the conditioned outcome, which fails in this design
because survival selects on the heterogeneity of the shift), the
functional is off by that composition term; read as an estimate of
\(h^{S}\) under B4 (their Assumption 3\(^{\mathrm{Alt}}\) for a binary
item), it is unbiased here. Survival matching and symmetric matching
show the same gap of about \(.02\) between their survivor and population
readings. The paired comparison also shows the role of B5: the
\(\mathcal{T}_{\mathrm{SD}}\) rejection rate under MNAR on the trait is
\(.056\) with heterogeneous shifts and \(.044\) when the composition
term \(h^{S+} - h^{S}\) is switched off (paired difference \(.012\),
Monte Carlo standard error \(.005\)).

\emph{Variance (Section~\ref{sec-variance}).} In Part A the share terms
are small relative to the within-group variances, and the coverage of
the entry-wave correction is essentially the same with and without them
(for example \(.945\) and \(.944\) under MNAR on the trait). Part D
shows when they matter. In D1, a null experiment with well-separated
groups (entry and comparison responses both equal to \(U + N(0, .1^2)\)
with \(U \sim\) Bernoulli\((.5)\), survival \(S = U\) in both cohorts,
500 entrants per cohort, no conditioning, 5,000 replications), the
influence-function variance of the entry-wave correction averages
\(.00104\), against an empirical variance of \(.00106\) and a
theoretical value of \(.00104\), and gives coverage \(.947\) and a
\(\mathcal{T}_{\mathrm{NS}}\) rejection rate of \(.056\); the
within-group formula without share terms averages \(.00054\) and gives
\(.839\) and \(.706\). In D2, an exact counterexample (\(E = S\),
\(Y = 1\) among survivors, a constant fresh outcome), the entry-wave
estimate equals the survival share \(\hat p\), whose variance is
\(p(1-p)/n\); the influence-function estimate is its plug-in version
\(\hat p(1 - \hat p)/n\), while the within-group formula returns zero.
In D3, a person-level bootstrap within cohort that recomputes every arm
and every share (200 replications, 200 resamples each) agrees with the
influence-function standard errors to within one per cent in D1 and in
the MNAR-trait and MNAR-state regimes.

\textbf{Deterministic checks (no sampling).}
\texttt{check\_funnel\_examples.R} reproduces, on a fine grid with
stated tolerances, the following: (a) the identified set \(\{-1\}\) for
\(f_2 = N(0,1)\), \(Q = N(-1,1)\), \(p = .5\); (b) the attainment of
\(\Delta^{*} = \beta\sigma^2\) in the Gaussian--logistic example at
\(\alpha = -1\); (c) the point \(\{0\}\) under outcome-independent
attrition with Gaussian tails; (d) the Laplace and logistic half-widths
\(\log(1/p)\) and the Cauchy half-widths \((1-p)/\sqrt p\); (e) the
central-binding example after Corollary~\ref{cor-funnel} at
\(\sigma = .5\) and \(.8\); (f) the disconnected identified sets of the
modulated Cauchy and logistic densities of
Corollary~\ref{cor-independent}, together with the exponential-tail
outer bound and the failure of the closed form without the Lipschitz
condition; (g) the Lipschitz neighbourhood of
Corollary~\ref{cor-independent}(iii); (h) the discrete examples of
Corollary~\ref{cor-discrete}; (i) the general-map example of
Corollary~\ref{cor-repeated}; (j) the retention-only covariate of
Proposition~\ref{prp-covariates}; (k) a case in which the per-pair outer
bound of Theorem~\ref{thm-tworef}(a) is strictly larger than the
common-loading set (block 11 of the script): a Laplace episode with
\(C_1 = 0\), \(N_1 = 1\) and a Gaussian episode with \(C_2 = 10\),
\(N_2 = 1\), both at \(p = .5\), with \(\Gamma \in [0, 1]\), where the
Gaussian episode forces \(\Gamma = 0\), so that the common-loading set
for the first episode is \(\{0\}\) and not \([-\log 2, 0]\); and (l) the
example following Theorem~\ref{thm-tworef}, in which the loading set
computed from the margins, \([0, .817]\), strictly contains the set
under the joint version of (A1), \([0, .263]\) (block 12).
\texttt{check\_multiwave\_completion.R} enumerates, without sampling, a
four-wave cohort on the latent lattice \(\{-5, \dots, 5\}\) with a
binary entry-only negative control, all eight response patterns
(interrupted ones included), true conditioning by integer location
shifts \((0, 1, 0, 1)\), and a selection kernel that depends on the
whole latent path including the never-reported coordinates and that
excludes the lowest latent value from responding at wave 4. For every
candidate map \(c'_4 = c_4 + \epsilon\),
\(\epsilon \in \{-3, \dots, 3\}\), with one arbitrary choice of
candidate maps at the intermediate waves (shifts of \(+1\) and \(-1\) at
waves 2 and 3), the completion of Lemma~\ref{lem-seqcompletion} is
constructed explicitly. Exactly one non-trivial candidate,
\(\epsilon = +1\), satisfies the wave-4 domination inequality. For it
and for \(\epsilon = 0\) every pattern sub-measure, the entry marginal
including the negative control, and the refreshment marginal are
reproduced within the script's tolerance of \(10^{-10}\), the kernel is
a valid Markov kernel, the survivors' latent mean at wave 4 moves by
\(-\epsilon\), and the negative-control contrast does not move. For
every other \(\epsilon\) the leftover \(L_4\) has a negative entry.
Restricting the same example to monotone patterns verifies the
sequential form \(\pi'_j = dN_{j+1}/dN_j\). The script also enumerates
all 35 weakly increasing maps on four categories and confirms in every
instance that the mass criterion Equation~\ref{eq-mass} coincides with a
linear-programming feasibility test, with the identity as the sole
feasible map when all categories are reported and non-identity maps
feasible only when a category goes unreported. It also confirms the
Gaussian widths of Corollary~\ref{cor-meanbounds} (\(.390\), \(.700\),
\(1.288\), \(1.596\) at \(p = .9, .8, .6, .5\)) by closed form,
numerical integration, and a linear programme on a grid.
\texttt{check\_design\_rank.py} computes, in exact rational arithmetic,
the rank and the projected kernel of every design quoted in
Section~\ref{sec-design} and Table~\ref{tbl-strides}, the example of
Theorem~\ref{thm-rank}(iii) and its null vector, the four-cohort example
of Proposition~\ref{prp-stride}(iii), and the effect of level and
plateau anchors on the Japanese panel's support. It also checks, on all
2,214 staggered trapezoids with three to five cohorts, first entry at
wave 1, later entries in \(\{2, \dots, 10\}\) and up to eight periods of
follow-up, the component count, the sufficient follow-up condition, the
exact three-cohort formula and the upper bound of the companion
identification paper's Lemma 1. On 65 of these designs every pair of
observed tenures is linked through tenures observed in common periods
although the identified set is larger than \(d\).

\section{Appendix: Variance of the Design
Estimators}\label{sec-variance}

Each estimator of Section~\ref{sec-constructive} and each diagnostic of
Section~\ref{sec-diagnostics} is a difference of subset means computed
within two independent cohorts. For a cohort of \(n\) persons and a
subset indicator \(B_i\), the mean
\(\bar Y_B = \sum_i B_i Y_i / \sum_i B_i\) has influence function
\(\psi_i = B_i(Y_i - \mu_B)/\Pr(B)\); a linear combination of subset
means within the cohort has the corresponding combination of influence
functions, and its variance is estimated by \(\sum_i \hat\psi_i^2/n^2\).
For the entry-wave correction, with \(S\) survival through \(t\), \(E\)
the entry response and \(Y\) the response at \(t\), the
continuing-cohort component \(\hat A = \bar Y_S - (\bar E_S - \bar E)\)
satisfies \[
n\,\mathrm{Var}(\hat A) \;\approx\; \frac{\mathrm{Var}\{Y - (1-p)E \mid S\}}{p} \;+\; (1-p)\,\mathrm{Var}(E \mid \text{NS}) \;+\; p(1-p)\,(\mu_{E,S} - \mu_{E,\text{NS}})^2 ,
\] where the last term comes from estimating the survivors' share \(p\);
a variance computed conditionally on the group sizes omits it. The
diagnostics
\(\mathcal{T}_{\mathrm{NS}} = (\bar E_S - \bar E) - (\bar Y_{m} - \bar Y)\)
and
\(\mathcal{T}_{\mathrm{SD}} = (\bar Y_{S+} - \bar Y_S) - (\bar Y_{m+1} - \bar Y_{m})\),
where \(\bar Y_{S+}\) averages the continuing survivors who also respond
at \(t+1\) and \(\bar Y_m\), \(\bar Y_{m+1}\) the fresh respondents who
go on to respond for \(k\) and \(k+1\) further waves, contain the
analogous share terms in both cohorts. A person-level bootstrap within
cohort that recomputes every arm and every share targets the same
variance. Simulation 3 and version 0.1.4 of the R package
\texttt{panelcond} use the influence-function formulas; the empirical
results of Section~\ref{sec-implementation} use the bootstrap.

\section{Appendix: The Negative-Control Battery at the 2019
Episode}\label{sec-ncbattery}

\begin{longtable}[]{@{}lll@{}}
\caption{Negative-control contrasts between the 2007 cohort at tenure 13
and the 2011 cohort at tenure 9, on their common birth-year support
(1966--1986), standardized and pooled over three-year birth band by sex
strata; aggregates computed from the licensed JLPS microdata by the
replication kit. \textsuperscript{a} Rejected at a Benjamini--Hochberg
\(q < .10\). \textsuperscript{b} Standard error under independence
across items; the person-bootstrap standard error is \(.0186\) (design
effect \(3.84\)).}\label{tbl-ncbattery}\tabularnewline
\toprule\noalign{}
Item (reported once at entry) & \(d\) & SE \\
\midrule\noalign{}
\endfirsthead
\toprule\noalign{}
Item (reported once at entry) & \(d\) & SE \\
\midrule\noalign{}
\endhead
\bottomrule\noalign{}
\endlastfoot
Possessions in the home at age 15: air conditioner & −.033 & .045 \\
television & −.032 & .047 \\
personal computer or word processor & .015 & .045 \\
video recorder & .040 & .045 \\
piano & −.074 & .045 \\
radio & −.092 & .045 \\
refrigerator & −.103 & .046 \\
second home & .009 & .049 \\
study desk & −.033 & .045 \\
reception room furniture & −.051 & .045 \\
owner-occupied home & −.014 & .045 \\
literature collection or illustrated encyclopedia & −.012 & .045 \\
farmland or forest (excluding a kitchen garden) & −.015 & .045 \\
encyclopedia & .025 & .045 \\
works of art or antiques & .023 & .045 \\
car & .022 & .045 \\
microwave oven & −.020 & .045 \\
telephone (mobile telephones included)\textsuperscript{a} & −.176 &
.045 \\
bath & −.069 & .046 \\
Number of books in the home at age 15 & .078 & .049 \\
Family atmosphere at age 15 & −.106 & .045 \\
Household living standard at age 15 & −.096 & .045 \\
School grades in junior high school & −.069 & .046 \\
Inverse-variance pooled value & −.035 & .0095\textsuperscript{b} \\
\end{longtable}

\section*{Data and code availability}\label{data-and-code-availability}
\addcontentsline{toc}{section}{Data and code availability}

The simulations and deterministic checks generate their own data; no
restricted data are needed to reproduce the appendix. Their code and
outputs, the figure script, and the manuscript source are in the
replication archive at
\url{https://github.com/sokubo/paper-refreshment-designs-replication};
the version corresponding to this manuscript is tag \texttt{paper-v1.5}
(computational commit \texttt{f4f7c2b}). The archive's README maps each
table, figure and quoted simulation value to the script and output file
that produce it, and \texttt{sims/check\_manuscript\_values.R} asserts
those values against named output cells. The empirical illustration of
Section~\ref{sec-implementation} uses the integrated release (waves
1--19) of the Japanese Life Course Panel Surveys (JLPS) distributed to
participants of the survey project; the microdata cannot be
redistributed. Other researchers can apply for JLPS releases to the
SSJDA Data Archive of the Center for Social Research and Data Archives,
Institute of Social Science, The University of Tokyo (at the time of
writing, the current releases are the study numbers PY160 and PM160).
Those releases differ from the integrated file, so an analysis of them
with the archived code is a re-analysis with the same design and
definitions, not a replication. The archive contains the code that reads
the licensed files and produces every aggregate quoted in
Section~\ref{sec-implementation} (\texttt{kit/R/}), a document stating
the access route, the variables and the recoding
(\texttt{kit/KIT\_README.md}), and a synthetic data set with the outputs
the code produces on it, so that the workflow for the 2011 episode can
be run end to end without the licensed data. The scripts for the 2019
negative-control battery need the confidential map from items to
variables and are not exercised by the synthetic example. Every script
reads one explicitly named input file and records its name and SHA-256;
\texttt{kit/KIT\_README.md} gives the file name and fingerprint of the
input behind Section~\ref{sec-implementation}, and
\texttt{kit/check\_manuscript\_values\_T2.R} reproduces every quoted
aggregate from the aggregate outputs after checking that all of them
record that input. The aggregate outputs computed from the licensed data
are reported in the paper (Section~\ref{sec-implementation},
Table~\ref{tbl-ncbattery}) and are not otherwise redistributed. The R
package \texttt{panelcond}, version 0.1.6
(\url{https://github.com/sokubo/panelcond}, tag \texttt{v0.1.6}),
implements the estimators, the diagnostics, the variance formulas of
Section~\ref{sec-variance} and the design-rank calculation; the design
run behind Section~\ref{sec-implementation} used that version, whose
covariate-standardized entry-wave correction integrates both regression
terms over the survivors' covariate distribution (versions up to 0.1.5
integrated them over two populations, Section~\ref{sec-implementation}).
The analytic-variance correction of version 0.1.4 changed no bootstrap
result, whereas the standardized functional of 0.1.6 and its bootstrap
summaries were recomputed for this version. The standard errors and
diagnostic uncertainties for the Section~\ref{sec-implementation} design
comparisons are computed using the joint person bootstrap rather than
the package's analytic-variance columns. The negative-control analysis
separately reports item-level uncertainties, the independence-based
pooled benchmark, and the person-bootstrap pooled uncertainty.

\section*{References}\label{references}
\addcontentsline{toc}{section}{References}

\phantomsection\label{refs}
\begin{CSLReferences}{1}{0}
\bibitem[\citeproctext]{ref-bach2018}
Bach, Ruben L. 2018. {``A Methodological Framework for the Analysis of
Panel Conditioning Effects.''} Working paper, University of Mannheim,
May 2018, the version cited. A revised version is chapter 2 of A. Cernat
and J. W. Sakshaug (eds.), \emph{Measurement Error in Longitudinal
Data}, Oxford University Press, 2021, pp. 19--41,
doi:10.1093/oso/9780198859987.003.0002.

\bibitem[\citeproctext]{ref-bailar1975}
Bailar, Barbara A. 1975. {``The Effects of Rotation Group Bias on
Estimates from Panel Surveys.''} \emph{Journal of the American
Statistical Association} 70 (349): 23--30.
\url{https://doi.org/10.1080/01621459.1975.10480255}.

\bibitem[\citeproctext]{ref-bertoli2026}
Bertoli, Andrew, Laura Jakli, and Henry Pascoe. 2026. {``Analyzing the
Impact of Events Through Surveys: Formalizing Biases and Introducing the
Dual Randomized Survey Design.''} \emph{Political Science Research and
Methods} 14 (2): 255--75. \url{https://doi.org/10.1017/psrm.2026.10088}.

\bibitem[\citeproctext]{ref-chadi2021}
Chadi, Adrian. 2021. {``Identification of Attrition Bias Using Different
Types of Panel Refreshments.''} \emph{Economics Letters} 201: 109777.
\url{https://doi.org/10.1016/j.econlet.2021.109777}.

\bibitem[\citeproctext]{ref-dasetal2011}
Das, Marcel, Vera Toepoel, and Arthur van Soest. 2011. {``Nonparametric
Tests of Panel Conditioning and Attrition Bias in Panel Surveys.''}
\emph{Sociological Methods \& Research} 40 (1): 32--56.
\url{https://doi.org/10.1177/0049124110390765}.

\bibitem[\citeproctext]{ref-deng2013}
Deng, Yiting, D. Sunshine Hillygus, Jerome P. Reiter, Yajuan Si, and
Siyu Zheng. 2013. {``Handling Attrition in Longitudinal Studies: The
Case for Refreshment Samples.''} \emph{Statistical Science} 28 (2):
238--56. \url{https://doi.org/10.1214/13-STS414}.

\bibitem[\citeproctext]{ref-eckmanbach2021}
Eckman, Stephanie, and Ruben Bach. 2021. {``Panel Conditioning in the
{U.S.} {Consumer Expenditure Survey}.''} \emph{Journal of Official
Statistics} 37 (1): 53--69. \url{https://doi.org/10.2478/JOS-2021-0003}.

\bibitem[\citeproctext]{ref-fenghusun2022}
Feng, Shuaizhang, Yingyao Hu, and Jiandong Sun. 2022. {``Rotation Group
Bias and the Persistence of Misclassification Errors in the {Current
Population Surveys}.''} \emph{Econometric Reviews} 41 (9): 1077--94.
\url{https://doi.org/10.1080/07474938.2022.2091361}.

\bibitem[\citeproctext]{ref-franguridihahn2026}
Franguridi, Grigory, Jinyong Hahn, Pierre Hoonhout, Arie Kapteyn, and
Geert Ridder. 2026. {``Raking for Estimation and Inference in Panel
Models with Nonignorable Attrition and Refreshment.''} \emph{arXiv
Preprint arXiv:2512.13270v2}. \url{https://arxiv.org/abs/2512.13270v2}.

\bibitem[\citeproctext]{ref-franguridikapteyn2026}
Franguridi, Grigory, and Arie Kapteyn. 2026. {``Testing Selection on
Observables in Parametric Models with Refreshment Samples.''}
\emph{arXiv Preprint arXiv:2608.23508v2}.
\url{https://arxiv.org/abs/2608.23508v2}.

\bibitem[\citeproctext]{ref-franguridi2026}
Franguridi, Grigory, and Lidia Kosenkova. 2026. {``Closed-Form
Estimation and Inference for Panels with Attrition and Refreshment
Samples.''} \emph{arXiv Preprint arXiv:2410.11263v3}.
\url{https://arxiv.org/abs/2410.11263v3}.

\bibitem[\citeproctext]{ref-ghanem2026}
Ghanem, Dalia, Sarojini Hirshleifer, and Karen Ortiz-Becerra. 2026.
{``Testing Attrition Bias in Field Experiments.''} \emph{Journal of
Human Resources} 61 (2): 690--731.
\url{https://doi.org/10.3368/jhr.0920-11190R2}.

\bibitem[\citeproctext]{ref-halpernmanners2012}
Halpern-Manners, Andrew, and John Robert Warren. 2012. {``Panel
Conditioning in Longitudinal Studies: Evidence from Labor Force Items in
the Current Population Survey.''} \emph{Demography} 49 (4): 1499--519.
\url{https://doi.org/10.1007/s13524-012-0124-x}.

\bibitem[\citeproctext]{ref-halpernmanners2014}
Halpern-Manners, Andrew, John Robert Warren, and Florencia Torche. 2014.
{``Panel Conditioning in a Longitudinal Study of Illicit Behaviors.''}
\emph{Public Opinion Quarterly} 78 (3): 565--90.
\url{https://doi.org/10.1093/poq/nfu029}.

\bibitem[\citeproctext]{ref-halpernmanners2017}
---------. 2017. {``Panel Conditioning in the General Social Survey.''}
\emph{Sociological Methods \& Research} 46 (1): 103--24.
\url{https://doi.org/10.1177/0049124114532445}.

\bibitem[\citeproctext]{ref-hirano2001}
Hirano, Keisuke, Guido W. Imbens, Geert Ridder, and Donald B. Rubin.
2001. {``Combining Panel Data Sets with Attrition and Refreshment
Samples.''} \emph{Econometrica} 69 (6): 1645--59.
\url{https://doi.org/10.1111/1468-0262.00260}.

\bibitem[\citeproctext]{ref-horowitzmanski1995}
Horowitz, Joel L., and Charles F. Manski. 1995. {``Identification and
Robustness with Contaminated and Corrupted Data.''} \emph{Econometrica}
63 (2): 281--302. \url{https://doi.org/10.2307/2951627}.

\bibitem[\citeproctext]{ref-kraemer2025}
Kraemer, Fabienne, Peter Lugtig, Bella Struminskaya, Henning Silber,
Bernd Weiß, and Michael Bosnjak. 2025. {``Monitoring Attitudes over
Time: Real Change or the Result of Repeated Interviewing?''}
\emph{Sociological Methods \& Research}.
\url{https://doi.org/10.1177/00491241251372503}.

\bibitem[\citeproctext]{ref-kraemer2024}
Kraemer, Fabienne, Henning Silber, Bella Struminskaya, Matthias Sand,
Michael Bosnjak, Joanna Koßmann, and Bernd Weiß. 2024. {``Panel
Conditioning in a Probability-Based Longitudinal Study: A Comparison of
Respondents with Different Levels of Survey Experience.''} \emph{Journal
of Survey Statistics and Methodology} 12 (1): 36--59.
\url{https://doi.org/10.1093/jssam/smad004}.

\bibitem[\citeproctext]{ref-krueger2017}
Krueger, Alan B., Alexandre Mas, and Xiaotong Niu. 2017. {``The
Evolution of Rotation Group Bias: Will the Real Unemployment Rate Please
Stand Up?''} \emph{The Review of Economics and Statistics} 99 (2):
258--64. \url{https://doi.org/10.1162/rest_a_00630}.

\bibitem[\citeproctext]{ref-lee2009}
Lee, David S. 2009. {``Training, Wages, and Sample Selection: Estimating
Sharp Bounds on Treatment Effects.''} \emph{The Review of Economic
Studies} 76 (3): 1071--1102.
\url{https://doi.org/10.1111/j.1467-937X.2009.00536.x}.

\bibitem[\citeproctext]{ref-manski1989}
Manski, Charles F. 1989. {``Anatomy of the Selection Problem.''}
\emph{Journal of Human Resources} 24 (3): 343--60.
\url{https://doi.org/10.2307/145818}.

\bibitem[\citeproctext]{ref-okubo2024}
Okubo, Shoki. 2024. {``{Identification Assumptions and Strategies for
Panel Conditioning Bias based on Potential Outcomes Model: A Natural
Experiment Approach with an Additional Random Sampling for Panel
Survey}.''} \emph{Shakai to Chōsa}, no. 33: 58--70.

\bibitem[\citeproctext]{ref-okubo2026panel}
---------. 2026. {``Panel Conditioning in Fixed-Effects Models:
Identification and Bias Propagation.''} \emph{arXiv Preprint
arXiv:2609.28871v1}. \url{https://arxiv.org/abs/2609.28871v1}.

\bibitem[\citeproctext]{ref-parkkimchoi2001}
Park, You Sung, Kee Whan Kim, and Jai Won Choi. 2001. {``One-Level
Rotation Design Balanced on Time in Monthly Sample and in Rotation
Group.''} \emph{Journal of the American Statistical Association} 96
(456): 1483--96. \url{https://doi.org/10.1198/016214501753382381}.

\bibitem[\citeproctext]{ref-strassen1965}
Strassen, Volker. 1965. {``The Existence of Probability Measures with
Given Marginals.''} \emph{The Annals of Mathematical Statistics} 36 (2):
423--39. \url{https://doi.org/10.1214/aoms/1177700153}.

\bibitem[\citeproctext]{ref-struminskaya2016}
Struminskaya, Bella. 2016. {``Respondent Conditioning in Online Panel
Surveys: Results of Two Field Experiments.''} \emph{Social Science
Computer Review} 34 (1): 95--115.
\url{https://doi.org/10.1177/0894439315574022}.

\bibitem[\citeproctext]{ref-torche2012}
Torche, Florencia, John Robert Warren, Andrew Halpern-Manners, and
Eduardo Valenzuela. 2012. {``Panel Conditioning in a Longitudinal Study
of Adolescents' Substance Use: Evidence from an Experiment.''}
\emph{Social Forces} 90 (3): 891--918.
\url{https://doi.org/10.1093/sf/sor006}.

\bibitem[\citeproctext]{ref-vanlandeghem2014}
Van Landeghem, Bert. 2014. {``A Test Based on Panel Refreshments for
Panel Conditioning in Stated Utility Measures.''} \emph{Economics
Letters} 124 (2): 236--38.
\url{https://doi.org/10.1016/j.econlet.2014.05.024}.

\bibitem[\citeproctext]{ref-vanlandeghem2019}
---------. 2019. {``Stable Traits but Unstable Measures? {Identifying}
Panel Effects in Self-Reflective Survey Questions.''} \emph{Journal of
Economic Psychology} 72: 83--95.
\url{https://doi.org/10.1016/j.joep.2019.02.006}.

\bibitem[\citeproctext]{ref-warren2012}
Warren, John Robert, and Andrew Halpern-Manners. 2012. {``Panel
Conditioning in Longitudinal Social Science Surveys.''}
\emph{Sociological Methods \& Research} 41 (4): 491--534.
\url{https://doi.org/10.1177/0049124112460374}.

\bibitem[\citeproctext]{ref-warren2024}
Warren, John Robert, Jessie Himmelstern, and Andrew Halpern-Manners.
2024. {``Panel Conditioning Biases in the {Current Population Survey's}
{Food Security Supplement}.''} \emph{Public Opinion Quarterly} 88 (1):
193--213. \url{https://doi.org/10.1093/poq/nfae001}.

\bibitem[\citeproctext]{ref-williamsmallows1970}
Williams, W. H., and C. L. Mallows. 1970. {``Systematic Biases in Panel
Surveys Due to Differential Nonresponse.''} \emph{Journal of the
American Statistical Association} 65 (331): 1338--49.
\url{https://doi.org/10.1080/01621459.1970.10481169}.

\end{CSLReferences}

\end{document}